\newcommand{\R}{\mathbb{R}}
\renewcommand{\epsilon}{\varepsilon}
\newcommand{\C}{\mathcal{C}}
\renewcommand{\dim}{n}
\newcommand{\X}{\mathcal{X}}
\newcommand{\bit}{\{0,1\}}
\newcommand{\bits}{\bit^{\dim}}
\newcommand{\BI}{\ensuremath{\mathsf{Ball}_\mathsf{I}}}
\newcommand{\BD}{\ensuremath{\mathsf{Ball}_\mathsf{D}}}
\newcommand{\Ball}{\mathsf{Ball}}
\newcommand{\VH}{\ensuremath{V_\mathsf{H}}}
\newcommand{\VI}{\ensuremath{V_\mathsf{I}}}
\newcommand{\VD}{\ensuremath{V_\mathsf{D}}}
\newcommand{\len}[1]{\mathsf{len}(#1)}
\newcommand{\run}[1]{\rho(#1)}
\newcommand{\KI}{\ensuremath{\mathsf{K}_\mathsf{I}}}
\newcommand{\KD}{\ensuremath{\mathsf{K}_\mathsf{D}}}
\newtheorem{defn}{Definition}
\newtheorem{thm}{Theorem}
\newtheorem{lemma}[thm]{Lemma}
\newtheorem{corollary}[thm]{Corollary}
\newtheorem{remark}{Remark}
\newtheorem{proposition}[thm]{Proposition}
\newcolumntype{C}[1]{>{\centering\let\newline\\\arraybackslash\hspace{0pt}}m{#1}}
\newcommand{\bfa}{{\boldsymbol a}}
\newcommand{\bfb}{{\boldsymbol b}}
\newcommand{\bfc}{{\boldsymbol c}}
\newcommand{\bfs}{{\boldsymbol s}}
\newcommand{\bfx}{{\boldsymbol x}}
\newcommand{\bfy}{{\boldsymbol y}}
\begin{document}

\title{Covering Codes using Insertions or Deletions}

\author{Andreas~Lenz,
        Cyrus~Rashtchian,
        Paul~H.~Siegel,
        and~Eitan~Yaakobi
\thanks{A. Lenz is with the Institute for Communications Engineering, Technische Universit\"at M\"unchen, Munich 80333, Germany (e-mail: andreas.lenz@mytum.de).}%
\thanks{C. Rashtchian is with the Computer Science and Engineering Department and the Qualcomm Institute, University of California, San Diego, La Jolla, CA 92093-0407 USA (e-mail:	crashtchian@eng.ucsd.edu).}
\thanks{P.  H.  Siegel  is  with  the  Electrical  and  Computer  Engineering  Department and the Center for Memory and Recording Research, University of California, San Diego, La Jolla, CA 92093-0407 USA (e-mail:	psiegel@ucsd.edu).}%
\thanks{E. Yaakobi is with the Computer Science Department, Technion -- Israel Institute of	Technology,	Haifa 3200003, Israel	(e-mail: yaakobi@cs.technion.ac.il).}%
\thanks{A. Lenz acknowledges support from the German-American Fulbright Commission for funding the visit to UCSD. E. Yaakobi acknowledges support from the Center for Memory and Recording Research at UCSD. This work is also funded by the European Research Council under the EU’s Horizon 2020 research and innovation programme (grant No. 801434), by NSF grant CCF-BSF-1619053 and by the United States-Israel BSF grant 2015816.
}%
}
\maketitle

\begin{abstract}
A covering code is a set of codewords with the property that the union of balls, suitably defined, around these codewords covers an entire space. Generally, the goal is to find the covering code with the minimum size codebook. 
While most prior work on covering codes has focused on the Hamming metric, we consider the problem of designing covering codes defined in terms of either insertions or deletions. First, we provide new sphere-covering lower bounds on the minimum possible size of such codes. Then, we provide new existential upper bounds
on the size of optimal covering codes for a single insertion or a single deletion that are tight up to a constant factor.
Finally, we derive improved upper bounds for covering codes using $R\geq 2$ insertions or deletions. We prove that codes exist with density that is only a factor $O(R \log R)$ larger than the lower bounds for all fixed~$R$. In particular, our upper bounds have an optimal dependence on the word length, and we achieve asymptotic density matching the best known bounds for Hamming distance covering codes. 
\end{abstract}
\begin{IEEEkeywords}
covering codes, insertions and deletions.
\end{IEEEkeywords}

\IEEEpeerreviewmaketitle

\section{Introduction}
 
	Covering codes are a core object of study in coding theory and discrete mathematics. They have found applications in diverse areas such as data compression~\cite{cohen1997covering},  football pools~\cite{hamalainen1995football}, circuit complexity~\cite{smolensky1993representations},  
	lattice problems~\cite{micciancio2004almost}, and approximate nearest neighbor search~\cite{coverlsh1}. Previous work has mostly studied covering codes with respect to substitutions (i.e., the Hamming distance). Recently, due to the large amount of textual and biological data, there has been a resurgence of interest in the Levenshtein distance and in channels with insertion and deletion errors (e.g.,~\cite{braverman2017coding, chakraborty2018approximating, cheraghchi2019overview, mitzenmacher2009survey, levenshtein2001efficient, liron2015characterization, yazdi_portable_2017, organick2018random, chandak_improved_2019}). Despite this substantial progress, the Levenshtein distance remains poorly understood compared to other metrics on discrete spaces, and many fundamental questions remain open.
	
	In this paper, we study covering codes for either insertions or deletions. Loosely speaking, we aim to cover a space of words by the union of balls around a minimum number of codewords. Let $\Sigma_q^n$ denote the set of words of length $n$ over a $q$-ary alphabet. For the case of insertions, a codeword $\bfc \in \Sigma_q^n$ {\em covers} a word $\bfy \in \Sigma^{n+R}$ at radius $R$ if $\bfy$ can be obtained from~$\bfc$ by inserting exactly $R$ symbols. Similarly, for the case of deletions, a codeword $\bfc \in \Sigma_q^n$ {\em covers} a word $\bfy \in  \Sigma^{n-R}$ at radius $R$ if $\bfy$ can be obtained from~$\bfc$ by deleting exactly $R$ symbols. This means that the covering problem for insertions deals with finding a small set of codewords of length $n$ such that each word of length $n+R$ is a supersequence of a codeword. Analogously, for the case of deletions, each word of length $n-R$ must be a subsequence of some codeword. In both cases, balls are naturally defined by the set of words obtained by inserting $R$ symbols into or deleting $R$ symbols from some word of length $n$. Notice, however, that the codewords and the covered words reside in different spaces because they have different lengths. Hence, the covering problem for insertions and deletions is inherently asymmetric.

	Although there is a rich literature on covering codes for the Hamming distance~\cite{cohen1997covering}, as well as recent improvements for insertion/deletion error-correcting codes (e.g.,~\cite{bukh2016improved, guruswami2019optimally, guruswami2018polynomial, guruswami2017deletion, haeupler2019near, sima_optimal_2019}), much less is known about covering codes using insertions or deletions. Two key challenges are the (ir)regularity of the balls and the asymmetry of the covering problem. Insertion balls are regular, in the sense that for any $\bfx\in \Sigma_q^n$ and $R \geq 1$, there are exactly $\sum_{i=0}^{\dim +R} \binom{\dim+R}{i}(q-1)^i$ words of length $\dim + R$ obtainable by inserting $R$ symbols into $\bfx$ (cf.~\cite{Levenshtein1974}). In contrast, deletion balls are irregular, and their sizes depend on many properties of their center, such as the number of runs. In fact, a tractable exact formula remains unknown for the size of the deletion balls with radius three or greater. This irregularity and lack of an explicit formula for the ball size  mean that, compared to\hspace{-.04cm} the\hspace{-.04cm} Hamming\hspace{-.04cm} distance,\hspace{-.04cm} it\hspace{-.04cm} is\hspace{-.04cm} inherently\hspace{-.04cm} more\hspace{-.04cm} challenging to derive bounds on the minimum covering code size, even asymptotically.
	
	In some cases, we can infer results on covering codes from the theory of error-correcting codes. This is due to the existence of \emph{perfect} error-correcting codes, for which the balls of radius $R$ around all codewords are not only distinct but also cover each word once. For example, the Varshamov-Tenengolts (VT) code is a perfect binary single-deletion-correcting code~\cite{Levenshtein1992}. It is known that the VT code is the largest single-deletion-correcting code for $n\leq 14$~\cite{Fazeli2015}, and this is conjectured to be true for $n > 14$ (see Sloane~\cite[Conj. 2.6]{Sloane2002}). This conjecture however remains open. Nevertheless, since the VT code is indeed a perfect single-deletion-correcting code, it is also a single-deletion-covering code. 
	
	\newcommand{\ds}{\displaystyle}
	\begin{table*}[t]
		\centering
		\caption{Upper and lower bounds for covering codes $\C \subseteq \Sigma_q^n$ using substitutions, insertions, and deletions. We let $c$ denote a universal constant. We denote the size of a radius-$R$ Hamming ball by $\VH^q(\dim,R) = \sum_{i=0}^R \binom{\dim}{i}(q-1)^i$, and the size of a radius-$R$ insertion ball by $\VI^q(n,R)= \sum_{i=0}^R \binom{\dim+R}{i}(q-1)^i$.  Entries marked with ``${\color{darkgray} (\infty)}$'' are asymptotic results for fixed $R$ and large $n$, where a factor of $1\pm o(1)$ has been omitted for readability.\vspace{-1em}}
		\setlength{\tabcolsep}{18pt}
		{\renewcommand{\arraystretch}{3.2}
			\begin{tabular}{lc@{\hskip 5pt}cc@{\hskip 5pt}cl}
				\specialrule{.7pt}{11pt}{-8pt}
				Covering Code Type & Existence Size &  & Lower Bound & & Reference \\[-5pt]
				\specialrule{.6pt}{1pt}{-1pt}
				1-substitution & $\ds \frac{q^\dim}{(q-1)\dim+1}$ & $^{\color{darkgray} (\infty)}$ & $\ds \frac{q^\dim}{(q-1)\dim+1}$&&  \cite{kaba88}\\
				$R$-substitution & $\ds  \frac{cR\log R \cdot q^n}{\VH^q(\dim, R)}$& $^{\color{darkgray} (\infty)}$ & $\ds \frac{q^n}{\VH^q(\dim, R)}$ && \cite{krivelevich2003covering}\\
				%
				\specialrule{.5pt}{5pt}{-1pt}
				1-insertion & $\ds \frac{7 \cdot q^{n+1}}{(n+1)(q-1)+1}$ && $\ds \frac{q^{n+1}}{(n+1)(q-1)+1}$ && Theorem \ref{thm:one:insertion}, Theorem \ref{thm:lower:bound:insertion} \\
				$R$-insertion & $\ds \frac{cR\log R \cdot q^{n+R}}{\VI^q(n,R)}$ & $^{\color{darkgray} (\infty)}$ & $\ds \frac{q^{n+R}}{\VI^q(n,R)}$ && Theorem \ref{thm:k:insertion}, Theorem \ref{thm:lower:bound:insertion} \\
				%
				\specialrule{.5pt}{5pt}{-1pt}
				1-deletion (binary) & $\ds \frac{2^n}{\dim+1}$ && $\ds \frac{2^n}{\dim+1}$& $^{\color{darkgray} (\infty)}$& \cite{afrati-anchor, VT65} \\
				1-deletion & $\ds \frac{q^{\dim}}{(\dim+1)\lfloor q/2\rfloor}$ && $\ds \frac{q^{\dim}(n-2)}{(q-1)n(n+1)}$ && Theorem \ref{thm:single:q:deletion}, Theorem \ref{thm:lower:bound:deletions} \\
				$R$-deletion & $\ds \frac{cR\log R \cdot q^{n}R!}{n^R(q-1)^R}$ & $^{\color{darkgray} (\infty)}$ & $\ds \frac{q^nR!}{n^R(q-1)^R} $& $^{\color{darkgray} (\infty)}$ & Theorem \ref{thm:k:deletion}, Theorem \ref{thm:lower:bound:deletions} \\
				\bottomrule
		\end{tabular}}
		\label{table:code-results}
	\end{table*}
	
	While it has been shown that an $R$-deletion-correcting code is equivalent to an $R$-insertion-and-deletion-correcting code \cite{Lev65}, this property does not hold for the case of covering codes. This means that the VT codes are not single-insertion-covering codes and thus also not perfect codes for correcting a single insertion. In fact, it has been shown that the only perfect single-insertion-correcting codes are binary and have length two~\cite{Levenshtein1992}. Therefore, the best possible size of a single-insertion-covering code is unknown, and constructing optimal covering codes in this case is a highly non-trivial problem, which we address in this paper.

Afrati et al. have studied covering codes for insertions and deletions, motivated by designing MapReduce algorithms for similarity joins under the Levenshtein distance~\cite{afrati-anchor,afrati-fuzzy}. They show the existence of single-insertion-covering codes with size $O(\frac{q^\dim \log \dim}{\dim})$, while they prove a lower bound stating that such codes must have at least $\frac{q^\dim}{(q-1)\dim+1}$ codewords.  
As one of our contributions, we certify that the lower bound is nearly tight by showing the existence of single-insertion-covering codes with $\Theta(\frac{q^\dim}{\dim})$ codewords. Afrati et al. also provide explicit constructions, albeit using more codewords~\cite{afrati-anchor}. They construct single-insertion-covering codes of size $O(q^{n-2})$ and double-insertion-covering codes of size $O(q^{n-3})$. Afrati et al. also provide an explicit construction of a single-deletion-covering code with $q^n/n$ codewords.
Finally, we note that tensorization arguments may be used to build radius-$R$ covering codes from radius-one codes~\cite{cohen1997covering}, but this approach introduces a factor of $R^{O(R)}$ with respect to the sphere covering lower bound, leading to larger covering codes than desired. 

We mention that covering codes for insertions or deletions are similar in spirit to asymmetric covering codes for substitutions~\cite{cooper2002asymmetric}. There, the goal is to find a code $\C \subseteq \{0,1\}^n$ such that the Hamming distance of any word $\bfy \in \{0,1\}^n$ is at most $R$ from some $\bfc \in \C$, while satisfying $\bfy \leq \bfc$ in the binary partial order. Asymmetric covering codes are related to the Erd\'{o}s-Hanani conjecture on hypergraph coverings, which has been resolved affirmatively by R\"{o}dl~\cite{hanani1963limit, rodl1985packing}. For more information on covering codes, we refer the reader to the  book by Cohen, Honkala, Litsyn, and Lobstein~\cite{cohen1997covering}.

\subsection{Our Results}

	We provide new upper and lower bounds on the minimum size of insertion-covering and deletion-covering codes. We primarily consider the size of such codes for fixed alphabet size $q$ and covering radius $R$.
	Table~\ref{table:code-results} summarizes our results.
	The bounds are stated separately for $R=1$ and general $R\geq1$ because we obtain tighter bounds in the former case. The first two rows of Table~\ref{table:code-results} also recap the best known bounds for substitution-covering codes.\footnote{We often refer to Hamming distance covering codes as $R$-substitution-covering codes for consistency.}
	
	For $R$-insertion-covering codes, the goal is to cover words of length $\dim + R$ using words of length~$n$. In the case of $R=1$, we provide nearly matching upper and lower bounds that differ only by a factor of seven. This improves upon the upper bound result of Afrati et al.~\cite{afrati-anchor} by a $\Theta(\log \dim)$ factor. For $R\geq 2$ insertions, we prove that $R$-insertion-covering codes exist with size that is off by a factor of $O(R \log R)$ from the lower bound (the dependence on the dimension $n$ and alphabet size $q$ are optimal). We remark that the gap between upper and lower bounds matches the state-of-the-art for $R$-substitution-covering codes~\cite{krivelevich2003covering}, and it seems beyond our current techniques to obtain a tighter bound.
	
	For the case of $R$-deletion-covering codes, the goal is to cover words of length $n-R$ by deleting $R$ symbols from codewords of length $n$. First, we provide a new lower bound on the minimum size of $R$-deletion-covering codes. Then, for words over a $q$-ary alphabet with $q >2$, we provide a new explicit construction of single-deletion-covering codes, where the number of codewords is within a factor of two from optimal. 
	Finally, for $R \geq 2$ deletions, we prove that $R$-deletion-covering codes exist with size that is tight up to a factor of $O(R \log R)$ compared to the lower bound.

	We note that our upper bounds for $R$ insertions (resp. $R$ deletions) will depend upon the size of covering codes for a single insertion (resp. single deletion). In particular, establishing a better upper bound for a single insertion/deletion would immediately lead to smaller codes for radii $R>1$.
	
	Perhaps surprisingly, our existential bounds for insertion and deletion covering codes with radii $R \geq 2$ have the same quantitative overhead as the best known bounds for substitution-covering codes. One reason for this apparent similarity is that our proof techniques are inspired by arguments for substitutions. We observe that the previous arguments are not specific to Hamming distance, but rather, they hold for any space on words that (i) satisfies mild conditions on the distribution of the respective covering ball sizes and (ii) allows a semi-direct sum operation, where covering codes over shorter words may be combined to assemble codes over longer words. As both insertion balls and deletion balls satisfy these two criteria, the same overall proof strategy goes through, even though there are several technical differences. More generally, we believe that this proof technique may be useful for other discrete geometric spaces as well.
	
	While in this work we analyze covering codes using either only insertions or deletions, it would be interesting to study combinations of insertions and deletions as well. We largely leave this as future work, briefly noting certain challenges that arise. Defining the covering problem already leads to flexibility because the length of the covered words may now vary, instead of being fixed. For example, if we wished to use at most $R_I$ insertions and at most $R_D$ deletions, then we could ask: what is the minimum number of length-$n$ codewords that suffice to cover all words with lengths between $n-R_D$ and $n + R_I$? Our methods may extend more easily to formulations where the number of insertions and deletions are fixed individually. For instance, we could consider using exactly $R_I$ insertions and $R_D$ deletions to cover words of length $n+R_I-R_D$. This is in fact an interesting extension of our work, since even when $R_I = R_D = R_S/2$, it is already difficult to understand how this covering problem compares to $R_S$-substitution-covering codes over words of length exactly $n$.
	In general, obtaining nearly-tight upper and lower bounds for these various covering problems would require accurate estimates of the number of words obtained from combinations of insertions and deletions, which is an active area of research that is studied in different contexts as well~\cite{bukh2019periodic,ganguly2016sequence, kiwi2005expected, lueker2009improved, navarro2001guided, shmid2019bounds}.

	\subsection{Overview of Our Techniques}
	
	A lower bound on the necessary size of $R$-insertion-covering codes can be derived from known bounds on the number of words obtained by $R$ insertions. For example, in the case of a single insertion, each codeword of length $n$ covers $(n+1)(q-1)+1$ distinct words. Thus, one would hope for a code of size $O(q^n/n)$. A natural approach would be to adapt or modify known codes, such as VT codes or Hamming codes.  Afrati et al.~\cite{afrati-anchor} take this approach. After converting $q$-ary words to binary, they use $O(\log \dim)$ translates of a Hamming code as the basis for their single-insertion-covering code with $O(q^\dim \log \dim / \dim)$ codewords.\footnote{In fact, it is also possible to achieve the same asymptotic result by analyzing a greedy algorithm that successively adds codewords based on how many new words they cover. This algorithm yields a covering code and the Johnson-Stein-Lov\'{a}sz theorem \cite{cohen1997covering} gives an upper bound of $O(q^n\log(n)/n)$ for the size of single-insertion-covering codes.} To remove the $\log n$ factor, we use a combination of a random construction with a careful inductive argument, inspired by the proof of Cooper, Ellis, and Kahng on asymmetric covering codes~\cite{cooper2002asymmetric}. We further refine their technique to obtain explicit values on the resulting code sizes. Our upper bound for $R \geq 2$ insertions uses a generalization of the previous argument. In particular, we will roughly follow the high-level strategy used by Krivelevich, Sudakov, and Vu to obtain the current best bounds on  $R$-substitution-covering codes~\cite{krivelevich2003covering}.   The main idea is to first cover a large fraction of words by randomly sampling a subset of possible codewords (where codewords are included with probabilities depending on certain deletion ball sizes). Then, we use  a direct-sum-type operation to cover the remaining words (where this operation recursively utilizes covering codes for smaller word lengths or covering radii).

	Turning to deletions, we derive a precise lower bound on the minimum size of deletion-covering codes. Obtaining such a bound is somewhat involved because the sizes of deletion balls are different even for words of the same length. Hence, standard sphere-covering arguments do not apply. We use a technique, due to Applegate, Reins, and Sloane~\cite{Applegate03}, which enables a covering lower bound, even though the sizes of the balls are non-uniform. This technique uses an integer programming approach to analyze a weighted covering. To apply this, we construct a non-uniform weight function, depending on the sizes of deletion balls. This approach is related to bounding the size of deletion-correcting codes, which requires a generalized sphere-packing bound~\cite{Fazeli2015, Kulkarni13}. For our upper bounds, we have already noted that VT codes provide asymptotically optimal single-deletion-covering codes for binary words. Unfortunately, known non-binary single-deletion-correcting codes \cite{T84} are not perfect, and thus, it is non-trivial to obtain good covering codes for this case. We provide a new generalization of VT codes to non-binary alphabets, and show that this leads to an explicit construction of nearly optimal covering codes for a single deletion. For $R \geq 2$ deletions, we use an analogous argument as for insertion-covering codes, combining random sampling with a recursive construction. 
	
	The rest of the paper is organized as follows. Section~\ref{sec:def} presents the notations and the definitions of covering codes for insertions and deletions. Section~\ref{sec:bounds} contains lower bounds on the cardinality of insertion- and deletion-covering codes. Section~\ref{sec:single_del_nb} is dedicated to the case of single-deletion-covering codes, and Section \ref{sec:single} presents existence bounds for single-insertion-covering codes. In Section~\ref{sec:multiple}, we extend the results to multiple insertions and multiple deletions. Lastly, Section~\ref{sec:conc} concludes the paper and discusses open problems.

\section{Notations, Definitions, and Preliminaries}\label{sec:def}
\label{sec:notation}

For an integer $q\geq 2$, let $\Sigma_q$ denote the $q$-ary alphabet $\{0,1,\ldots,q-1\}$  and $\Sigma_q^* = \bigcup_{\ell \geq 0} \Sigma_q^\ell $. We use $\len{\bfx}$ to denote the length of $\bfx$. 
For $\bfx = (x_1,\ldots, x_\dim) \in \Sigma_q^n$, we let $\run{\bfx}$ denote the number of runs in $\bfx$, that is,
\[\run{\bfx} := 1 + |\{1 \leq i < \dim : x_i \neq x_{i+1}\}|.\]
For $\bfx,\bfy \in \Sigma_q^*$, the notation $\bfx\bfy$ denotes the concatenation of $\bfx$ and $\bfy$, where $\len{\bfx\bfy} = \len{\bfx}+\len{\bfy}$.

%
For $\bfx\in\Sigma_q^n$, we abbreviate the {\bf radius-$t$ insertion ball} obtained after exactly $t$ insertions by $\BI^q(\bfx,t)$ and its size is denoted by $\VI^q(\bfx,t)$. Similarly, the {\bf radius-$t$ deletion ball} obtained after exactly $t$ deletions is denoted by $\BD^q(\bfx,t)$ and its size is $\VD^q(\bfx,t)$. It is well known, see e.g.~\cite{Levenshtein1974}, that while the size of a deletion ball depends heavily on its center $\bfx$, insertion balls are regular. Thus, we denote by $\VI^q(n,t)$ the insertion ball size of length-$n$ words over $\Sigma_q^n$. 
	
	We will consider two sub-problems, namely covering words with only insertions or only deletions. We begin with insertions, where codewords have length $\dim$ and they cover words of length $\dim +R$ after $R$ insertions. 	We seek  codes in $\Sigma_q^n$ of minimum cardinality such that the union of all radius-$R$ insertion balls around codewords contains the whole hypercube $\Sigma_q^{\dim+R}$. More formally, we have the following.
\begin{defn}
A code $\C \subseteq \Sigma_q^n$ is an {\bf $R$-insertion-covering code}, if for every  $\bfy \in \Sigma_q^{\dim+R}$, there exists a codeword $\bfc \in \C$ such that $\bfy \in \BI^q(\bfc, R)$. That is, $\bigcup_{\bfc\in\C}\BI^q(\bfc, R) =  \Sigma_q^{\dim+R}$.
\end{defn}
	
An $R$-deletion-covering code is defined similarly, but codewords cover words of length $\dim-R$ after $R$ deletions.
\begin{defn}
A code $\C \subseteq \Sigma_q^n$ is an {\bf $R$-deletion-covering code}, if for every $\bfy \in \Sigma_q^{\dim-R}$, there exists a codeword $\bfc \in \C$ such that $\bfy \in \BD^q(\bfc, R)$.
That is, $\bigcup_{\bfc\in\C}\BD^q(\bfc, R) = \Sigma_q^{\dim-R}$.
\end{defn}

The {\bf insertion (resp. deletion)  radius} of a code $\C$ is defined to be the smallest $R$ such that $\C$ is an $R$-insertion-covering (resp. $R$-deletion-covering) code. We also denote by $\KI^q(n,R)$  (resp. $\KD^q(n,R)$)   the smallest cardinality of an $R$-insertion-covering (resp. $R$-deletion-covering) code, of length $n$ over~$\Sigma_q$.  When discussing the binary case, i.e., $q=2$, we will typically remove $q$ from the above notations. 

\section{Lower Bounds}\label{sec:bounds}
	In this section, we establish lower bounds on the size of insertion- and deletion-covering codes based on a sphere covering argument. As in the case of substitution-covering codes, the argument relies on the union bound and an upper bound on the number of words a codeword can cover.   For the case of insertions, the insertion-ball size is known and independent of the ball center.  The case of deletions is more challenging due to the dependence  of the deletion-ball size on the ball center.

We start with the easier case of insertions. 
	\begin{thm} \label{thm:lower:bound:insertion}
	For all $n$ and $R$, it holds that 
	\begin{equation}\label{eq:insertion:lower:exact}
	\KI^q(n,R) \geq \frac{q^{\dim+R}}{\VI^q(n,R)}   =  \frac{q^{\dim+R}}{\sum_{i=0}^{R} \binom{n+R}{i} (q-1)^i}.
	\end{equation}
	Furthermore, for fixed $R$ and large $n$,
	\begin{equation}\label{eq:insertion:lower:approx}
	\KI^q(n,R) \geq \frac{R!q^{\dim+R}}{n^R(q-1)^R}(1-o(1)).
	\end{equation} 
	\end{thm}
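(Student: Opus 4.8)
The plan is to use a standard sphere-covering (volume) argument. Suppose $\C \subseteq \Sigma_q^n$ is an $R$-insertion-covering code. By definition, every word $\bfy \in \Sigma_q^{n+R}$ lies in $\BI^q(\bfc, R)$ for at least one codeword $\bfc \in \C$, so
\[
\Sigma_q^{n+R} = \bigcup_{\bfc \in \C} \BI^q(\bfc, R).
\]
Taking cardinalities and applying the union bound gives $q^{n+R} \leq \sum_{\bfc \in \C} \VI^q(\bfc, R)$. The key point is that insertion balls are \emph{regular}: for any $\bfx \in \Sigma_q^n$, the size $\VI^q(\bfx, R)$ depends only on $n$ and $R$, not on $\bfx$, and equals $\VI^q(n,R) = \sum_{i=0}^{R} \binom{n+R}{i}(q-1)^i$ (this is the Levenshtein count quoted in the excerpt). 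Hence $q^{n+R} \leq |\C| \cdot \VI^q(n,R)$, which rearranges to \eqref{eq:insertion:lower:exact}.

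For the asymptotic statement \eqref{eq:insertion:lower:approx}, I would simply estimate the denominator $\VI^q(n,R) = \sum_{i=0}^R \binom{n+R}{i}(q-1)^i$ for fixed $R$ and $n \to \infty$. The top term $i = R$ dominates: $\binom{n+R}{R}(q-1)^R = \frac{(n+R)(n+R-1)\cdots(n+1)}{R!}(q-1)^R = \frac{n^R (q-1)^R}{R!}(1+o(1))$, while each lower term $\binom{n+R}{i}(q-1)^i$ with $i < R$ is $\Theta(n^i)$, hence $o(n^R)$. Therefore $\VI^q(n,R) = \frac{n^R(q-1)^R}{R!}(1+o(1))$, and substituting into \eqref{eq:insertion:lower:exact} yields
\[
\KI^q(n,R) \geq \frac{q^{n+R}}{\VI^q(n,R)} = \frac{R!\, q^{n+R}}{n^R(q-1)^R}(1 - o(1)),
\]
as claimed.

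Honestly, there is no real obstacle here — the only thing to be careful about is invoking the exact regular formula for the insertion ball size (so that the union bound does not lose anything beyond the inherent overlap slack) and then doing the elementary asymptotic expansion of the binomial sum cleanly. If one wanted to avoid citing the exact formula, an upper bound $\VI^q(\bfx,R) \leq \sum_{i=0}^R \binom{n+R}{i}(q-1)^i$ (counting all length-$(n+R)$ words at Levenshtein-insertion distance $\leq R$, which is a superset) already suffices for the inequality, but using the exact value is cleanest and is what the theorem statement reflects.
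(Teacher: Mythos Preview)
Your proposal is correct and matches the paper's proof essentially line for line: the paper also applies the union bound to $\bigcup_{\bfc \in \C} \BI^q(\bfc,R) = \Sigma_q^{n+R}$, invokes the Levenshtein formula for the regular insertion-ball size to obtain \eqref{eq:insertion:lower:exact}, and then derives \eqref{eq:insertion:lower:approx} by the same dominant-term estimate of $\VI^q(n,R)$ (deferred in the paper to an appendix proposition, but identical in substance to your inline computation).
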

	\begin{proof}
	Let $\C$ be any $R$-insertion-covering code. Hence, we have that $\bigcup_{\bfc \in \C} \BI^q(\bfc,R) = \Sigma_q^{n+R}$. Computing the cardinalities of both sets, we obtain
		\begin{align*}
		q^{n+R} & = |\Sigma_q^{n+R}| = \Big|\bigcup_{\bfc \in \C} \BI^q(\bfc,R)\Big| & \\
		& \overset{(a)}{\leq} \sum_{\bfc \in \C} \VI^q(\bfc,R) = |\C| \sum_{i=0}^{R} \binom{n+R}{i} (q-1)^i, 
		\end{align*}
		where we used the union bound in inequality $(a)$ and the fact that for any $\bfx\in\Sigma_q^n$ the size of the insertion balls only depends on the length of $\bfx$ and is given by $\VI^q(n,R) = \sum_{i=0}^{R} \binom{n+R}{i}(q-1)^i$ \cite{Levenshtein1974}. Reading the  above inequality from right to left, we obtain the bound (\ref{eq:insertion:lower:exact}). 
		
		The approximation (\ref{eq:insertion:lower:approx}) for large $\dim$ is obtained by the following standard inequality, 
		$$\frac{q^{\dim+R}}{\sum_{i=0}^{R} \binom{\dim+R}{i}(q-1)^i}
		\geq 
		\frac{R!q^{\dim+R}}{n^R(q-1)^R}(1-o(1)),$$
which is proved in Proposition~\ref{prop:insertion:approx:lb} in the appendix.
	\end{proof}

	For the case of deletions, deriving a sphere-covering lower bound is  more involved due to the fact that the size of the deletion ball  $\BD(\bfx,R)$ can be different for words of the same length. To overcome this difficulty, we use a  technique due to Applegate et al. \cite{Applegate03} that enables the computation of a bound even though the ball sizes are irregular. We restate the lemma from \cite{Applegate03} in a form suited to our particular context.
	
	\begin{lemma}[cf. \cite{Applegate03}] \label{lemma:generalized:sphere:covering:bound}
	Let $U\subseteq \Sigma_q^*$, $T \subseteq \Sigma_q^*$ be arbitrary sets and $\Ball: U \mapsto \{t:t\subseteq T\}$ be an arbitrary mapping. Further let $w: T \to \mathbb{R}$ be a weight function satisfying
		\begin{equation} \label{eq:weighting:condition}
		\sum_{\bfy \in\Ball(\bfx)} w(\bfy) \leq 1
		\end{equation}
		for all $\bfx \in U$. Then any covering code $\mathcal{C} \subseteq U$, which covers $T$, i.e., $T = \bigcup_{\bfc \in \C}\Ball(\bfc)$, satisfies
		$$ |\mathcal{C}| \geq \sum_{\bfy \in T}w(\bfy). $$
	\end{lemma}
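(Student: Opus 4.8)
The plan is to use a double-counting / averaging argument in the style of classical sphere-covering bounds, but weighted by $w$. I would start from the covering hypothesis $T = \bigcup_{\bfc \in \C} \Ball(\bfc)$ and sum the weight function over all of $T$. Since the balls $\Ball(\bfc)$ for $\bfc \in \C$ together cover $T$, every $\bfy \in T$ lies in at least one such ball, so
\[
\sum_{\bfy \in T} w(\bfy) \leq \sum_{\bfc \in \C} \sum_{\bfy \in \Ball(\bfc)} w(\bfy).
\]
This is the only place the covering property is used; it replaces the usual counting of covered points with a weighted count, and it is valid regardless of overlaps among the balls (over-counting only helps the inequality direction we want).

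Next I would apply the hypothesis (\ref{eq:weighting:condition}) on $w$: for each fixed $\bfc \in \C \subseteq U$, the inner sum $\sum_{\bfy \in \Ball(\bfc)} w(\bfy)$ is at most $1$. Substituting this bound term-by-term into the right-hand side above gives
\[
\sum_{\bfy \in T} w(\bfy) \;\leq\; \sum_{\bfc \in \C} 1 \;=\; |\C|,
\]
which is exactly the claimed inequality. I should take a moment to note that the sums are over subsets of $\Sigma_q^*$ which may in principle be infinite; in our applications $U$ and $T$ are finite (words of fixed lengths $n$ and $n-R$), and for the statement to be meaningful we implicitly assume the relevant sums converge — so I would either restrict to finite $U, T$ or assume absolute convergence and nonnegativity of $w$ on the support, which is the case in all our uses (the weights we construct are positive).

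The argument is genuinely short and has no real obstacle; the only subtlety worth flagging is that the inequality $\sum_{\bfy \in T} w(\bfy) \leq \sum_{\bfc\in\C}\sum_{\bfy\in\Ball(\bfc)}w(\bfy)$ requires $w$ to be nonnegative (or at least that we are comparing a cover, not a partition) — if $w$ could take negative values, a point covered multiple times could decrease the right-hand side. In our instantiation $w$ is a positive weight function derived from deletion-ball sizes, so this is not an issue, but I would state the lemma with the understanding that $w$ is nonnegative, matching the usage in \cite{Applegate03}. The real work in later sections will be the \emph{construction} of a weight function $w$ satisfying (\ref{eq:weighting:condition}) that also makes $\sum_{\bfy \in T} w(\bfy)$ as large as possible — but that is precisely the part this lemma isolates away from the present statement.
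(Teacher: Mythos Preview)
Your double-counting argument is correct and is exactly the standard proof of this weighted sphere-covering bound; the paper itself does not give a proof but simply cites \cite{Applegate03}, so there is no alternative approach to compare against. Your remark about nonnegativity of $w$ is well taken: the step $\sum_{\bfy\in T} w(\bfy) \le \sum_{\bfc\in\C}\sum_{\bfy\in\Ball(\bfc)} w(\bfy)$ genuinely requires $w\ge 0$ (one can build easy counterexamples otherwise), and the lemma as stated with $w:T\to\mathbb{R}$ is slightly imprecise---but as you note, all applications in the paper use strictly positive weights, so this is a cosmetic issue rather than a gap in your argument.
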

\noindent
Even though we will derive the sphere-covering lower bound for deletions only, the above statement holds in general for any $\mathsf{Ball}(\bfx)$, as has been proven in \cite{Applegate03}. We note that for error-correcting codes, in which the space is replaced by $\Sigma_q^*$, the analogous bound is called a generalized sphere-packing bound and has been studied in~\cite{Fazeli2015, Kulkarni13}. 

	The maximum  possible sum weight $w(\bfy)$ that fulfills \eqref{eq:weighting:condition} offers the best lower bound on the size of an $R$-deletion-covering code.  Finding it requires solving a linear programming problem, as specified in Lemma~\ref{lemma:generalized:sphere:covering:bound}. Here, we choose the weight function  to approximate the inverse of the deletion-ball size,  $w(\bfy) \approx \VD^q(\bfx,R)^{-1}$, where we recall that  $\VD^q(\bfx,R) = |\BD^q(\bfx,R)|$. Under the assumption that for $\bfy \in \BD^q(\bfx,R)$ the deletion  balls have approximately the same size,  $\VD^q(\bfx,R)\approx\VD^q(\bfy,R)$, condition \eqref{eq:weighting:condition} is fulfilled with sum-weight  close to $1$. We will provide a rigorous derivation based upon this intuition in the following theorem.

	\begin{thm} \label{thm:lower:bound:deletions}
	For all $n$ and $0< R < n$, it holds that 
		$$ \KD^q(n,R) \geq q\sum_{r=1}^{\dim-R} \frac{(q-1)^{r-1}\binom{n-R-1}{r-1}}{\binom{r+3R-1}{R}}.$$
		In particular, for $R=1$ we get that 
		$$ \KD^q(n,1) \geq \frac{q^{\dim}(n-2)}{(q-1)n(n+1)}.$$
		Furthermore, for fixed $R$ and large $n$, we have
		$$ \KD^q(n,R) \geq \frac{R!q^{n}}{n^R(q-1)^{R}}  \left( 1 -o(1)  \right). $$
	\end{thm}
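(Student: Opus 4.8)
The plan is to apply the weighted sphere-covering bound of Lemma~\ref{lemma:generalized:sphere:covering:bound} with $U=\Sigma_q^{n}$, $T=\Sigma_q^{n-R}$, $\Ball(\bfx)=\BD^q(\bfx,R)$, and the weight $w(\bfy)=1/\binom{\run{\bfy}+3R-1}{R}$, which depends on $\bfy$ only through its number of runs. A code $\C\subseteq\Sigma_q^{n}$ is an $R$-deletion-covering code precisely when $\bigcup_{\bfc\in\C}\BD^q(\bfc,R)=T$, so once the weighting condition~\eqref{eq:weighting:condition} is verified, the lemma gives $\KD^q(n,R)\ge\sum_{\bfy\in\Sigma_q^{n-R}}w(\bfy)$; all three assertions of the theorem then follow by evaluating or estimating this single sum.

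To check~\eqref{eq:weighting:condition}, fix $\bfx\in\Sigma_q^{n}$ and combine two elementary facts. First, a single deletion lowers the number of runs by at most $2$: deleting a symbol inside a run of length $\ge2$ leaves $\run{\cdot}$ unchanged, while deleting a length-$1$ run removes that run and merges its two neighbours only when they carry the same symbol; by induction, $\run{\bfx}\le\run{\bfy}+2R$ for every $\bfy\in\BD^q(\bfx,R)$. Second, writing the run decomposition $\bfx=u_1^{a_1}u_2^{a_2}\cdots u_r^{a_r}$ with $r=\run{\bfx}$, every $\bfy\in\BD^q(\bfx,R)$, obtained from $\bfx$ by retaining some set of $n-R$ positions, equals $u_1^{c_1}u_2^{c_2}\cdots u_r^{c_r}$ where $c_i$ is the number of retained positions in run $i$; hence $\VD^q(\bfx,R)$ is at most the number of tuples $(c_1,\dots,c_r)$ with $0\le c_i\le a_i$ and $\sum_i c_i=n-R$, which after the substitution $b_i=a_i-c_i$ is at most $\binom{R+r-1}{R}$. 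Together with the monotonicity of $m\mapsto\binom{m}{R}$ these give $\VD^q(\bfx,R)\le\binom{\run{\bfx}+R-1}{R}\le\binom{\run{\bfy}+3R-1}{R}$ for every $\bfy\in\BD^q(\bfx,R)$, so $\sum_{\bfy\in\BD^q(\bfx,R)}w(\bfy)\le|\BD^q(\bfx,R)|/\VD^q(\bfx,R)=1$, as required.

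It remains to estimate $\sum_{\bfy\in\Sigma_q^{n-R}}w(\bfy)$. Grouping words of length $n-R$ by their number of runs, and using that exactly $q(q-1)^{r-1}\binom{n-R-1}{r-1}$ of them have $r$ runs, produces the first displayed inequality of the theorem. For $R=1$ the denominator is $r+2$, and writing $\tfrac1{r+2}=\int_0^1 t^{r+1}\,dt$ collapses the sum to $q\int_0^1 t^2\bigl(1+(q-1)t\bigr)^{n-2}\,dt$; the substitution $u=1+(q-1)t$ evaluates this integral in closed form (a combination of $\tfrac{q^{n+1}-1}{n+1}$, $\tfrac{q^{n}-1}{n}$, $\tfrac{q^{n-1}-1}{n-1}$), and an elementary, if slightly tedious, comparison shows the result is at least $\tfrac{q^{n}(n-2)}{(q-1)n(n+1)}$ for all $q\ge2$, $n\ge2$. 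For fixed $R$ and large $n$, I would restrict the sum to the words $\bfy$ with $\run{\bfy}\le\tfrac{(n-R-1)(q-1)}{q}+1+n^{2/3}$: the number of runs of a uniformly random length-$(n-R)$ word has the distribution $1+\mathrm{Bin}\bigl(n-R-1,\tfrac{q-1}{q}\bigr)$, so a Chernoff bound shows these words make up a $(1-o(1))$-fraction of $\Sigma_q^{n-R}$, while for them $\binom{\run{\bfy}+3R-1}{R}=\tfrac1{R!}\bigl(\tfrac{n(q-1)}{q}\bigr)^{R}(1+o(1))$; multiplying the two estimates gives $\KD^q(n,R)\ge\tfrac{R!\,q^{n}}{n^{R}(q-1)^{R}}(1-o(1))$.

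The step I expect to be the main obstacle is establishing the two structural facts above with precisely the constants that yield the ``$3R$'' inside $w$. The ball-size estimate $\VD^q(\bfx,R)\le\binom{\run{\bfx}+R-1}{R}$ is rather loose — a subsequence typically embeds into $\bfx$ in many ways — so the key point is to count the integer compositions onto which the embeddings surject, which is exactly what turns the bound into a clean binomial, and this must be paired with the tight ``at most $2$ runs lost per deletion'' estimate rather than a weaker one. The closed-form comparison in the $R=1$ case is routine but needs care when $q=2$, where the leading terms nearly cancel and one has to keep track of the lower-order contributions.
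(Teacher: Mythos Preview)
Your proof is correct and follows essentially the same route as the paper: apply Lemma~\ref{lemma:generalized:sphere:covering:bound} with a weight depending only on $\run{\bfy}$, justified by the pair of estimates $\VD^q(\bfx,R)\le\binom{\run{\bfx}+R-1}{R}$ and $\run{\bfx}\le\run{\bfy}+2R$, and then group the resulting sum by number of runs. The only differences are in the secondary evaluations---the paper handles $R=1$ via binomial identities rather than your integral representation $\tfrac{1}{r+2}=\int_0^1 t^{r+1}\,dt$, and derives the asymptotic statement by directly estimating the sum (deferred to an appendix proposition) rather than via Chernoff concentration on the run count---but these are alternative computations of the same quantity and both are valid.
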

	\begin{proof}
		We will prove the theorem using Lemma \ref{lemma:generalized:sphere:covering:bound}. We take $U = \Sigma_q^n$ and $T = \Sigma_q^{\dim-R}$. Define the weight function $w: T \to \mathbb{R}$ to be 
		$$w(\bfy) = \frac{1}{\underset{\bfx: \bfy \in \BD^q(\bfx,R)}{\max}\VD^q(\bfx,R)}, $$
		for all $\bfy \in \Sigma_q^{\dim-R}$. It follows that for all $\bfy \in \BD^q(\bfx,R)$, $w(\bfy) \leq  \frac{1}{\VD^q(\bfx,R)}$, which implies that $w$ satisfies \eqref{eq:weighting:condition}, since
		$$ \sum_{\bfy \in \BD^q(\bfx,R)}w(\bfy) \leq \sum_{\bfy \in \BD^q(\bfx,R)} \frac{1}{\VD^q(\bfx,R)} = 1 $$
		for all $\bfx \in \Sigma_q^n$. By Lemma~\ref{lemma:generalized:sphere:covering:bound} we obtain a lower bound on the cardinality of any $R$-deletion-covering code $\C$, as follows.
		\begin{align*}
			|\C| & \geq \sum_{\bfy \in \Sigma_q^{\dim-R}}w(\bfy) = \sum_{\bfy \in \Sigma_q^{\dim-R}}\frac{1}{\underset{\bfx: \bfy \in \BD^q(\bfx,R)}{\max}\VD^q(\bfx,R)} & \\
			&  \overset{(a)}{\geq} \sum_{\bfy \in \Sigma_q^{\dim-R}} \frac{1}{\underset{\bfx: \bfy \in \BD^q(\bfx,R)}{\max}\binom{\run{\bfx}+R-1}{R}}, &
		\end{align*}
		where in $(a)$ we used the fact that the size of the radius-$R$ deletion ball is at most $\VD^q(\bfx,R) \leq \binom{\run{\bfx}+R-1}{R}$~\cite{Lev65}. Moreover, for all $\bfy \in \BD^q(\bfx,R)$ we have that $\run{\bfx} \leq \run{\bfy}+2R$, since each deletion can eliminate at most two runs. Thus,
		$$|\C| \geq \sum_{\bfy \in \Sigma_q^{\dim-R}} \frac{1}{\binom{\run{\bfy}+3R-1}{R}} \overset{(b)}{=} q\sum_{r=1}^{\dim-R} \frac{(q-1)^{r-1}\binom{n-R-1}{r-1}}{\binom{r+3R-1}{R}},$$
		where in $(b)$ we used the fact that the number of words $\bfx \in \Sigma_q^{\dim}$ with $\run{\bfx} = r$ runs is given by $q(q-1)^{r-1} \binom{\dim-1}{r-1}$~\cite{Lev65}, and we combined terms corresponding to $\bfy\in\Sigma_q^{\dim-R}$ with $\run{\bfy} = r$. This concludes the proof for arbitrary $R$. 
		
		For $R=1$, we can use the refinement $\VD^q(\bfx,1) = \run{\bfx}$ and obtain, using the same arguments as above,
		\begin{align*}
		&  |\C| \geq q \sum_{r=1}^{\dim-1} \frac{(q-1)^{r-1}\binom{\dim-2}{r-1}}{r+2} & \\
		 & = q\sum_{r=1}^{\dim-1} \frac{(q-1)^{r-1}(\dim-2)!}{(r-1)!(\dim-r-1)!(r+2)} & \\
		 & = \frac{q}{(\dim-1)\dim(\dim+1)} \sum_{r=1}^{\dim-1} (q-1)^{r-1}r(r+1) \binom{\dim+1}{r+2} & \\
		 &\overset{(c)}{=}\hspace{-0.25ex}\frac{q}{(q\hspace{-0.25ex}-\hspace{-0.25ex}1)^2(\dim\hspace{-0.25ex}-\hspace{-0.25ex}1)\dim(\dim\hspace{-0.25ex}+\hspace{-0.25ex}1)} \sum_{r=3}^{\dim+1} (q\hspace{-0.25ex}-\hspace{-0.25ex}1)^{r\hspace{-0.25ex}-\hspace{-0.25ex}1}(r^2\hspace{-0.25ex}-\hspace{-0.25ex}3r\hspace{-0.25ex}+\hspace{-0.25ex}2)\binom{\dim\hspace{-0.25ex}+\hspace{-0.25ex}1}{r}, &
		\end{align*}
		where $(c)$ follows from a shift of the variable $r$ inside the sum. Using the equalities $$\sum_{i=0}^{m}\binom{m}{i}x^{i-1} = (x+1)^{m}/x, \sum_{i=0}^{m}i \binom{m}{i}x^{i-1} = m(x+1)^{m-1},$$ and $$\sum_{i=0}^{m}i(i-1)\binom{m}{i} x^{i-1}= m(m-1)x(1+x)^{m-1},$$ we obtain by standard transformations after a few steps
		$$\KD^q(n,1) \geq \frac{q^{\dim}(n-2)}{(q-1)n(n+1)}.$$
		Finally, it remains to derive the asymptotic bound for fixed $R$ and large $\dim$. 
		We will use the following inequality, 
		\[q\sum_{r=1}^{\dim-R} \frac{(q-1)^{r-1}\binom{\dim-R-1}{r-1}}{\binom{r+3R-1}{R}} \geq \frac{R!q^{n} }{n^R(q-1)^{R}}\left( 1 -o(1)  \right),\]	
whose standard, but rather technical, proof is deferred to Proposition~\ref{prop:deletion:approx:lb} in the appendix.	
	\end{proof}

\section{Single-Insertion/Deletion-Covering Codes}

Having lower bounds on the sizes of $R$-insertion- and $R$-deletion-covering codes in hand, we now prove existence of these codes for single insertions and deletions for both binary and non-binary alphabets. Surprisingly, finding covering codes for insertions is harder than for deletions, which is in sharp contrast with error-correcting codes, which have been proven to be equivalent for insertion and deletion errors \cite{Lev65}. For the case of deletions, we prove the existence of codes using explicit constructions. For the case of insertions, due to the lack of small explicit constructions, we resort to proving the existence of codes based on a random construction and a recursive construction.

\subsection{Single-Deletion-Covering Codes} \label{sec:single_del_nb}

Let us recall the definition of the well-known Varshamov-Tenengolts (VT)~\cite{VT65} codes and their role as single-deletion-covering codes. 
\begin{defn}
	For all $\dim$ and $0\leq a \leq \dim$, let $\C_\mathsf{VT}(n;a) \subseteq \bits$ be the {\bf Varshamov-Tenengolts code}
	$$ \C_\mathsf{VT}(n;a) = \left\{\bfc \in \bits: \sum_{i=1}^{\dim} ic_i \equiv a \bmod (\dim+1) \right\}. $$
\end{defn}
It is well known~\cite{Levenshtein1992} that for all $n$ and $a$ the Varshamov-Tenengolts code  $\C_\mathsf{VT}(n;a)$ is a perfect code under deletions, that is, $\bigcup_{\bfc\in\C_\mathsf{VT}(n;a)}\BD(\bfc, 1) = \bit^{\dim-1}$. The next corollary is a direct result of this important property.
\begin{corollary} \label{cor:single:deletion:covering}
	For all $n\geq 1$ it holds that 
	$$\KD(n,1)\leq  \frac{2^{\dim}}{\dim+1}. $$
\end{corollary}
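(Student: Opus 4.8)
The plan is to deduce Corollary~\ref{cor:single:deletion:covering} directly from the perfect-code property of the Varshamov-Tenengolts codes together with a simple pigeonhole/averaging argument over the choice of syndrome~$a$. First I would recall the cited fact that for every $n$ and every $0 \le a \le n$, the code $\C_\mathsf{VT}(n;a)$ satisfies $\bigcup_{\bfc \in \C_\mathsf{VT}(n;a)} \BD(\bfc,1) = \bit^{n-1}$; in particular each $\C_\mathsf{VT}(n;a)$ is already a single-deletion-covering code of length~$n$. Hence $\KD(n,1) \le |\C_\mathsf{VT}(n;a)|$ for every admissible~$a$, and it suffices to show that some syndrome class has at most $2^n/(n+1)$ codewords.

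For that last step I would observe that the $n+1$ sets $\C_\mathsf{VT}(n;0), \C_\mathsf{VT}(n;1), \ldots, \C_\mathsf{VT}(n;n)$ form a partition of $\bit^n$, since every word $\bfc \in \bit^n$ lies in exactly one residue class of $\sum_{i=1}^n i c_i \bmod (n+1)$. Therefore
\[
\sum_{a=0}^{n} |\C_\mathsf{VT}(n;a)| = 2^n,
\]
so by averaging there exists an index $a^\star$ with $|\C_\mathsf{VT}(n;a^\star)| \le 2^n/(n+1)$. Taking $\C = \C_\mathsf{VT}(n;a^\star)$, which is a single-deletion-covering code by the perfect-code property, yields $\KD(n,1) \le |\C| \le 2^n/(n+1)$, as claimed.

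There is essentially no real obstacle here: the only thing to be careful about is that the bound $2^n/(n+1)$ need not be an integer, so one should phrase the conclusion as ``there exists $a^\star$ with $|\C_\mathsf{VT}(n;a^\star)| \le \lfloor 2^n/(n+1) \rfloor \le 2^n/(n+1)$'' rather than claiming every class has that size. (Indeed, it is classical that the largest VT class is the one with $a^\star = 0$.) I would also note in passing that this matches the sphere-covering lower bound $\KD(n,1) \ge 2^n/(n+1)$ implied up to lower-order terms by Theorem~\ref{thm:lower:bound:deletions} in the binary case $q=2$, so the corollary is asymptotically tight; but that remark is not needed for the proof of the stated inequality.
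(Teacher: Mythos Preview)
Your proposal is correct and matches the paper's approach: the paper states the corollary as an immediate consequence of the perfect-code property of the VT codes, and your averaging argument over the $n+1$ syndrome classes is exactly the standard way to extract the size bound (the paper additionally cites that the minimum is attained at $a=1$, but your pigeonhole step already suffices).
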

It is also known that the largest (resp. smallest) of the VT codes is achieved for $a=0$ (resp. $a=1$)~\cite{G67}. Hence, while for deletion-correcting codes it is common to choose the code $\C_\mathsf{VT}(n;0)$, for the purpose of minimizing the size of single-deletion-covering codes, one should choose the code $\C_\mathsf{VT}(n;1)$. 

Unfortunately, the same property does not hold for insertions, i.e., the VT code is not a perfect code for insertions. In fact, this can be verified by simple counting arguments using the VT code size and the single-insertion ball size: as shown in Theorem~\ref{thm:lower:bound:insertion}, a lower bound on the size of any single-insertion-covering code is $2^{n+1}/(n+2)$, which is roughly twice the size of the VT codes. It can further be seen that,  while  the tasks of correcting a fixed number of insertions, deletions, or a combination of insertions and deletions are all equivalent~\cite{L66}, this sort of equivalence does not extend to covering codes. This makes the problem of finding good single-insertion-covering codes an intriguing question that will be addressed in Section~\ref{sec:single}.


VT codes have  a non-binary extension, presented by Tenengolts in~\cite{T84}, which can  correct a single deletion in the non-binary case. However, this family of codes is no longer perfect. In fact, their guaranteed size is $\frac{q^n}{qn}$, while the upper bound on a single-deletion-correcting code is approximately $\frac{q^n}{(q-1)n}$. This is also roughly the lower bound on a non-binary single-deletion-covering code we derived in Theorem~\ref{thm:lower:bound:deletions}, which confirms that these codes are not perfect and, therefore, are not single-deletion-covering codes. Our main result in this section is another non-binary extension of the binary VT codes, which we will show  does satisfy the single-deletion covering property.

For an integer $m$, we denote by $(m)_2$ the value of $(m\bmod 2)$ and for a vector $\bfx=(x_1,\ldots,x_n)$, let $(\bfx)_2 = ((x_1)_2,\ldots,(x_n)_2)$.
\begin{defn}\label{def:non_binary_VT}
	For all positive $\dim$, $q\geq 2$, $0\leq a \leq \dim$, and $ 0\leq b< \lfloor q/2\rfloor$,  let $\C_\mathsf{NBVT}^q(n;a,b) \subseteq \Sigma_q^n$ be the code 
	\begin{align*}
	\C_\mathsf{NBVT}^q(n;a,b)  = \bigg\{\bfc \in \Sigma_q^n \ : \ & \sum_{i=1}^{\dim} i(c_i)_2 \equiv a \bmod (\dim+1), & \\
	& \sum_{i=1}^{\dim} \left\lfloor \frac{c_i}{2} \right\rfloor \equiv b \bmod \left(\left\lfloor \frac{q}{2}\right\rfloor\right) \bigg\}. & 
	\end{align*}
\end{defn}
The following theorem proves that the code $\C_\mathsf{NBVT}^q(n;a,b)$ is indeed a non-binary single-deletion-covering code.
\begin{thm} \label{thm:single:q:deletion}
	For all positive $\dim$, $q\geq 2$, $0\leq a \leq \dim$, and $ 0\leq b< \lfloor q/2\rfloor$,  the code $\C_\mathsf{NBVT}^q(n;a,b)$ is a single-deletion-covering code.  Furthermore,
	$$\KD^q(n,1)\leq \frac{q^{\dim}}{(\dim+1)\lfloor q/2\rfloor}. $$
\end{thm}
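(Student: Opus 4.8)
The goal is to show that $\C_\mathsf{NBVT}^q(n;a,b)$ covers every word $\bfy \in \Sigma_q^{n-1}$, and that a suitable choice of $(a,b)$ yields the claimed cardinality bound. My strategy is to lift the classical covering argument for binary VT codes to the $q$-ary alphabet by decomposing each symbol $c_i \in \Sigma_q$ into its \emph{parity bit} $(c_i)_2 \in \{0,1\}$ and its \emph{quotient} $\lfloor c_i/2 \rfloor \in \{0,\dots,\lfloor q/2\rfloor -1\}$ (or $\{0,\dots,\lceil q/2\rceil -1\}$ when $q$ is odd — a case that needs slight care). The first VT-type congruence $\sum i(c_i)_2 \equiv a$ governs \emph{where} a deleted symbol sits, exactly as in the binary VT code, and the second congruence $\sum \lfloor c_i/2\rfloor \equiv b$ records the ``high part'' that is lost. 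The plan is: given $\bfy \in \Sigma_q^{n-1}$, reconstruct a codeword $\bfc \in \C_\mathsf{NBVT}^q(n;a,b)$ from which $\bfy$ can be obtained by one deletion.

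**Key steps, in order.** First I would fix $\bfy=(y_1,\dots,y_{n-1})\in\Sigma_q^{n-1}$ and consider its parity pattern $(\bfy)_2 \in \{0,1\}^{n-1}$. Since $\C_\mathsf{VT}(n;a)$ is a perfect single-deletion-\emph{covering} code over $\{0,1\}$ (Corollary~\ref{cor:single:deletion:covering} and the preceding discussion), there is a binary string $\bfu \in \{0,1\}^n$ with $\sum_{i=1}^n i u_i \equiv a \pmod{n+1}$ such that deleting one coordinate, say position $j$, of $\bfu$ yields $(\bfy)_2$. I would then build $\bfc$ by ``lifting'' $\bfu$ back to $\Sigma_q$: the $n-1$ coordinates of $\bfu$ other than position $j$ are matched to the coordinates of $\bfy$ (they have the right parity by construction), so set those $c_i$ equal to the corresponding $y$-coordinate; this determines all of $\bfc$ except the single new symbol $c_j$, whose parity is forced to be $u_j$ but whose quotient $\lfloor c_j/2\rfloor$ is still free to range over $\{0,\dots,\lfloor q/2\rfloor-1\}$. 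The second congruence $\sum_i \lfloor c_i/2\rfloor \equiv b \pmod{\lfloor q/2\rfloor}$ then has all terms fixed except $\lfloor c_j/2\rfloor$, which can be chosen (uniquely mod $\lfloor q/2\rfloor$) to satisfy it. By construction $\bfc \in \C_\mathsf{NBVT}^q(n;a,b)$ and deleting coordinate $j$ of $\bfc$ gives $\bfy$, so $\bfy \in \BD^q(\bfc,1)$. For the cardinality bound, I would count: the two congruences partition $\Sigma_q^n$ into $(n+1)\lfloor q/2\rfloor$ classes as $(a,b)$ ranges, so by averaging some class has size at most $q^n/((n+1)\lfloor q/2\rfloor)$, and every class is a covering code, giving $\KD^q(n,1) \le q^n/((n+1)\lfloor q/2\rfloor)$.

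**Where the difficulty lies.** The main obstacle is the reconstruction step when $q$ is \emph{odd}: then $\Sigma_q$ has $\lceil q/2\rceil$ symbols of even parity but only $\lfloor q/2\rfloor$ of odd parity, so the quotient $\lfloor c_j/2\rfloor$ of the inserted symbol does \emph{not} range over a full residue system mod $\lfloor q/2\rfloor$ when $u_j=1$ — it misses the value $\lceil q/2\rceil - 1$ being available only for even symbols. I would need to check that the freedom is still enough: either the binary covering step can be arranged so that the inserted bit $u_j$ is even (using the flexibility in which run the deletion falls, as in the standard VT argument where a deleted $0$/$1$ can be taken at the boundary of a run), or one argues directly that a residue class of size $\lfloor q/2\rfloor$ (with the even-symbol slot providing one extra choice) still hits every target. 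A second, more routine point to verify carefully is that the matching of $\bfu$'s surviving coordinates to $\bfy$'s coordinates is consistent — i.e., that deleting position $j$ from $\bfu$ really does produce $(\bfy)_2$ coordinate-by-coordinate and not merely as subsequences — which follows because single-deletion balls for binary strings are exactly characterized by the standard run-based argument underlying the VT code's perfectness. I expect the odd-$q$ bookkeeping to be the one genuinely delicate ingredient; the rest is a clean reduction to the binary case plus a pigeonhole count.
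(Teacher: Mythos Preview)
Your approach is essentially identical to the paper's: reduce to the binary VT covering property to find the insertion position and parity bit, then choose the quotient of the inserted symbol to satisfy the second congruence, and finish with a pigeonhole over the $(n+1)\lfloor q/2\rfloor$ classes.

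Your flagged ``difficulty'' for odd $q$ is a phantom. The modulus in the second congruence is $\lfloor q/2\rfloor$, and the odd-parity symbols in $\Sigma_q$ are precisely $\{1,3,\dots,2\lfloor q/2\rfloor-1\}$, whose quotients $\lfloor\cdot/2\rfloor$ run over $\{0,1,\dots,\lfloor q/2\rfloor-1\}$, a \emph{complete} residue system modulo $\lfloor q/2\rfloor$. So when the inserted bit $u_j=1$ you can always hit the required residue; the even case has one extra choice (the symbol $q-1$ when $q$ is odd) but that is merely redundant. Concretely, with $s\equiv b-\sum_{i}\lfloor y_i/2\rfloor\pmod{\lfloor q/2\rfloor}$ and $0\le s\le\lfloor q/2\rfloor-1$, the symbol $2s+u_j$ lies in $\Sigma_q$ regardless of $u_j\in\{0,1\}$, and the paper's proof is exactly this one-line construction. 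Your second worry is also immediate: ``$(\bfy)_2\in\BD(\bfu,1)$'' means by definition that deleting some single coordinate of $\bfu$ yields $(\bfy)_2$ as a word, not merely as a subsequence.
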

\begin{proof}
	Let $\bfx =(x_1,\ldots,x_{n-1})\in\Sigma_q^{\dim-1}$. Since the binary VT code $\C_\mathsf{VT}(n;a)$ is a covering code, it follows that there exist $1\leq i\leq n$ and a binary value $d$ such that 
	$$(x_1,\ldots,x_{i-1},d, x_{i},\ldots,x_{n-1})_2\in \C_\mathsf{VT}(n;a).$$ Let 
	$$s= \left( b- \sum_{i=1}^{\dim-1} \left\lfloor \frac{x_i}{2} \right\rfloor\right) \bmod \left(\left\lfloor \frac{q}{2}\right\rfloor\right).$$
	Then, it holds that 
	$$\bfc=(x_1,\ldots,x_{i-1},2s+d, x_{i},\ldots,x_{n-1}) \in \C_\mathsf{NBVT}^q(n;a,b),$$
	and $\bfx\in  \BD(\bfc, 1)$. 
\end{proof}
Lastly, we note that this construction improves upon the construction in~\cite{afrati-anchor}\footnote{The result is stated in Corollary 5.5 in~\cite{afrati-anchor}. However, note that the authors of this paper refer to deletion-covering codes as insertion-covering codes and the result is stated over length-$(n+1)$ codes.}, which provides single-deletion-covering codes of size  $q^n/n$.

\subsection{Single-Insertion-Covering Codes} \label{sec:single}

In this section, we study single-insertion-covering codes. Our main result is stated in the following theorem.

\begin{thm}\label{thm:one:insertion}
For all $\dim \geq 1$ it holds that
$$\KI^q(n,1) \leq \mu_{\mathsf{I}} \frac{q^{n+1}}{(n+1)(q-1)+1},$$ 
where $\mu_{\mathsf{I}} \leq 7$.
\end{thm}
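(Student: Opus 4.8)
The plan is to prove the bound $\KI^q(n,1) \le \mu_{\mathsf{I}}\, q^{n+1}/\big((n+1)(q-1)+1\big)$ with $\mu_{\mathsf{I}} \le 7$ by a combination of a random covering of ``most'' of $\Sigma_q^{n+1}$, a direct-sum-type recursion to mop up the uncovered words, and an induction on $n$. Concretely, write $\phi(n)$ for the best constant such that $\KI^q(n,1) \le \phi(n)\cdot q^{n+1}/\big((n+1)(q-1)+1\big)$; the goal is to show $\phi(n) \le 7$ for all $n$, handling small $n$ as base cases by taking $\C = \Sigma_q^n$ (which is trivially a single-insertion-covering code of size $q^n$, giving $\phi(n) \le (n+1)(q-1)+1)/q \le 7$ for $n$ below some threshold $n_0$). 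For the inductive step I would fix $n$ large, assume the bound for all smaller lengths, and build a covering code of $\Sigma_q^{n+1}$ out of (i) a random set of length-$n$ codewords and (ii) codewords obtained by prepending short prefixes to the codewords of a single-insertion-covering code of length $n - \ell$ for a suitably chosen $\ell$.

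First I would carry out the random step. Include each $\bfc \in \Sigma_q^n$ independently in a set $\calR$ with probability $p = \alpha \ln(V)/V$, where $V = \VI^q(n,1) = (n+1)(q-1)+1$ and $\alpha$ is a constant to be tuned. For a fixed $\bfy \in \Sigma_q^{n+1}$, the set of length-$n$ words that cover it is the deletion ball $\BD^q(\bfy,1)$, whose size is the number of runs $\run{\bfy}$; the probability that $\bfy$ is \emph{not} covered by $\calR$ is $(1-p)^{\run{\bfy}}$. Words with few runs are the problematic ones, so I would separate $\Sigma_q^{n+1}$ into words with $\run{\bfy} \ge \tau$ and words with $\run{\bfy} < \tau$ for a threshold $\tau = \Theta(n/\log n)$ (or a constant fraction of $n$, chosen so that $(1-p)^\tau$ times $|\Sigma_q^{n+1}|$ is a small fraction of the target size). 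The expected number of uncovered high-run words is then at most $q^{n+1}(1-p)^\tau$, which by the choice of $p$ and $\tau$ is $o\big(q^{n+1}/V\big)$; and $\E|\calR| = p\,q^n = \alpha \ln(V) q^n / V$, which is already too big by a $\ln V$ factor — so the naive greedy/Johnson bound alone is exactly what we must beat, and the point of the recursion is to avoid paying for the $\ln V$.

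Here is where I would instead follow the Cooper–Ellis–Kahng / Krivelevich–Sudakov–Vu strategy to shave the logarithm: rather than covering everything randomly at density $\ln(V)/V$, cover at the lower density $c/V$ so that $\E|\calR| = O(q^n/V)$, accept that a constant fraction (not $o(1)$) of words remain uncovered, and then cover the leftovers cheaply. The leftover words are handled by a \emph{semi-direct sum}: write $\bfy \in \Sigma_q^{n+1}$ as $\bfy = \bfu \bfv$ with $\bfu \in \Sigma_q^{\ell}$ a prefix of some fixed length $\ell$ and $\bfv \in \Sigma_q^{n+1-\ell}$; if $\C'$ is a single-insertion-covering code for length $n - \ell$, then $\{\bfu \bfc' : \bfu \in \Sigma_q^{\ell},\ \bfc' \in \C'\}$ is a single-insertion-covering code for length $n+1$, because an insertion into $\bfy$ either lands in the prefix part (handled by ranging over all $\bfu$) or in the suffix, where $\bfc'$ covers it — one needs to be slightly careful about insertions at the boundary, which is the kind of technical wrinkle to check. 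This costs $q^\ell \cdot \phi(n-\ell)\cdot q^{n-\ell+1}/\big((n-\ell+1)(q-1)+1\big) \approx \phi(n-\ell)\cdot q^{n+1}/\big((n-\ell+1)(q-1)+1\big)$, which is larger than the target by a factor roughly $\phi(n-\ell)\cdot (n+1)/(n-\ell+1)$; to keep the induction tight one must only apply this recursion on a set of words small enough (a $\delta$-fraction) that the total is $\delta\,\phi(n-\ell)\cdot\frac{n+1}{n-\ell+1}\cdot q^{n+1}/V$, so we need $c\,(\text{leftover fraction}) + (\text{high-run tail}) + \delta\,\phi(n-\ell)\frac{n+1}{n-\ell+1} \le 7$. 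The delicate accounting is to show the uncovered words (low-run words, plus the random leftovers among high-run words) can all be forced into a structured family — e.g., all sharing a common short factor — so that a single application of the semi-direct sum with small $\ell$ and a recursively-obtained $\C'$ suffices, and then to pick the constants $\alpha, \tau, \ell$ so the three contributions sum below $7$ uniformly in $n$ and $q$.

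The main obstacle I expect is precisely this last bookkeeping: controlling the low-run words (for which random covering is useless since $\run{\bfy}$ can be as small as $1$) simultaneously with the random-step leftovers, showing they jointly lie in few translates of a shorter covering code, and then choosing all the parameters so that the recursion $\phi(n) \le f(\phi(n-\ell))$ has $7$ as a stable fixed point — i.e., verifying $f(7) \le 7$ and that the small-$n$ base cases feed in under $7$. Everything else (the first-moment computation on the random set, the tail bound on high-run words via $(1-p)^{\run{\bfy}}$, the correctness of the semi-direct sum) is routine; the value $7$ will come out of optimizing these constants, and a cruder analysis would give a worse but still constant $\mu_{\mathsf{I}}$.
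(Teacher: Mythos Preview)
Your high-level shape---random almost-covering plus recursion plus induction, \`a la Cooper--Ellis--Kahng---is the same as the paper's, but two of the concrete ingredients you describe are off in ways that prevent the argument from closing.

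First, the semi-direct sum is set up the wrong way round. You take the random set $\calR$ at \emph{full} length $n$ and then hope the leftover words in $\Sigma_q^{n+1}$ concentrate on few length-$\ell$ prefixes, so that a small $U \otimes \C'$ mops them up. They don't: with i.i.d.\ sampling the uncovered words are spread essentially uniformly over prefixes, so $U$ would have to be almost all of $\Sigma_q^\ell$, and $|U|\cdot|\C'|$ is already larger than the target. The paper instead does the random step on \emph{short} words: pick $n_1+n_2+1=n$, choose $S\subseteq\Sigma_q^{n_1}$ randomly, let $T\subseteq\Sigma_q^{n_1+1}$ be the words not covered by $S$ via one insertion, and take the code
\[
(S\otimes\Sigma_q^{n_2+1})\ \cup\ (T\otimes\C_{n_2}).
\]
Now the uncovered set $T$ lives in the short space $\Sigma_q^{n_1+1}$ automatically---no structural hypothesis needed---and the recursion cost is $|T|\cdot|\C_{n_2}|$. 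The bookkeeping becomes the single inequality $|S|+\mu_{\mathsf I}|T|/\VI^q(n_2,1)\le \mu_{\mathsf I}q^{n_1+1}/\VI^q(n,1)$, which is what one actually optimizes.

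Second, uniform sampling does not handle the low-run words, as you already note; but the fix is not to treat them separately---it is to make the sampling \emph{non-uniform}. The paper includes $\bfx\in\Sigma_q^{n_1}$ with probability $q_{\bfx}=c/\run{\bfx}$. Since any $\bfx\in\BD^q(\bfy,1)$ has $\run{\bfx}\le\run{\bfy}$, one gets $\Pr[\bfy\text{ uncovered}]\le(1-c/\run{\bfy})^{\run{\bfy}}\le e^{-c}$ for \emph{every} $\bfy$, with no threshold on runs. Meanwhile $\E|S|=\sum_{\bfx}c/\run{\bfx}$ is computed exactly via the run-count identity $|\{\bfx:\run{\bfx}=r\}|=q(q-1)^{r-1}\binom{n_1-1}{r-1}$ and equals $cq^{n_1+1}/(n_1(q-1))$ up to lower order. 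With these two changes the expected value of $|S|+\mu_{\mathsf I}|T|/\VI^q(n_2,1)$ has the closed form you need, and choosing $n_1=\lfloor 3n/4\rfloor$, $c=3$, $\mu_{\mathsf I}=7$ makes the recursion self-sustaining. Your separation into high-run/low-run and the hoped-for ``common short factor'' structure can be dropped entirely.
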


	Note that our result is stated as a fraction of the sphere-covering lower bound in Theorem \ref{thm:lower:bound:insertion} and implies that the size of optimal single-insertion covering codes is at most a factor of $7$ from the theoretical lower limit. Our proof is inspired by and follows the strategy of the existential construction of asymmetric covering codes due to Cooper, Ellis, and Kahng~\cite{cooper2002asymmetric}. The argument proceeds in two main steps. First, we use a random subset $S \subseteq \Sigma_q^{n_1}$ of an appropriate size to cover all but a small fraction of words $T \subseteq \Sigma_q^{n_1+1}$ with a single insertion. (This is analogous to the patched covering code in \cite{cooper2002asymmetric}.) Then, we ``fix up'' the set $S$ using a ``good'' single-insertion-covering code to generate a covering code of larger codeword length. By picking the size of $S$ and $T$ appropriately and using good codes inductively, we show that we will not have to pay too much in efficiency in this process. 
	
	We begin by introducing the set operation that will be used in the ``fixing up'' operation. The main utility of this tensorization is that it allows us to handle the uncovered words in an efficient manner.
	\begin{lemma}\label{lem:semidir}
		Let $S\subseteq \Sigma_q^{n_1},T \subseteq \Sigma_q^{n_1+1}$ be such that $S$ covers $\Sigma_q^{n_1+1} \setminus T$ with a single insertion. Let $\C_{n_2} \subseteq \Sigma_q^{n_2}$ be a single-insertion-covering code. Then, the code
		\[ (S \otimes \Sigma_q^{n_2+1}) \cup (T \otimes \C_{n_2})  \]
		is a single-insertion-covering code of length \mbox{$n_1+n_2+1$} and of size at most $|S|\cdot q^{n_2+1} + |T|\cdot|\C_{n_2}|$, where  \mbox{$A\otimes B=\{\bfa\bfb:\bfa\in A, \bfb \in B\}$} is the tensor product of two sets and $\bfa\bfb$ is the concatenation of $\bfa$ and $\bfb$
	\end{lemma}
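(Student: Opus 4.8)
The plan is to verify directly that the code $\mathcal{D} := (S \otimes \Sigma_q^{n_2+1}) \cup (T \otimes \C_{n_2})$ covers every word of length $n_1+n_2+1$ with a single insertion; the size bound is then immediate from $|A \otimes B| \leq |A|\cdot|B|$. Take an arbitrary target word $\bfy \in \Sigma_q^{n_1+n_2+2}$, and split it as $\bfy = \bfy' \bfy''$ where $\bfy' \in \Sigma_q^{n_1+1}$ is the prefix of length $n_1+1$ and $\bfy'' \in \Sigma_q^{n_2+1}$ is the remaining suffix. The argument branches on whether $\bfy'$ lies in $T$ or not.

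First I would handle the case $\bfy' \notin T$. Then $\bfy' \in \Sigma_q^{n_1+1} \setminus T$, so by hypothesis there is some $\bfs \in S$ with $\bfy' \in \BI^q(\bfs,1)$, i.e.\ $\bfy'$ is obtained from $\bfs$ by a single insertion. Concatenating with $\bfy''$ on the right, the word $\bfs \bfy'' \in S \otimes \Sigma_q^{n_2+1} \subseteq \mathcal{D}$ yields $\bfy = \bfy'\bfy''$ after performing that same single insertion inside the prefix block. Hence $\bfy \in \BI^q(\bfs\bfy'', 1)$ and $\bfy$ is covered. Second, in the case $\bfy' \in T$: since $\C_{n_2}$ is a single-insertion-covering code for length $n_2$, there is some $\bfc \in \C_{n_2}$ with $\bfy'' \in \BI^q(\bfc,1)$, so $\bfy''$ is obtained from $\bfc$ by one insertion. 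Then $\bfy' \bfc \in T \otimes \C_{n_2} \subseteq \mathcal{D}$, and performing that single insertion inside the suffix block turns $\bfy'\bfc$ into $\bfy'\bfy'' = \bfy$, so $\bfy \in \BI^q(\bfy'\bfc, 1)$ and again $\bfy$ is covered. In both cases the length of the codeword used is $(n_1) + (n_2+1) = n_1+n_2+1$ in the first case and $(n_1+1) + (n_2) = n_1+n_2+1$ in the second, matching the claimed length.

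I do not expect any serious obstacle here; the only point requiring a little care is the bookkeeping on lengths and the observation that an insertion confined to one block of a concatenation commutes with the concatenation, which is what makes the "semi-direct sum" work. One should also note that the construction is stated for a fixed choice of $\C_{n_2}$, so the size bound $|S|\cdot q^{n_2+1} + |T|\cdot|\C_{n_2}|$ follows by the union bound on the two pieces together with $|A\otimes B| \le |A|\cdot|B|$; there is no need for the two pieces to be disjoint. This completes the verification.
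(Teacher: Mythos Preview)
Your proof is correct and follows essentially the same approach as the paper: split the target word into a length-$(n_1+1)$ prefix and length-$(n_2+1)$ suffix, then cover via $S\otimes\Sigma_q^{n_2+1}$ when the prefix is covered by $S$ and via $T\otimes\C_{n_2}$ otherwise, with the size bound coming from the union bound. If anything, your version is slightly more explicit about the length bookkeeping and the fact that an insertion confined to one block commutes with concatenation.
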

	\begin{proof}
		Consider any $\bfx\bfy \in \Sigma_q^{n_1+n_2+2}$, where $\len{\bfx} = n_1+1$ and $\len{\bfy}=n_2+1$. We consider two cases. If $\bfx$ is covered by $\bfs \in S$, then $\bfx\bfy$ is covered by $\bfs\bfy \in S \otimes \Sigma_q^{n_2+1}$. Otherwise, $\bfx \in T$. In this case, let $\bfc \in \C_{n_2}$ be the word covering $\bfy \in \Sigma_q^{n_2+1}$. Then, $\bfx\bfy$ is covered by $\bfx\bfc$, and $\bfx\bfc \in T \otimes\C_{n_2}$. The size of the code directly follows from the definition of the tensorization and the union bound.
	\end{proof}
	We next find a suitable $(S,T)$ pair by randomly selecting the subset $S$. The words in $S$ are non-uniformly sampled from $\Sigma_q^{n_1}$, which  will reduce the overall code size by a constant factor compared to uniform sampling. The intuitive motivation for this is that  some words in  $\Sigma_q^{n_1+1}$ are harder to cover because their single-deletion balls are smaller. Non-uniform sampling ensures that  the words in $S$  cover words in  
	$\Sigma_q^{n_1+1}$ in a more equitable fashion.  
	
	The following lemma provides a bound on the sizes of $S$ and $T$. Although we could bound the sizes of $S$ and $T$ directly, the formulation in the lemma scales the size of  the uncovered set $T$ by $\mu_{\mathsf{I}}/\VI^q(n_2,1)$ because this is the factor saved by the use of induction later in the construction.
	
	\begin{lemma} \label{lem:existence:almost:covering}
		For all $n \geq 1$ there exist integers $n_1,n_2$ with $n_1+n_2+1=n$ and sets $S \subseteq \Sigma_q^{n_1},T \subseteq \Sigma_q^{n_1+1} $ such that $S$ covers $\Sigma_q^{n_1+1} \setminus T$, that is, $T = \Sigma_q^{n_1+1} \setminus \bigcup_{\bfs \in S} \BI^q(\bfs,1)$, while the sizes of $S$ and $T$ satisfy
		$$ |S| + \frac{\mu_{\mathsf{I}} |T|}{\VI^q(n_2,1)} \leq \frac{\mu_{\mathsf{I}} q^{n_1+1}}{\VI^q(n,1)}, $$
		where $\mu_{\mathsf{I}} \leq7$.
	\end{lemma}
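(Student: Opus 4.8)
The plan is to prove Lemma~\ref{lem:existence:almost:covering} via the probabilistic method, selecting $S\subseteq \Sigma_q^{n_1}$ by independent, non-uniformly weighted random sampling and then bounding the expected size of the uncovered set $T$. First I would fix the split $n_1 = n - n_2 - 1$, choosing $n_2$ (equivalently $n_1$) near the end to optimize the constant $\mu_{\mathsf I}$; think of $n_1$ as roughly $n/2$ so that $q^{n_1+1}$ and $q^{n_2+1}$ are comparable. The key point is that a word $\bfy\in\Sigma_q^{n_1+1}$ is covered by $S$ with a single insertion exactly when $S$ meets the single-deletion ball $\BD^q(\bfy,1)$, whose size is $\rho(\bfy)$, the number of runs of $\bfy$. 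So the coverage probability of $\bfy$ is governed by $\sum_{\bfx\in\BD^q(\bfy,1)} p_\bfx$, where $p_\bfx$ is the inclusion probability of $\bfx$. Since runs can be as few as a constant, uniform sampling would force a $\log n$ overhead (matching Afrati et al.); the fix, following Cooper--Ellis--Kahng, is to inflate $p_\bfx$ for words $\bfx$ that are deletions of run-poor words, i.e.\ weight by something like the reciprocal of a representative deletion-ball size.

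Concretely, I would set $p_\bfx = \min\{1,\, \lambda \cdot g(\bfx)\}$ for a scalar $\lambda$ (a free parameter controlling $\E|S|$) and a weight $g(\bfx)$ chosen so that for every $\bfy$, $\sum_{\bfx\in\BD^q(\bfy,1)} p_\bfx \gtrsim \lambda \cdot (\text{something }\Omega(1)\text{ times }\rho(\bfy))$ is still large enough, while keeping $\E|S| = \sum_\bfx p_\bfx$ controlled; a natural choice is $g(\bfx)$ proportional to $\rho(\bfx)$ or to the insertion-ball-normalization so that $\E|S| \approx \lambda q^{n_1+1}/\VI^q(n_1,1)$ up to a constant. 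Then, by independence, $\Pr[\bfy\text{ uncovered}] = \prod_{\bfx\in\BD^q(\bfy,1)}(1-p_\bfx) \le \exp(-\sum_{\bfx}p_\bfx)$, so $\E|T| = \sum_{\bfy\in\Sigma_q^{n_1+1}} \Pr[\bfy\text{ uncovered}]$. I would split $\Sigma_q^{n_1+1}$ according to the run count $\rho(\bfy)=r$; there are $q(q-1)^{r-1}\binom{n_1}{r-1}$ such words (by the run-counting identity cited in the excerpt), each uncovered with probability at most $e^{-c\lambda r}$ for the appropriate constant $c$, and then sum the resulting series $\sum_r q(q-1)^{r-1}\binom{n_1}{r-1} e^{-c\lambda r}$, which is a clean binomial-type sum equal to $q e^{-c\lambda}(1 + (q-1)e^{-c\lambda})^{n_1}$ or similar. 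Taking $\lambda$ of order $\Theta(\log n_2)$ (tuned to $\VI^q(n_2,1)\approx (n_2 q)^1$ up to constants, hence $\log$ of a polynomial) makes the tail factor $e^{-c\lambda}$ polynomially small, which is exactly the regime where $|T|\cdot \mu_{\mathsf I}/\VI^q(n_2,1)$ becomes negligible compared to $q^{n_1+1}\mu_{\mathsf I}/\VI^q(n,1)$, while $\E|S|$ picks up only the resulting $\lambda = O(\log n_2)$ factor — but crucially, because $\VI^q(n,1) = \VI^q(n_1,1)\cdot\VI^q(n_2,1)/\Theta(1)$ roughly multiplies, that $\log n_2$ is absorbed into the constant $\mu_{\mathsf I}$ rather than reappearing as a $\log n$. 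Finally, since both $\E|S|$ and $\E\big[\mu_{\mathsf I}|T|/\VI^q(n_2,1)\big]$ are bounded, by linearity of expectation there is a choice of $S$ (and a choice of $n_2$) for which $|S| + \mu_{\mathsf I}|T|/\VI^q(n_2,1) \le \mu_{\mathsf I} q^{n_1+1}/\VI^q(n,1)$, giving the lemma.

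The main obstacle I anticipate is the bookkeeping of constants: one must pin down the weight $g(\bfx)$ and the scalar $\lambda$ precisely enough that \emph{all three} quantities — $\E|S|$, the scaled $\E|T|$, and the target $\mu_{\mathsf I} q^{n_1+1}/\VI^q(n,1)$ — line up with a single universal $\mu_{\mathsf I}\le 7$, uniformly over $n$ and $q$. This requires (i) a lower bound on $\sum_{\bfx\in\BD^q(\bfy,1)}p_\bfx$ that does not degrade when some $p_\bfx$ are truncated at $1$, handled by noting truncation only helps coverage; (ii) control of the ratio $\VI^q(n_1,1)\VI^q(n_2,1)/\VI^q(n,1)$, which is bounded by an absolute constant for a balanced split by a direct estimate on $\VI^q(m,1) = (m+1)(q-1)+1$; and (iii) choosing the split point $n_1$ (and a small-$n$ base case) so the induction that Theorem~\ref{thm:one:insertion} will run on top of Lemma~\ref{lem:semidir} and this lemma actually closes with the constant not blowing up. The probabilistic core is routine; the delicate part is that the constant is genuinely a constant.
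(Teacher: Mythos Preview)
Your high-level architecture is right and matches the paper: random $S$, bound $\Pr[\bfy\text{ uncovered}]$ via deletion balls, then combine $\E|S|$ and the scaled $\E|T|$ by linearity. But the analysis as sketched has a fatal gap, and the parameter choices you describe would only recover the $O(\log n)$-overhead bound you are trying to beat.

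The central error is the claim that $\VI^q(n,1)=\VI^q(n_1,1)\cdot\VI^q(n_2,1)/\Theta(1)$. This is false: $\VI^q(m,1)=(m+1)(q-1)+1$ is \emph{linear} in $m$, so with $n_1+n_2+1=n$ the product $\VI^q(n_1,1)\VI^q(n_2,1)$ is $\Theta(n_1 n_2)$, not $\Theta(n)$. Your ``absorption'' of the $\log n_2$ into $\mu_{\mathsf I}$ relies entirely on this multiplicativity; without it, choosing $\lambda=\Theta(\log n_2)$ leaves an honest $\Theta(\log n)$ factor in $\E|S|$ relative to the target $\mu_{\mathsf I}q^{n_1+1}/\VI^q(n,1)$, and you land back at Afrati et al. Relatedly, the weight you propose is in the wrong direction: taking $g(\bfx)$ proportional to $\rho(\bfx)$ (or constant, via insertion-ball normalization) does not neutralize the run-poor words.

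The paper's fix is exactly the one piece you are missing. Set $p_\bfx=c/\rho(\bfx)$, i.e., \emph{inversely} proportional to the number of runs, with $c$ an absolute constant. Since any $\bfx\in\BD^q(\bfy,1)$ satisfies $\rho(\bfx)\le\rho(\bfy)$, one gets
\[
\sum_{\bfx\in\BD^q(\bfy,1)} p_\bfx \;\ge\; \rho(\bfy)\cdot\frac{c}{\rho(\bfy)}\;=\;c,
\]
and hence $\Pr[\bfy\text{ uncovered}]\le e^{-c}$ \emph{uniformly in} $\bfy$ --- no residual dependence on $r=\rho(\bfy)$, and no need for $\lambda$ to grow. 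Meanwhile $\E|S|=\sum_\bfx c/\rho(\bfx)$ is computed exactly via the run-count identity to be at most $cq^{n_1+1}/(n_1(q-1))$, already within a constant of the target. Splitting with $n_1=\lfloor 3n/4\rfloor$ (not $n/2$) and taking $c=3$ then closes the inequality with $\mu_{\mathsf I}=7$; all constants are independent of $n$ and $q$.
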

	\begin{proof}
		For $n \leq \frac{q \mu_{\mathsf{I}}-q}{q-1}$, the statement is fulfilled by $S = \Sigma_q^{n_1}$ and $T = \emptyset$. Assume that $n > \frac{q \mu_{\mathsf{I}}-q}{q-1}$.  We prove the existence of an $(S,T)$ pair with sizes satisfying the lemma by means of a random construction. Include each word $\bfx \in \Sigma_q^{n_1}$ in $S$ with probability $q_\bfx \overset{\mathrm{def}}{=} c\VD^q(\bfx,1)^{-1}$ for a constant $c>0$ to be set later. Let $T$ be all remaining words that are not covered by $S$, i.e., $T = \Sigma_q^{n_1+1} \setminus \bigcup_{\bfs \in S} \BI^q(\bfs,1)$. 
		
		For a fixed word $\bfy \in \Sigma_q^{n_1+1}$, we have that $\bfy$ is covered by $S$ unless all of the words covering $\bfy$ fail to be included in $S$. The number of words that can cover $\bfy$ is exactly $\VD^q(\bfy,1)$, the size of the single-deletion ball. Note that $\VD^q(\bfy,1) = \run{\bfy}$ \cite{Lev65}, and observe that for any $\bfx \in \BD^q(\bfy,1)$ the number of runs cannot increase as a result of the deletion, i.e., $\run{\bfx}\leq \run{\bfy}$. Hence, $q_\bfx = c\VD^q(\bfx,1)^{-1} = c\run{\bfx}^{-1} \geq c \run{\bfy}^{-1} \overset{\mathrm{def}}{=} q_\bfy$. 
		We bound the probability that $S$ misses $\bfy$ as follows:
\begin{align*}
\mathrm{P}[\bfy \text{ is uncovered}] & = \prod_{\bfx \in \BD^q(\bfy,1)}(1-q_\bfx) & \\
& \overset{(a)}{\leq} \prod_{\bfx \in \BD^q(\bfy,1)}(1-q_\bfy)  = (1-q_\bfy)^{\VD^q(\bfy,1)}, 
\end{align*}
		where $(a)$ uses that $q_\bfx \geq q_\bfy$, as discussed above. 
		
		We now compute the expected weighted size of $S$ and $T$ under the above random selection.
\begin{align*}
 W & \overset{\mathrm{def}}{=} \mathrm{E}\left[|S| + \frac{\mu_{\mathsf{I}} |T|}{\VI^q(n_2,1)}\right] = \mathrm{E}[|S|] + \frac{\mu_{\mathsf{I}}\mathrm{E}[|T|]}{\VI^q(n_2,1)}  & \\
 & = \sum_{\bfx \in \Sigma_q^{n_2}} q_\bfx + \frac{\mu_{\mathsf{I}}}{\VI^q(n_2,1)}\sum_{\bfy \in \Sigma_q^{n_2+1}} \mathrm{P}[\bfy \text{ is uncovered}].  & 
 \end{align*}
		Plugging in the bound for $\mathrm{P}[\bfy \text{ is uncovered}]$ and recalling that $q_\bfx = c\run{\bfx}^{-1}$, we obtain
		$$ W \leq \sum_{\bfx \in \Sigma_q^{n_1}} \frac{c}{\run{\bfx}} + \frac{\mu_{\mathsf{I}}}{\VI^q(n_2,1)}\sum_{\bfy \in \Sigma_q^{n_1+1}} (1-q_\bfy)^{\VD^q(\bfy,1)}. $$
		It is well-known~\cite{Lev65} that the number of words $\bfx \in \Sigma_q^{n_1}$ with $\rho(\bfx) = r$ is given by $q\binom{n_1-1}{r-1}(q-1)^{r-1}$, which allows us to group terms in the first sum by $\rho(\bfx)=r$. Using that $1-z \leq \mathrm{e}^{-z}$ for all $z \in \R$, we have that $(1-q_\bfy)^{\VD^q(\bfy,1)} \leq \mathrm{e}^{-c}$ and we bound $W$ by
		\begin{align*}
		W &\leq qc\sum_{r=1}^{n_1} \frac{\binom{n_1-1}{r-1}(q-1)^{r-1}}{r} + \frac{\mu_{\mathsf{I}}}{\VI^q(n_2,1)}\sum_{\bfy \in \Sigma_q^{n_1+1}} \mathrm{e}^{-c}
		 \\ &\overset{(b)}{=} qc\sum_{r=1}^{n_1} \frac{(n_1-1)!(q-1)^{r-1}}{r!(n_1-r)!} + \frac{q^{n_1+1}\mu_{\mathsf{I}} \mathrm{e}^{-c}}{\VI^q(n_2,1)} 
		\\ &\overset{(c)}{=} \frac{qc}{n_1(q-1)}\sum_{r=1}^{n_1} \binom{n_1}{r}(q-1)^r + \frac{q^{n_1+1}\mu_{\mathsf{I}} \mathrm{e}^{-c}}{\VI^q(n_2,1)},
		\end{align*}
		where in equality $(b)$ and $(c)$ we used the definition of the binomial coefficient $\binom{n}{k} = \frac{n!}{k!(n-k)!}$. Finally, we use the binomial identity $\sum_{k=0}^{n} \binom{n}{k}x^k = (1+x)^n $ and obtain
		\begin{align*}
		 W& \leq \frac{cq^{n_1+1}}{n_1(q-1)} + \frac{q^{n_1+1}\mu_{\mathsf{I}} \mathrm{e}^{-c}}{\VI^q(n_2,1)} \\
		& = \frac{\mu_{\mathsf{I}} q^{n_1+1}}{\VI^q(n,1)} \left( \frac{c\VI^q(n,1)}{\mu_{\mathsf{I}}n_1(q-1)} + \frac{\VI^q(n,1){e}^{-c}}{\VI^q(n_2,1)}\right).
		 \end{align*}
		Abbreviating the term in round brackets by $\gamma$ and setting $n_1 = \lfloor\beta n\rfloor$ for some $0\leq\beta\leq1$, we derive the upper bound
		\begin{align*}
			\gamma &\overset{\mathrm{def}}{=}  \frac{c\VI^q(n,1)}{\mu_{\mathsf{I}}n_1(q-1)} + \frac{\VI^q(n,1){e}^{-c}}{\VI^q(n_2,1)} \\
			&\leq \frac{\VI^q(n,1)}{n(q-1)} \left(\frac{cn}{\mu_{\mathsf{I}}n_1} + \frac{{e}^{-c}n}{n-n_1}\right)
			\\&\overset{(d)}{\leq}  \frac{\mu_{\mathsf{I}}}{\mu_{\mathsf{I}}-1} \left(\frac{cn}{\mu_{\mathsf{I}}\lfloor\beta n\rfloor} + \frac{{e}^{-c}n}{n-\lfloor\beta n\rfloor}\right),
		\end{align*}
		where we used in equality $(d)$ that $\VI^q(n,1)/(n(q-1))$ is monotonically decreasing in $n$ and thus $\VI^q(n,1)/(n(q-1)) \leq \mu_{\mathsf{I}}/(\mu_{\mathsf{I}}-1)$ for all $n > (q \mu_{\mathsf{I}}-q)/(q-1)$. Note that this bound is convenient to handle as it is independent of $q$. To conclude, we find the smallest $\mu_{\mathsf{I}}$ such that there exists some $c>0$ and $0\leq \beta\leq1$ for which $ \gamma\leq 1$ for all $n > (q \mu_{\mathsf{I}}-q)/(q-1)$. A quick computer search yields that $\mu_{\mathsf{I}}=7$, $c=3$ and $\beta = \frac34$ fulfills this requirement. By definition of the random sets $S$ and $T$, any realization of them will have the desired property that $S$ covers $\Sigma_q^{n_1+1} \setminus T$. As the expected weighted size $W$ is at most $\frac{\mu_{\mathsf{I}} q^{n_1+1}}{\VI^q(n,1)}$, it  follows that there exists an $(S,T)$ pair satisfying the desired bound.
	\end{proof}

	Putting everything together, we prove Theorem~\ref{thm:one:insertion} for single-insertion-covering codes.
	\begin{proof}[Proof of Theorem~\ref{thm:one:insertion}]
		We proceed by induction on $\dim$. As the base case, for all $\dim \leq \frac{q\mu_{\mathsf{I}}-q}{q-1}$, it suffices to take $\C_\dim = \Sigma_q^n$. Assume now that the statement is correct for all lengths up to $\dim-1$, so that there exist codes $\C_{n_2}$ with size at most $\mu_{\mathsf{I}}q^{n_2+1}/\VI^q(n_2,1)$ for all $1 \leq n_2 \leq n-1$. Let $n_1+n_2+1=n$ and $S \subseteq \Sigma_q^{n_1}$ and $T \subseteq \Sigma_q^{n_1+1}$ denote sets guaranteed by Lemma~\ref{lem:existence:almost:covering}. Note that clearly $n_2<n$ in Lemma \ref{lem:existence:almost:covering}, which will be useful later. As these sets $S$ and $T$ fulfill the requirement of Lemma \ref{lem:semidir}, we define
		$$\C_{\dim} = (S \otimes \Sigma_q^{n_2+1}) \cup (T \otimes \C_{n_2}),$$
		and we have that
		there exists a single-insertion-covering code $\C_{\dim} \subseteq\Sigma_q^{\dim}$ of size 
		\begin{align*}
	 	|C_\dim| & \leq q^{n_2+1}|S| + |T|\cdot|\C_{n_2}| 
		\overset{(e)}{\leq} q^{n_2+1}\left(|S| + \frac{\mu_{\mathsf{I}}}{\VI^q(n_2,1)}|T|\right),
		\end{align*}
		where in $(e)$ we used the existence of a covering code of length $n_2<n$ and size $\mu_{\mathsf{I}}q^{n_2+1}/\VI^q(n_2,1)$ by the induction hypothesis. Using the existence of good sets $S$ and $T$ from Lemma \ref{lem:existence:almost:covering}, we obtain the desired bound on the code size 
		\[|\C_{\dim}| 
		\leq q^{n_2+1}\frac{\mu_{\mathsf{I}} q^{n_1+1}}{\VI^q(n,1)} = \frac{\mu_{\mathsf{I}} q^{n+1}}{\VI^q(n,1)}.\]
	\end{proof}
Together with our existence result from Theorem \ref{thm:one:insertion}, we can infer that the size of the smallest single-insertion-covering code lies between $q^{n+1}/\VI^q(n,1)$ and $7q^{n+1}/\VI^q(n,1)$ and thus is known up to a constant factor of $7$.

\section{Multiple-Insertion/Deletion-Covering Codes}\label{sec:multiple}

We now turn to the discussion of multiple-insertion/deletion covering codes. We begin by defining the optimal density of insertion- and deletion-covering codes, by analogy with the
notion of density often used in the context of classical covering codes. 
\begin{defn}
	  For $R$-insertion-covering codes of length $n$,  the {\bf optimal density $\mu_{\mathsf{I}}^q(n,R) $} is defined as
	\begin{align*}
	\mu_{\mathsf{I}}^q(n,R) &= \frac{\KI^q(n,R)\VI^q(n,R)}{q^{n+R}}. 	 
	\end{align*}
For  $R$-deletion-covering codes of length $n$, we define the {\bf optimal density $\mu_{\mathsf{D}}^q(n,R) $} as 
\begin{align*}
	\mu_{\mathsf{D}}^q(n,R) &= \frac{\KD^q(n,R)n^R(q-1)^R}{q^nR!}.
	\end{align*}
	Finally, for fixed $R$, we define the corresponding {\bf asymptotic optimal densities} 
$ \mu_{\mathsf{I}}^{q,*}(R)$  and $\mu_{\mathsf{D}}^{q,*}(R)$ as  
	\begin{align*}
	\mu_{\mathsf{I}}^{q,*}(R) &= \limsup_{n\rightarrow \infty} \mu_\mathsf{I}^q(n,R)
\end{align*}
and 
\begin{align*}
	\mu_{\mathsf{D}}^{q,*}(R) &= \limsup_{n\rightarrow \infty} \mu_\mathsf{D}^q(n,R).
	\end{align*}
\end{defn}
Note that   we define the optimal density of deletion-covering codes slightly differently than that of 
insertion-covering codes. This is due to the fact that for deletions, the deletion balls are non-uniform and the density is thus defined with respect to the lower bound obtained in Theorem~\ref{thm:lower:bound:deletions} for large $n$. A powerful tool in building covering codes of larger radius  is to take the tensor product of two short covering codes of small radius. For example, taking the tensor product of two covering codes of length $n$ and radius $1$ gives a covering code of length $2n$ and radius $2$. However, a straightforward application of this technique only gives covering codes whose density is at least exponential in $R$. We therefore refine this technique to obtain codes that have a density that is almost linear in $R$. Note that in the following sections we prove our results for binary words for simplicity. The proofs for $q>2$ are obtained by only a slight modification, which will be explained in more detail in Remark \ref{rem:extension:non-binary} at the end of Section \ref{subsec:multiple:deletion:covering:codes}.
\subsection{Multiple-Insertion-Covering Codes} 
Our main result about $R$-insertion-covering codes is stated in the following theorem.

\begin{thm} \label{thm:k:insertion}
	For any fixed $R\geq 2$ and $q \geq 2$,
	$$ \mu^{q,*}_{\mathsf{I}}(R) \leq \mathrm{e}(R \log R + \sqrt{2R\log R} +1)\mu^{q,*}_\mathsf{I}(1). $$
\end{thm}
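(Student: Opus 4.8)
The plan is to mimic, at the level of densities, the two-step "random cover + recursive fix-up" strategy that proved Theorem~\ref{thm:one:insertion}, but now iterating the recursion in the radius parameter $R$ rather than only in the length $n$. The key object is a radius-$R$ analogue of Lemma~\ref{lem:semidir}: given sets $S\subseteq\Sigma_q^{n_1}$ and $T\subseteq\Sigma_q^{n_1+R}$ such that $S$ covers $\Sigma_q^{n_1+R}\setminus T$ with $R$ insertions, and given a good $R$-insertion-covering code $\C_{n_2}\subseteq\Sigma_q^{n_2}$, the set $(S\otimes\Sigma_q^{n_2+R})\cup(T\otimes\C_{n_2})$ is an $R$-insertion-covering code of length $n_1+n_2+R$. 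The verification is the same case split as in Lemma~\ref{lem:semidir}: a word $\bfx\bfy$ of length $n_1+n_2+2R$ splits so that $\bfx$ has length $n_1+R$ and $\bfy$ has length $n_2+R$; if $\bfx$ is covered by $S$ we insert into the prefix, otherwise $\bfx\in T$ and we insert into the suffix using $\C_{n_2}$. (One must be slightly careful that the number of insertions used in each half is exactly $R$, but since both $\bfx$ and $\bfy$ are already too long by exactly $R$, this is forced.)

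Next I would establish the analogue of Lemma~\ref{lem:existence:almost:covering}: a random subset $S\subseteq\Sigma_q^{n_1}$, where $\bfx$ is included independently with probability proportional to $\VD^q(\bfx,R)^{-1}$ (the inverse radius-$R$ deletion ball size), covers all but an exponentially small expected fraction of $\Sigma_q^{n_1+R}$. The probability that a fixed $\bfy$ is uncovered is $\prod_{\bfx\in\BD^q(\bfy,R)}(1-q_\bfx)$, and using the monotonicity fact that $\bfx\in\BD^q(\bfy,R)$ forces $\run{\bfx}\le\run{\bfy}$ (hence $\VD^q(\bfx,R)\le\VD^q(\bfy,R)$, so $q_\bfx\ge q_\bfy$), this is at most $(1-q_\bfy)^{\VD^q(\bfy,R)}\le \mathrm{e}^{-c}$ for the scaling constant $c$. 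The expected size of $S$ is $\sum_\bfx q_\bfx = c\sum_\bfx\VD^q(\bfx,R)^{-1}$; here I would invoke the asymptotic estimate $\sum_{\bfx\in\Sigma_q^{n_1}}\VD^q(\bfx,R)^{-1}\sim \VI^q(n_1,R)^{-1}q^{n_1}$ up to $1\pm o(1)$ (essentially because a typical word has about $\VI^q(n_1,R)$-sized deletion and insertion balls, the standard fact underlying the sphere-covering bounds — this is where the density factors $\mu_\mathsf{I}^{q,*}(1)$ and the combinatorial overhead enter). Balancing the two contributions $c q^{n_1}/\VI^q(n_1,R)$ and $\mathrm{e}^{-c}\mu\, q^{n_1}/\VI^q(n_2,R)$ against $q^{n_1+R}/\VI^q(n,R)\cdot\mu$ and choosing $n_1=\beta n$, $n_2=(1-\beta)n$, one gets a self-improving inequality on the density: if density $\mu$ is achievable for shorter lengths (or smaller radius, in the induction below), then density $\mu(c/(\beta\mu)+\mathrm{e}^{-c}/(1-\beta))$ is achievable, and optimizing $c$ and $\beta$ drives $\mu$ down toward $\mathrm{e}\cdot\mu_\mathsf{I}^{q,*}(1)$ for each fixed $R$ — but the combinatorial cost of passing from radius-$1$ balls to radius-$R$ balls in the deletion-ball sum is what produces the $R\log R+\sqrt{2R\log R}+1$ factor, exactly as in Krivelevich–Sudakov–Vu for substitutions.

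The way the radius enters: I would run the recursion over $R$, covering a large fraction of $\Sigma_q^{n+R}$ by random sampling and then fixing up the uncovered words using an $(R-1)$-insertion-covering code applied to the suffix (a single insertion having already been "spent" on the prefix in a structured way), so that the density multiplies by a factor of the form $\mathrm{e}\cdot(\text{something}\approx R)$ per unit increase in $R$; tracking the product carefully yields $\mathrm{e}(R\log R+\sqrt{2R\log R}+1)\mu_\mathsf{I}^{q,*}(1)$. The main obstacle I anticipate is the precise asymptotic control of $\sum_{\bfx\in\Sigma_q^{n_1}}\VD^q(\bfx,R)^{-1}$: unlike insertion balls, deletion balls are irregular and have no closed form for $R\ge3$, so one needs a concentration-of-runs argument showing that all but a vanishing fraction of words have $\VD^q(\bfx,R)=(1+o(1))\binom{n_1}{R}(q-1)^R/R!$-ish size, together with a crude bound $\VD^q(\bfx,R)\ge 1$ on the exceptional words to show they contribute negligibly. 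Getting the constant in front of $R\log R$ to be exactly $\mathrm{e}$, and the lower-order term exactly $\sqrt{2R\log R}+1$, requires the same delicate optimization (a Chernoff-type bound on the number of "heavy" coordinates, balanced against the recursion depth) that appears in the substitution case, and transplanting it to the insertion/deletion setting is the technically heaviest part.
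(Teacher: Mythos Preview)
Your tensorization is the wrong one, and this is a genuine structural gap rather than a detail. The construction $(S\otimes\Sigma_q^{n_2+R})\cup(T\otimes\C_{n_2})$ places all $R$ insertions either entirely in the prefix or entirely in the suffix; no radius-$1$ covering code ever enters, so there is no mechanism by which $\mu_\mathsf{I}^{q,*}(1)$ can appear in the final bound. Your later paragraph senses this and talks about ``spending one insertion on the prefix'' and using an $(R-1)$-code on the suffix, but that is incompatible with the tensorization you wrote down (the two pieces would not even have the same length). The paper's fix is a \emph{split} tensorization: with $R=R_1+R_2$, take $S\subseteq\{0,1\}^{n_1-R_1}$ almost-covering $\{0,1\}^{n_1}$ by $R_1$ insertions, an $R_2$-insertion-covering code $\C_1\subseteq\{0,1\}^{n_2+R_1}$, and an $R$-insertion-covering code $\C_2\subseteq\{0,1\}^{n_2}$; then $(S\otimes\C_1)\cup(T\otimes\C_2)$ is an $R$-insertion-covering code of length $n_1+n_2$. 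Choosing $R_1=R-1$, $R_2=1$, $n_2=n/R$ gives the recursion
\[
\mu_\mathsf{I}(n,R)\ \lesssim\ c\,\mathrm{e}\,\mu_\mathsf{I}(n/R+R-1,1)\ +\ R^R\mathrm{e}^{-c}\,\mu_\mathsf{I}(n/R,R),
\]
which is a recursion in $n$ at \emph{fixed} $R$, not a recursion in $R$. A limsup lemma (the Krivelevich--Sudakov--Vu device) then yields $\mu_\mathsf{I}^{*}(R)\leq \frac{c\,\mathrm{e}}{1-R^R\mathrm{e}^{-c}}\,\mu_\mathsf{I}^{*}(1)$, and optimising $c$ near $R\log R$ gives the stated constant. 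By contrast, your proposed induction on $R$ (``density multiplies by roughly $\mathrm{e}R$ per unit increase in $R$'') would accumulate to something like $\mathrm{e}^{R-1}R!$, far worse than $R\log R$.

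A second, smaller gap: for $R\geq 2$ the implication ``$\rho(\bfx)\leq\rho(\bfy)$, hence $\VD^q(\bfx,R)\leq\VD^q(\bfy,R)$'' is false, since the radius-$R$ deletion ball size is not a function of the number of runs. The monotonicity $\VD^q(\bfx,R)\leq\VD^q(\bfy,R)$ for $\bfx\in\BD^q(\bfy,R)$ is true, but for a different reason (every length-$(n-2R)$ subsequence of $\bfx$ is also a subsequence of $\bfy$), and the paper exploits it via a stratified random construction (partition $\{0,1\}^{n-R}$ by the exact value of $\VD(\bfx,R)$ and sample a fixed-size subset from each stratum) rather than independent Bernoulli sampling; this sidesteps the need to compare $q_\bfx$ and $q_\bfy$ directly.
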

Recall that according to Theorem~\ref{thm:one:insertion}, we have that $\mu^{q,*}_\mathsf{I}(1) \leq 7$. Before proving the theorem, we give a short outline of the proof, along with the intuition behind it. As in the proof of the upper bound for single-insertion-covering codes, we start  by proving in Lemma~\ref{lem:existence:almost:covering:R} the existence of a small \emph{almost-covering} code $S$, i.e., a code that covers all words in $\{0,1\}^{n+R}$ except for a small subset  $T$. Then, in Lemmas~\ref{lem:semidir:k} and \ref{lemma:recursive:R:insertion}, we combine this code with small covering codes to  recursively build larger codes.  By computing the size of the resulting codes, we can then prove Theorem~\ref{thm:k:insertion}. 

The proof of  the existence of small almost-covering codes is again based on a random coding argument. Since  we are building covering codes for insertions, we must take into account the fact that each word $\bfy \in \{0,1\}^{n+R}$ is covered by a different number of potential codewords $\bfx \in \{0,1\}^n$. This is because the number of words that can cover $\bfy$ is given by $\VD(\bfy,R)$, which is known to depend on $\bfy$. In our random selection of codewords, we  therefore need to favor codewords that cover words with small $\VD(\bfy,R)$ to ensure that each  word is covered with high enough probability. Our proof follows the general idea of a recursive covering code construction presented in~\cite{krivelevich2003covering}, here modified to work for insertions. In particular, we need to adapt the arguments for the random construction and the recursive combination of almost-covering codes with existing covering codes. 

The following lemma  gives an upper bound on the sizes of the almost-covering code  $S$ and the complement $T$ of its coverage. 
\begin{lemma}\label{lem:existence:almost:covering:R}
	For every $n\geq R$ and every positive constant $c>0$ there exists a set $S\subseteq \bit^{\dim-R}$ of size at most $$|S| \leq \frac{c2^{\dim}}{\VI(n-R,R)} f_{n,R}$$ such that $S$ covers $\bit^{\dim} \setminus T$ with $R$ insertions for some set $T \subseteq \bit^{\dim}$ of size at most
	$$|T| \leq \mathrm{e}^{-c}2^{\dim},$$
	for some function $f_{n,R}$ with $\lim_{n\rightarrow \infty} f_{n,R} = 1$.
\end{lemma}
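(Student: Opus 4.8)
The plan is to mimic the single-insertion argument (Lemma~\ref{lem:existence:almost:covering}) but without the inductive weighting: here we only need the existence of one almost-covering set $S$, with explicit control on $|S|$ and $|T|$, so a direct first-moment computation suffices. Concretely, I would sample each word $\bfx \in \bit^{\dim-R}$ independently into $S$ with probability
\[ q_\bfx = \min\left\{1,\; \frac{c}{\VD(\bfx,R)}\right\}, \]
and let $T = \bit^{\dim} \setminus \bigcup_{\bfs \in S}\BI(\bfs,R)$ be the uncovered words. The key monotonicity fact driving the whole argument is that deletions cannot increase the number of runs, and more generally, a word $\bfx$ obtained from $\bfy$ by $R$ deletions satisfies $\VD(\bfx,R) \le \VD(\bfy,R)$ up to lower-order factors; combined with the bound $\VD(\bfx,R) \le \binom{\rho(\bfx)+R-1}{R}$ from~\cite{Lev65} (cf.\ the proof of Theorem~\ref{thm:lower:bound:deletions}) and $\rho(\bfx) \le \rho(\bfy)$, this gives $q_\bfx \ge q_\bfy (1-o(1))$ for every $\bfx \in \BD(\bfy,R)$. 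Here $q_\bfy$ denotes the analogous quantity $c/\VD(\bfy,R)$ (or its appropriate run-based surrogate).

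Given this, I would carry out two separate estimates. \emph{First}, for the size of $T$: for a fixed $\bfy \in \bit^\dim$, the event that $\bfy$ is uncovered is exactly the event that none of the $\VD(\bfy,R)$ words in $\BD(\bfy,R)$ were chosen, which has probability $\prod_{\bfx \in \BD(\bfy,R)}(1-q_\bfx) \le (1-q_\bfy)^{\VD(\bfy,R)} \le \mathrm{e}^{-c}$, using $1-z \le \mathrm{e}^{-z}$ and $q_\bfy = c/\VD(\bfy,R)$. Hence $\mathrm{E}[|T|] \le \mathrm{e}^{-c}2^\dim$. \emph{Second}, for the size of $S$: $\mathrm{E}[|S|] = \sum_{\bfx \in \bit^{\dim-R}} q_\bfx \le c\sum_{\bfx} \VD(\bfx,R)^{-1}$, and I would lower-bound $\VD(\bfx,R)$ in a way that lets me evaluate the sum. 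The natural move, paralleling the single-insertion proof, is to group words by run count $\rho(\bfx)=r$ (there are $2\binom{\dim-R-1}{r-1}$ such words in the binary case) and use a clean lower bound such as $\VD(\bfx,R) \ge \binom{\rho(\bfx)-1}{R}$ or a sharper asymptotic estimate $\VD(\bfx,R) \sim \binom{\rho(\bfx)-1}{R}$ valid when $\rho(\bfx)$ is not too small; the contribution of words with few runs is negligible. Summing $\sum_r 2\binom{\dim-R-1}{r-1} \binom{r-1}{R}^{-1}$ and comparing against $\VI(\dim-R,R) = \sum_{i=0}^{R}\binom{\dim}{i} \sim \dim^R/R!$ yields $\mathrm{E}[|S|] \le \frac{c\,2^\dim}{\VI(\dim-R,R)} f_{\dim,R}$ for an explicit $f_{\dim,R} \to 1$; isolating all the $1\pm o(1)$ slack into $f_{\dim,R}$ is exactly where this factor comes from.

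To finish, I would observe that $\mathrm{E}[|S|]$ and $\mathrm{E}[|T|]$ are controlled \emph{simultaneously} (not just individually): since $S$ and $T$ are determined by the same random choices and the two bounds hold for the expectations separately, I cannot directly take a union over two "bad" events unless I am careful. The cleanest fix is to bound the single quantity $\mathrm{E}\big[\,|S|\cdot\frac{\VI(\dim-R,R)}{c\,2^\dim f_{\dim,R}} + |T|\cdot \mathrm{e}^{c}2^{-\dim}\,\big] \le 2$, so some realization has $|S| \le \frac{c2^\dim}{\VI(\dim-R,R)}f_{\dim,R}$ and $|T| \le \mathrm{e}^{-c}2^\dim$ up to a harmless constant $2$; alternatively, and more in the spirit of what the paper likely does, simply absorb this factor-$2$ (or apply Markov to one variable and then condition) into the statement by a slight adjustment of $f_{\dim,R}$ or $c$. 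Every realization of $S$ automatically has the covering property $S$ covers $\bit^\dim \setminus T$ by construction, so no further verification is needed there.

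The main obstacle I anticipate is the $|S|$ estimate: getting a lower bound on $\VD(\bfx,R)$ that is both (i) simple enough to sum in closed form against $\binom{\dim-R-1}{r-1}$ and (ii) tight enough that the resulting constant matches $\VI(\dim-R,R)$ asymptotically, so that the overhead is a pure $f_{\dim,R} \to 1$ rather than an extra constant. For $R=1$ one has the exact identity $\VD(\bfx,1)=\rho(\bfx)$, which made the single-insertion computation exact; for $R\ge 2$ there is no exact formula, so I would rely on the asymptotic $\VD(\bfx,R) = \binom{\rho(\bfx)-1}{R}(1+o(1))$ for words whose run count is $\Theta(\dim)$ (which is all but an exponentially small fraction of $\bit^{\dim-R}$), and handle the atypical low-run words by the trivial bound $\VD(\bfx,R)\ge 1$, whose total contribution $\sum_{r = O(\sqrt{\dim})} 2\binom{\dim-R-1}{r-1}$ is $o(2^\dim/\dim^R)$ and thus folds into $f_{\dim,R}$.
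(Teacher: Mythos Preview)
Your proposal follows the same high-level strategy as the paper---non-uniform random sampling with weight $\approx c/\VD(\bfx,R)$, bound $\mathrm{E}[|T|]$ via a monotonicity property of deletion balls, and bound $|S|$ by grouping on run count with a Chernoff-type tail cutoff---but two technical choices in the paper make the execution cleaner than what you sketch.

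First, the paper does not use independent Bernoulli sampling. It stratifies $\bit^{n-R}$ into level sets $\X_i=\{\bfx:\VD(\bfx,R)=i\}$ and draws from each $\X_i$ a uniformly random subset of \emph{deterministic} size $\lceil c|\X_i|/i\rceil$. This makes $|S|$ non-random, so the simultaneous-control issue you correctly flag (and then have to patch with a factor-$2$ Markov argument) never arises: only $|T|$ is random, and $\mathrm{E}[|T|]\le\mathrm{e}^{-c}2^n$ already yields a good realization.

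Second, for the $|T|$ bound the paper invokes the exact inequality $\VD(\bfx,R)\le\VD(\bfy,R)$ for every $\bfx\in\BD(\bfy,R)$, which gives $\sum_i\gamma_i/i\ge\VD(\bfy,R)/\max_{\bfx\in\BD(\bfy,R)}\VD(\bfx,R)\ge 1$ and hence the clean bound $\mathrm{e}^{-c}$. Your run-based route is incomplete as written: from $\VD(\bfx,R)\le\binom{\rho(\bfx)+R-1}{R}\le\binom{\rho(\bfy)+R-1}{R}$ you cannot compare to $\VD(\bfy,R)$ without also using a \emph{lower} bound such as $\VD(\bfy,R)\ge\binom{\rho(\bfy)-R}{R}$, and that only yields $q_\bfx\ge q_\bfy(1-o(1))$ for typical $\bfy$. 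The resulting $(1+o(1))$ on $|T|$ cannot be absorbed into $f_{n,R}$ (which multiplies only $|S|$), so you would have to adjust $c$ and re-balance. The exact monotonicity is the right tool here.

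Your $|S|$ computation is the same idea as the paper's (group by $\rho(\bfx)=r$, lower-bound $\VD(\bfx,R)$, split off the low-run tail). One small correction: the usable lower bound is $\VD(\bfx,R)\ge\binom{\rho(\bfx)-R}{R}$, not $\binom{\rho(\bfx)-1}{R}$.
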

\begin{proof}
	We prove the lemma by choosing  a random set $S$ and computing the expected number of words that are not covered by such a random choice. Let $\X_i = \{ \bfx \in \{0,1\}^{\dim-R}: \VD(\bfx,R) = i \}$ be the set of all strings of length $n$, which have a deletion ball size of exactly $i$. We construct $S$ by choosing $S = S_1 \cup S_2 \cup\dots\cup S_m$, where $S_i \subseteq \X_i$ and $m \leq \binom{n}{R}$ is the maximum size of the deletion ball  of any $x \in \bit^{\dim-R}$. Denoting $m_i=|\X_i|$, each $S_i$ is a uniformly chosen random subset of 
$\X_i$ of cardinality $|S_i| = \lceil c m_i/i \rceil$, if $c m_i/i \leq m_i$, and $m_i$ otherwise. Hereby each such subset has the same probability. By this choice of the sets $S_1,\dots,S_m$, the probability that  any $\bfy \in \bits$ is not covered by $S$ can be bounded from above as follows. First, note  that
	$$ \mathrm{P}[\bfy \text{ is uncovered}] = \prod_{i=1}^{m}\mathrm{P}[S_i \cap \BD(\bfy,R) = \emptyset], $$
	since the random sets $S_i$ are independent. Denote by $\gamma_i = |\{ \X_i \cap \BD(\bfy,R) \}|$ the number of words in $\X_i$  which can cover $\bfy$.  With this notation,  the individual probabilities in the product can be expressed as 
	$$ \mathrm{P}[S_i \cap \BD(\bfy,R) = \emptyset] =  \frac{\binom{m_i-\gamma_i}{|S_i|}}{\binom{m_i}{|S_i|}} = \frac{ \prod_{j=0}^{|S_i|} (m_i-\gamma_i-j)}{\prod_{j=0}^{|S_i|} (m_i-j)}. $$
	From  the fact that $\frac{m_i-\gamma_i-j}{m_i-j} \leq \frac{m_i-\gamma_i}{m_i}$ for any $0\leq j <m_i-\gamma_i$, we obtain
	\begin{align*}
		\mathrm{P}[\bfy \text{ is uncovered}] &\leq \prod_{i=1}^{m} \left( \frac{m_i-\gamma_i}{m_i} \right)^{|S_i|} = \prod_{i=1}^{m} \left( 1-\frac{\gamma_i}{m_i} \right)^{|S_i|} \\
		&\overset{(a)}{\leq} \prod_{i=1}^{m} \mathrm{e}^{-\frac{\gamma_i}{m_i}|S_i|} \leq  \mathrm{e}^{-\sum_{i=1}^{m}c\frac{\gamma_i}{i}},
	\end{align*}
	where we used in $(a)$ that $1-x \leq \mathrm{e}^{-x}$ for any $x \in \mathbb{R}$. Let $\mu(\bfy) = \max_{\bfx \in \BD(\bfy,R)}\VD(\bfx,R)$ be the maximum deletion ball size of any $\bfx \in \BD(\bfy,R)$, which is obtained from $\bfy$ by $R$ deletions. Since $\gamma_i = 0$ for all $i> \mu(\bfy)$, we can bound the exponent from below by
	$$ \sum_{i=1}^{m}\frac{\gamma_i}{i} = \sum_{i=1}^{\mu(\bfy)}\frac{\gamma_i}{i} \geq \sum_{i=1}^{\mu(\bfy)} \frac{\gamma_i}{\mu(\bfy)} =  \frac{\sum_{i=1}^{\mu(\bfy)}\gamma_i}{\mu(\bfy)} = \frac{\VD(\bfy,R)}{\mu(\bfy)} \overset{(b)}{\geq} 1,  $$
	where $(b)$ follows from the fact that $\VD(\bfx,R) \leq \VD(\bfy,R)$ for any $\bfx \in \BD(\bfy,R)$. Hence, 
$\mathrm{P}[\bfy \text{ is uncovered}] \leq \mathrm{e}^{-c}$ and the expected size of $T$ is consequently at most $\mathrm{E}[|T|] \leq \mathrm{e}^{-c}2^\dim$. Thus, there must exist a set $S$ for which $|T|\leq \mathrm{e}^{-c}2^\dim$. 

It remains to compute the size of $S$. By construction
	$$ |S| = \sum_{i=1}^m |S_i| \leq cm+  c \sum_{i=1}^m \frac{m_i}{i} = cm + \sum_{\bfx \in \{0,1\}^{n-R}}\frac{c}{\VD(\bfx,R)}. $$
	Noting that  $\VD(\bfx,R) \geq \binom{\rho(\bfx)-R}{R}$ and recalling that the number of words of length $n$ with $r$ runs is   $2\binom{n-1}{r-1}$,   we obtain
	\begin{align*}
	|S| &\leq cm + c\sum_{r=1}^{n-R} \frac{2\binom{n-R-1}{r-1}}{\max\left\{ 1, \binom{r-R}{R} \right\}} \\
	&\leq c\binom{n}{R} + 2c \sum_{r=1}^{r^*} \binom{n-R-1}{r} +2c\sum_{r=r^*}^{n-R} \frac{\binom{n-R-1}{r-1}}{\binom{r^*-R}{R}} \\
	& \leq c\binom{n}{R} + 2c \sum_{r=1}^{r^*} \binom{n-R-1}{r} +c \frac{2^{n-R}}{\binom{r^*-R}{R}},
	\end{align*}
where  $r^*=\max\{2R,\frac{n}{2}-\sqrt{Rn\log n}\}$.
	For $r^* = 2R$, we can directly bound the second term by
	$$ 2c \sum_{r=1}^{r^*} \binom{n-R-1}{r} \leq 2c\binom{n-R+r^*}{r^*} = 2c\binom{n+R}{2R}. $$
	On the other hand, for $r^*=\frac{n}{2}-\sqrt{Rn\log n}$, applying Chernoff's inequality to the binomial tail, we obtain
	$$	\sum_{r=1}^{r^*} \binom{n-R-1}{r} \leq  \sum_{r=1}^{r^*} \binom{n}{r} \leq 2^{n} \mathrm{e}^{-2\frac{(n/2-r^*)^2}{n}} = \frac{2^n}{n^{2R}}. $$
	Hence, the overall size of $S$ is bounded from above by
	$$ |S| \leq c\frac{2^n}{\VI(n-R,R)} f_{n,R}, $$  
	where $f_{n,R}$ is at most
	\begin{align*}
	f_{n,R} \leq & ~\frac{\binom{n}{R}\binom{n+R}{R}}{2^n} + 2\binom{n+R}{R}\max\left\{ \frac{\binom{n+R}{2R}}{2^n},\frac{1}{n^{2R}}  \right\} \\
	& + \frac{2^{-R} \binom{n+R}{R}}{\binom{r^*-R}{R}},
	\end{align*}
	where we used that $\VI(n-R,R) \leq \binom{n+R}{R}$. 

Lastly, we bound the third summand in the bound on $|S|$. For $r^*=2R$ it is trivially bounded by $2^{-R}\binom{n+R}{R}$. When $r^*=\frac{n}{2}-\sqrt{Rn\log n}$, a quick calculation yields
	\begin{align*}
	\frac{2^{-R}\binom{n+R}{R}}{\binom{r^*-R}{R}} &\leq \frac{2^{-R}\binom{n+R}{R}R!}{(n/2-\sqrt{Rn\log n}-2R)^R}\\
	& \overset{(a)}{\leq} \frac{2^{-R}n^R \mathrm{e}^{R^2/n}}{(\frac{n}{2})^R (1-2R\sqrt{R \log n}/\sqrt{n}-4R^2/n)} \\
	& = \frac{n\mathrm{e}^{R^2/n}}{n-2R\sqrt{Rn \log n}-4R^2},
	\end{align*}
	where in inequality $(a)$ we used that $(1+x)^R \geq 1+Rx$ for any $x \geq -1$. Finally we obtain for $\frac{n}{2} -\sqrt{Rn\log n} \leq 2R$,
	$$f_{n,R}\leq \frac{(n+R)^{2R}}{2^n} + \frac{2(n+R)^{3R}}{2^n} + \frac{\binom{n+R}{R}}{2^R},$$
	and for $\frac{n}{2} -\sqrt{Rn\log n} > 2R$,
	$$ f_{n,R}\leq \frac{(n+R)^{2R}}{2^n} +\frac{2}{n^{R}} + \frac{n\mathrm{e}^{R^2/n}}{n-2R\sqrt{Rn \log n}-4R^2}. $$
	Here we additionally used $\binom{n}{R} \leq \binom{n+R}{R} \leq (n+R)^R$. Note that for large enough $n$ and any fixed $R$, $ \frac{n}{2} -\sqrt{Rn\log n} > 2R$ and it is directly verified that  $\lim_{n\rightarrow \infty} f_{n,R} = 1$.
\end{proof}
Note that while the expression of $f_{n,R}$ looks quite involved, we are interested in its asymptotic behavior and it will only be important in the following that it approaches $1$ for large $n$.
\begin{lemma}\label{lem:semidir:k}
	Let $S\subseteq \bit^{\dim_1-R_1},T \subseteq \bit^{\dim_1}$ be such that $S$ covers $\bit^{\dim_1} \setminus T$ with $R_1$ insertions. Denote by $\C_1 \subseteq \bit^{\dim_2+R_1}$ an $R_2$-insertion-covering code of length $\dim_2+R_1$ and by $\C_2 \subseteq \bit^{\dim_2}$ an $R$-insertion-covering code of length $n_2$. We have that
	\[ (S \otimes \C_1) \cup (T \otimes \C_2) \]
	is an $R=R_1+R_2$-insertion-covering code of length $\dim=\dim_1+\dim_2$ with size at most $|S|\cdot|\C_1| + |T|\cdot|\C_2|$.
\end{lemma}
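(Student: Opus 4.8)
The plan is to verify the covering property by a direct case analysis on an arbitrary word of length $n+R = n_1+n_2+R_1+R_2$, splitting it into a prefix of length $n_1$ and a suffix of length $n_2+R_1+R_2$ (or some such length bookkeeping). First I would fix an arbitrary $\bfz \in \bit^{\dim}$, where $\dim = \dim_1+\dim_2$, and think of it as the concatenation $\bfz = \bfx\bfy$ with $\len{\bfx} = \dim_1$ and $\len{\bfy} = \dim_2$. We want to show $\bfz$ is covered by some codeword in $(S\otimes \C_1)\cup(T\otimes\C_2)$, i.e.\ $\bfz$ is obtained from that codeword by $R$ insertions.

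The key dichotomy is whether the prefix $\bfx \in \bit^{\dim_1}$ is covered by $S$ with $R_1$ insertions. In the first case, there is $\bfs \in S$ with $\bfx \in \BI(\bfs, R_1)$; since $\bfy \in \bit^{\dim_2}$ and $\C_1 \subseteq \bit^{\dim_2 + R_1}$ is an $R_2$-insertion-covering code, there is $\bfc_1 \in \C_1$ with $\bfy \in \BI(\bfc_1, ?)$ — wait, I need to be careful: $\C_1$ covers words of length $(\dim_2+R_1)+R_2$, so I should instead arrange the split so that the suffix being covered by $\C_1$ has the right length. The clean way is: write $\bfz = \bfx\bfy$ with $\len{\bfx}=\dim_1$, $\len{\bfy} = \dim_2$; if $\bfx$ is covered by $\bfs\in S$ via $R_1$ insertions, then $\bfs$ has length $\dim_1 - R_1$, and I want a length-$(\dim_2+R_1)$ word covering... hmm. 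Actually, matching the single-insertion Lemma~\ref{lem:semidir}, the suffix codeword from $\C_1$ should have length $\dim_2+R_1$ and cover $\bfy\in\bit^{\dim_2}$? That fails on lengths too. Let me instead reparse: the covered word has length $\dim = \dim_1+\dim_2$, the codeword $\bfs\bfc_1 \in S\otimes\C_1$ has length $(\dim_1-R_1)+(\dim_2+R_1) = \dim_1+\dim_2 - ?$. For $S\otimes\C_1$ to be an $R$-insertion-covering code we need $\bfs\bfc_1$ to have length $\dim - R$, i.e.\ $(\dim_1 - R_1) + \len{\C_1} = \dim_1+\dim_2 - R$, so $\len{\C_1} = \dim_2 - R_2$; but the lemma states $\C_1\subseteq\bit^{\dim_2+R_1}$, so actually the intended reading is that $S$ has length $\dim_1-R_1$, $\C_1$ covers length-$(\dim_2+R_1)$ words, and the concatenation $\bfs\bfc_1$ has length $(\dim_1-R_1)+(\dim_2+R_1-R_2)=\dim-R$, good. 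So in the first case: $\bfx\in\BI(\bfs,R_1)$, and decompose $\bfy$ differently — we need the $\C_1$-part to be a length-$(\dim_2+R_1)$ word. The resolution: in case $\bfx$ is covered by $S$, we use $R_1$ insertions to turn $\bfs$ into $\bfx$, and we still have $R_2$ insertions left to turn some $\bfc_1$ into $\bfy$; but $\bfy$ has length $\dim_2$ while $\C_1$ produces things of length $\dim_2+R_1$. This mismatch means the correct split is $\bfz=\bfx\bfy$ with $\len{\bfx}=\dim_1$ and then in the ``$T$'' case $\bfx\in T\subseteq\bit^{\dim_1}$ (no insertion spent on prefix, using $T\otimes\C_2$ with $\C_2\subseteq\bit^{\dim_2}$... but then $\len{T\bfc_2}=\dim_1+(\dim_2-R)\ne\dim-R$ unless $R=R$, fine $\dim_1+\dim_2-R=\dim-R$, good since $\C_2\subseteq\bit^{\dim_2}$ is an $R$-covering code so its codewords have length $\dim_2-R$). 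So the bookkeeping forces: in the $S$-case we spend $R_1$ insertions building $\bfx$ from $\bfs$ and $R_2$ insertions building the length-$(\dim_2+R_1)$... no. I would resolve this at the writing stage by carefully choosing which piece has which length; the honest statement is: the second factor word has length $\dim_2+R_1$ in the $S$-branch because we are really covering the last $\dim_2+R_1$ symbols — i.e.\ the split of $\bfz$ depends on the branch. Let me commit: \emph{In the $S$-branch}, find $\bfs\in S$ covering the first $\dim_1-R_1+R_1=\dim_1$... I will present it cleanly below.

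Here is the argument I would write. Fix $\bfz \in \bit^{\dim}$. Suppose first that the length-$\dim_1$ prefix $\bfx$ of $\bfz$ lies in $T$, and write $\bfz = \bfx\bfy'$ with $\len{\bfy'}=\dim_2$. Since $\C_2$ is an $R$-insertion-covering code of length $\dim_2$, there is $\bfc_2\in\C_2$ with $\bfy'\in\BI(\bfc_2,R)$, and then $\bfz = \bfx\bfy' \in \BI(\bfx\bfc_2, R)$ with $\bfx\bfc_2\in T\otimes\C_2$ (all $R$ insertions happen in the suffix). Otherwise, the length-$\dim_1$ prefix $\bfx$ is covered by $S$ with $R_1$ insertions: there is $\bfs\in S$ with $\bfx\in\BI(\bfs,R_1)$. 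Now write $\bfz=\bfx\bfy'$ as before with $\len{\bfy'}=\dim_2$; since $\C_1\subseteq\bit^{\dim_2+R_1}$ is an $R_2$-insertion-covering code, it covers length-$(\dim_2+R_1)$ words, so... the mismatch persists. \textbf{The resolution I will use}: decompose $\bfz$ into a prefix of length $\dim_1-R_1$-covered-to-$\dim_1$ is impossible within $\bfz$; instead I parse $\bfz \in \bit^{\dim}$ as a prefix of length $n_1$ and suffix of length $n_2$, and in the $S$-branch I regard $\bfs$ (length $\dim_1-R_1$) plus $R_1$ of the insertions as producing a length-$\dim_1$ prefix $\bfx$, \emph{and the remaining word $\bfc_1$ from $\C_1$ already has length $\dim_2+R_1-R_2$ and must be brought to the length-$\dim_2$ suffix $\bfy'$ via $R_2$ insertions} — contradiction unless $R_1=0$. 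I suspect the paper actually means $\C_1\subseteq\bit^{\dim_2-R_2}$ or uses a different length split; I will mirror whatever the paper's Lemma~\ref{lem:semidir:k} statement dictates and follow exactly the two-case structure of Lemma~\ref{lem:semidir}, namely: if the length-$\dim_1$ prefix of $\bfz$ is $S$-covered use $S\otimes\C_1$, else it lies in $T$ and we use $T\otimes\C_2$, with the insertions distributed as $R_1$ in the prefix and $R_2$ in the suffix in the first case and all $R=R_1+R_2$ in the suffix in the second. The size bound is then immediate from $|A\otimes B|\le|A|\,|B|$ and the union bound, exactly as in Lemma~\ref{lem:semidir}.

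The main obstacle, then, is purely the length bookkeeping: making sure that in each branch the lengths of the concatenated codeword and of $\bfz$ differ by exactly $R$, and that the number of insertions used in each block matches ($R_1$ for the $S$-block paired with $R_2$ for the $\C_1$-block, since $\C_1$ is an $R_2$-covering code; and $0$ for the $T$-block paired with $R$ for the $\C_2$-block, since $\C_2$ is an $R$-covering code). Once the split point in $\bfz$ is chosen to make these add up — which is forced by the stated domains $\bit^{\dim_1-R_1}$, $\bit^{\dim_1}$, $\bit^{\dim_2+R_1}$, $\bit^{\dim_2}$ — the covering verification is a one-line case split and the cardinality bound follows from the definition of $\otimes$ and the union bound. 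I expect the write-up to be essentially identical in structure to the proof of Lemma~\ref{lem:semidir}, just with $R_1,R_2$ in place of $1,0$.
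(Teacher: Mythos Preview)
Your approach is exactly the paper's --- a two-case split depending on whether the prefix lies in $T$ or is $S$-covered --- but you tie yourself in knots because you pick the wrong length for the word being covered. You write ``fix an arbitrary $\bfz\in\bit^{\dim}$'', and then every subsequent length check fails. An $R$-insertion-covering code of length $\dim$ has \emph{codewords} in $\bit^{\dim}$ but must cover $\bit^{\dim+R}$. So the word to be covered is $\bfz\in\bit^{\dim+R}=\bit^{\dim_1+\dim_2+R}$, not $\bit^{\dim}$. (You actually say this correctly in your opening sentence and then abandon it.)

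With that fix the split is forced and all mismatches disappear: write $\bfz=\bfx\bfy$ with $\len{\bfx}=\dim_1$ and $\len{\bfy}=\dim_2+R$. If $\bfx\notin T$, then some $\bfs\in S\subseteq\bit^{\dim_1-R_1}$ covers $\bfx$ with $R_1$ insertions; and since $\C_1\subseteq\bit^{\dim_2+R_1}$ is an $R_2$-insertion-covering code, it covers words of length $(\dim_2+R_1)+R_2=\dim_2+R$, so some $\bfc_1\in\C_1$ covers $\bfy$ with $R_2$ insertions. Hence $\bfs\bfc_1\in S\otimes\C_1$ (of length $(\dim_1-R_1)+(\dim_2+R_1)=\dim$) covers $\bfz$ with $R_1+R_2=R$ insertions. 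If $\bfx\in T$, then since $\C_2\subseteq\bit^{\dim_2}$ is an $R$-insertion-covering code it covers $\bfy\in\bit^{\dim_2+R}$, and $\bfx\bfc_2\in T\otimes\C_2$ covers $\bfz$. The size bound is the union bound, as you said. There is no need for a branch-dependent split point; the same $(\dim_1,\dim_2+R)$ decomposition works in both cases.
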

\begin{proof}
	Consider any $\bfx \bfy \in \bit^{\dim+R}$, where $\len{\bfx} = \dim_1$ and $\len{\bfy}=\dim_2+R$. We distinguish between two cases. First consider the case where $\bfx$ is covered by $\bfs \in S$ with $R$ insertions. Denote by $\bfc_1 \in \C_1$ the word that covers $\bfy \in \bit^{\dim_2+R}$ with $R_2$ insertions. Note that such a word always exists, as $\C_1$ is an $R_2$-insertion-covering code. Then $\bfx\bfy$ is covered by $\bfs\bfc_1 \in S \otimes \C_1$ with a total of $R=R_1+R_2$ insertions. Otherwise, $\bfx \in T$. In this case, let $\bfc_2 \in \C_2$ be the string covering $\bfy \in \bit^{\dim_2+R}$ with $R$ insertions. Then, $\bfx\bfy$ is covered by $\bfx\bfc_2$, and $\bfx\bfc_2 \in T \otimes\C_2$. The size of the code directly follows from the union bound.
\end{proof}
\begin{lemma} \label{lemma:recursive:R:insertion}
For any $n \geq R$ and $c>0$,
\begin{align*}
\mu_{\mathsf{I}}(n,R) \leq  & ~ c\mathrm{e}\mu_\mathsf{I}(n/R+R-1,1) \frac{(1+2R/n)^R}{1-R^2/n} f_{\frac{R-1}{R}n,R-1} \\
&+ R^R \mathrm{e}^{-c} \mu_{\mathsf{I}}(n/R,R) (1+2R/n)^R.
\end{align*}
\end{lemma}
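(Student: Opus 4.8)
The idea is to instantiate the two building-block lemmas — the almost-covering existence result (Lemma~\ref{lem:existence:almost:covering:R}) and the semidirect-sum combination (Lemma~\ref{lem:semidir:k}) — with a carefully chosen split of the length $n$ and the radius $R$, then translate the resulting size bound into a density bound. Concretely, I would set $R_1 = R-1$, $R_2 = 1$, and choose the lengths so that $\dim_1 = \frac{R-1}{R}n$ and $\dim_2 = \frac{1}{R}n$ (up to floors, which I will suppress in this outline). With these choices, Lemma~\ref{lem:existence:almost:covering:R} applied to words of length $\dim_1$ and radius $R_1 = R-1$ produces a set $S \subseteq \bit^{\dim_1 - (R-1)}$ with $|S| \le \frac{c\,2^{\dim_1}}{\VI(\dim_1-(R-1),R-1)} f_{\dim_1,R-1}$ covering all but a set $T \subseteq \bit^{\dim_1}$ with $|T| \le \mathrm{e}^{-c} 2^{\dim_1}$.

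**Assembling the code.** Next I would feed this $(S,T)$ pair into Lemma~\ref{lem:semidir:k}: take $\C_1 \subseteq \bit^{\dim_2 + R_1}$ to be an optimal $R_2 = 1$-insertion-covering code of length $\dim_2 + R - 1 = \frac{n}{R} + R - 1$, and take $\C_2 \subseteq \bit^{\dim_2}$ to be an optimal $R$-insertion-covering code of length $\dim_2 = \frac{n}{R}$. Lemma~\ref{lem:semidir:k} then yields an $R$-insertion-covering code of length $n = \dim_1 + \dim_2$ of size at most $|S|\cdot|\C_1| + |T|\cdot|\C_2| = |S|\cdot\KI(\frac{n}{R}+R-1,1) + |T|\cdot\KI(\frac{n}{R},R)$. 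Substituting the bounds on $|S|$ and $|T|$ gives a bound on $\KI(n,R)$ in terms of $\KI(\frac{n}{R}+R-1,1)$ and $\KI(\frac{n}{R},R)$.

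**Converting to density.** The remaining work is bookkeeping: divide both sides by $2^{n+R}/\VI(n,R)$ to express everything in terms of densities, using the definition $\mu_{\mathsf{I}}(n,R) = \KI(n,R)\VI(n,R)/2^{n+R}$. This requires comparing insertion-ball-size ratios such as $\VI(n,R)/(2^{\dim_2}\VI(\dim_1-(R-1),R-1)\VI(\dim_2+R-1,1))$ and $\VI(n,R)/(2^{\dim_1}\VI(\dim_2,R))$. Since $\VI(m,k) \le \binom{m+k}{k}$ and this is within a $(1+O(R/m))$ factor of $\frac{m^k}{k!}$, and since $\dim_1+\dim_2 = n$, $R_1+R_2 = R$, the factorials $\frac{(R-1)!\cdot 1!}{R!} = \frac{1}{R}$ and the powers $(\frac{R-1}{R}n)^{R-1}(\frac{n}{R})^{1} \Big/ n^R \cdot R^R$-type terms combine to produce the stated prefactors $R^R$ and the $(1+2R/n)^R/(1-R^2/n)$ correction, with the $f$-factor carried along verbatim from Lemma~\ref{lem:existence:almost:covering:R}. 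The main obstacle — really the only delicate point — is getting the ball-size ratio estimates tight enough that the correction terms collapse exactly to the clean $(1+2R/n)^R$ and $1/(1-R^2/n)$ forms claimed, rather than some messier expression; this is where one must be careful about whether to bound $\VI$ from above or below in each occurrence and about the direction of the $\binom{m+k}{k}$ versus $m^k/k!$ comparison. Everything else is substitution and the triangle inequality on code lengths.
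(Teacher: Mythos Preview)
Your proposal is correct and follows essentially the same approach as the paper: instantiate Lemma~\ref{lem:existence:almost:covering:R} with radius $R_1=R-1$ on the first block of length $\frac{R-1}{R}n$, combine via Lemma~\ref{lem:semidir:k} with optimal covering codes $\C_1$ (radius $1$, length $n/R+R-1$) and $\C_2$ (radius $R$, length $n/R$), then convert to densities using the sandwich $\binom{m+R}{R}\le \VI(m,R)\le \binom{m+2R}{R}$ together with $(m-R)^R/R!\le \binom{m}{R}\le m^R/R!$. The paper carries general $R_1,R_2,y$ through the estimates and specializes to $R_1=R-1$, $R_2=1$, $y=R$ only at the end, whereas you fix these at the outset, but this is purely a presentational difference; the constant $\mathrm{e}$ arises exactly as you indicate from $\frac{n^R}{\binom{R}{R-1}n_1^{R-1}n_2}=\bigl(\frac{R}{R-1}\bigr)^{R-1}\le \mathrm{e}$, and the $(1+2R/n)^R/(1-R^2/n)$ corrections come from $(n+2R)^R=n^R(1+2R/n)^R$ and $(n_1-R_1)^{R_1}\ge n_1^{R_1}(1-R_1^2/n_1)$.
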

\begin{proof}
Let $S\subseteq \bit^{\dim_1-R_1},T \subseteq \bit^{\dim_1}$ with $n_1\geq R_1$ be such that $S$ covers $\bit^{\dim_1} \setminus T$ with $R_1$ insertions. Denote by $\C_1 \subseteq \bit^{\dim_2+R_1}$ an $R_2$-insertion-covering code of length $\dim_2+R_1$ and by $\C_2 \subseteq \bit^{\dim_2}$ an $R$-insertion-covering code of length $n_2$, where $n_1+n_2=n$, $n_1 = \frac{y-1}{y}n$, $n_2 = \frac{n}{y}$, and $R_1+R_2=R$. We compute the size of the tensorization $(S \otimes \mathcal{C}_1) \cup (T \cup \mathcal{C}_2)$, which, by Lemma \ref{lem:semidir:k}, is an $R$-insertion covering code of length $n$. To begin with, $|(S \otimes \mathcal{C}_1) \cup (T \otimes \mathcal{C}_2)| \leq |S|\cdot|\C_1| + |T|\cdot|\C_2|$. Using Lemma \ref{lem:existence:almost:covering:R} and optimal codes $\mathcal{C}_1$ and $\mathcal{C}_2$, we can bound the size of $S$ and $T$ to obtain
\begin{align*}
|(S \otimes \mathcal{C}_1) \cup (T \otimes \mathcal{C}_2)| \leq &~ \frac{c 2^{n+R} f_{n_1,R_1} \mu_\mathsf{I}(n_2+R_1,R_2)}{\VI(n_1-R_1,R_1) \VI(n_2+R_1,R_2)} \\
&+\frac{ \mathrm{e}^{-c}2^{n+R}\mu_\mathsf{I}(n_2,R)}{\VI(n_2,R)}.
\end{align*}
Since $(S \otimes \mathcal{C}_1) \cup (T \otimes \mathcal{C}_2)$ is a covering code of length $n$ and covering radius $R$, we obtain
\begin{align*}
&\mu_\mathrm{I}(n,R) \hspace{-.08cm}\leq\hspace{-.08cm}  \frac{c \VI(n,R) f_{n_1,R_1} \mu_\mathsf{I}(n_2\hspace{-.08cm}+\hspace{-.08cm}R_1,R_2)}{\VI(n_1\hspace{-.08cm}-\hspace{-.08cm}R_1,R_1) \VI(n_2\hspace{-.08cm}+\hspace{-.08cm}R_1,R_2)} \hspace{-.08cm}+\hspace{-.08cm} \frac{\VI(n,R)\mu_\mathsf{I}(n_2,R)}{\mathrm{e}^{c}\VI(n_2,R)} \\
& \overset{(a)}{\leq} \frac{c R_1!R_2! (n+2R)^R \mu_\mathsf{I}(n_2+R_1,R_2)}{R!(n_1-R_1)^{R_1} (n_2+R_1)^{R_2} }f_{n_1,R_1} \\[-2ex]
& \qquad\qquad\qquad\qquad\qquad\qquad\quad+ \mathrm{e}^{-c} \left(\frac{n+2R}{n_2}\right)^R\mu_\mathsf{I}(n_2,R) \\
& \leq \frac{c n^R \mu_\mathsf{I}(n_2+R_1,R_2) }{n_1^{R_1} n_2^{R_2} \binom{R}{R_1}} \frac{(1+2R/n)^R}{1-R_1^2/n_1}f_{n_1,R_1} \\[-1ex]
& \qquad\qquad\qquad\qquad\quad+ \mathrm{e}^{-c} \left(\frac{n}{n_2}\right)^R  (1+2R/n)^R \mu_\mathsf{I}(n_2,R),
\end{align*}
where in $(a)$ we used the well-known inequalities $\binom{n+R}{R}\leq \VI(n,R) \leq \binom{n+2R}{R}$ and $(n-R)^R/R! \leq\binom{n}{R} \leq n^R/R!~$. Inserting $R_1=R-1$, $R_2=1$, and $y=R$ yields the lemma. 
\end{proof}
With this recursive expression, we are ready to prove the theorem with the help of the following lemma.
\begin{lemma}[cf. \cite{krivelevich2003covering}] \label{lemma:asym:density}
	Let $(\mu_n), (\mu_n'), (a_n)$ and $(b_n)$, $n \in \mathbb{N}$ be sequences of positive numbers with
	$$ \limsup_{n \rightarrow \infty } \mu_n' \leq \mu', \quad \limsup_{n \rightarrow \infty } a_n \leq a \quad \limsup_{n \rightarrow \infty } b_n \leq b $$
	and
	$$ \mu_n \leq a_n \mu'_{n/R} + b_n \mu_{n/R}, $$
	where $R >1$. Then
	$$ \limsup_{n \rightarrow \infty } \mu_n \leq \frac{a\mu'}{1-b}.$$
\end{lemma}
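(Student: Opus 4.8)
The plan is to prove the recurrence $\mu_n \le a_n\mu'_{n/R} + b_n\mu_{n/R}$ implies $\limsup \mu_n \le a\mu'/(1-b)$ by iterating the recurrence and controlling the tail. First I would fix an arbitrary $\epsilon > 0$. Since $\limsup_n \mu'_n \le \mu'$, $\limsup_n a_n \le a$, and $\limsup_n b_n \le b$, there is an index $N_0$ such that for all $n \ge N_0$ we have $\mu'_n \le \mu' + \epsilon$, $a_n \le a + \epsilon$, and $b_n \le b + \epsilon$ (here I should be slightly careful: strictly, there is $N_0$ so that each of the three bounds holds beyond $N_0$; take the max of the three thresholds). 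Note also that since $R > 1$, the iterates $n, n/R, n/R^2, \dots$ decrease geometrically, so after finitely many steps — roughly $\log_R(n/N_0)$ steps — the argument drops below $N_0$.

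Next I would unroll the recurrence. Starting from $n \ge N_0$ large, write $\mu_n \le a_n \mu'_{n/R} + b_n \mu_{n/R}$, then substitute the bound for $\mu_{n/R}$, and so on, for $k$ steps where $k = k(n)$ is the largest integer with $n/R^{k-1} \ge N_0$. This yields
\begin{align*}
\mu_n &\le \Big(\sum_{j=0}^{k-1} \Big(\prod_{i=0}^{j-1} b_{n/R^i}\Big) a_{n/R^j}\, \mu'_{n/R^{j+1}}\Big) + \Big(\prod_{i=0}^{k-1} b_{n/R^i}\Big)\mu_{n/R^k}.
\end{align*}
Using the bounds $a_{n/R^j} \le a + \epsilon$, $b_{n/R^i} \le b + \epsilon$, and $\mu'_{n/R^{j+1}} \le \mu' + \epsilon$ (all valid since every argument $n/R^j$ with $j \le k-1$ is at least $N_0$), the first sum is at most $(a+\epsilon)(\mu'+\epsilon)\sum_{j=0}^{\infty}(b+\epsilon)^j = \frac{(a+\epsilon)(\mu'+\epsilon)}{1-(b+\epsilon)}$, provided $\epsilon$ is small enough that $b + \epsilon < 1$ (which we may assume, since the conclusion is trivial otherwise). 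For the remainder term, $\mu_{n/R^k}$ ranges over the finitely many values of $\mu_m$ for $m < N_0$ (there are finitely many such $m$, since $m$ is a positive integer bounded by $N_0$), so it is bounded by some constant $M = M(N_0)$ independent of $n$; meanwhile $\prod_{i=0}^{k-1} b_{n/R^i} \le (b+\epsilon)^k$, and $k = k(n) \to \infty$ as $n \to \infty$, so this term tends to $0$.

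Finally, I would take $\limsup_{n\to\infty}$ on both sides: the remainder term vanishes, giving $\limsup_n \mu_n \le \frac{(a+\epsilon)(\mu'+\epsilon)}{1-(b+\epsilon)}$. Since $\epsilon > 0$ was arbitrary and the right-hand side is continuous in $\epsilon$ at $\epsilon = 0$, letting $\epsilon \to 0$ yields $\limsup_n \mu_n \le \frac{a\mu'}{1-b}$, as desired. The main obstacle — really more a bookkeeping subtlety than a genuine difficulty — is making the indexing of the unrolling rigorous: one must be careful that every argument appearing in the unrolled expression lies above the threshold $N_0$ where the asymptotic bounds apply, that the iteration depth $k(n)$ genuinely diverges (using $R > 1$), and that the ``leftover'' base-case values $\mu_m$ for $m < N_0$ are uniformly bounded so the tail term is controlled. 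A cleaner alternative that avoids explicit unrolling: set $L = \limsup_n \mu_n$ (finite — which itself needs a brief argument, e.g. the same unrolling shows $\mu_n$ is bounded), pass to a subsequence achieving $L$, and take $\limsup$ directly in $\mu_n \le a_n\mu'_{n/R} + b_n\mu_{n/R}$ to get $L \le a\mu' + bL$, hence $L(1-b) \le a\mu'$; but this requires knowing $L < \infty$ a priori and that $\limsup \mu_{n/R}$ along the chosen subsequence is still $\le L$, so the iterative argument above is the safer route to present.
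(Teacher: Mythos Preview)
The paper does not actually prove this lemma: it is stated with the attribution ``cf.~\cite{krivelevich2003covering}'' and then immediately used, so there is no in-paper argument to compare against. Your unrolling argument is the standard way to establish such a recursive $\limsup$ bound and is essentially correct.

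One small bookkeeping slip worth tightening: with your choice of $k$ as the largest integer with $n/R^{k-1}\ge N_0$, the term $\mu'_{n/R^{j+1}}$ at $j=k-1$ has argument $n/R^k<N_0$, so the bound $\mu'+\epsilon$ is not guaranteed there. The easy fix is to split off that single term and absorb it into the tail (it carries the factor $(b+\epsilon)^{k-1}$ and $\mu'_m$ for $m<N_0$ is bounded by a constant depending only on $N_0$), or simply unroll one step fewer. You already flag exactly this kind of indexing care as the main subtlety, so just carry it through. Also note that the lemma tacitly requires $b<1$ for the conclusion to be meaningful; you handle this correctly by taking $\epsilon$ small enough that $b+\epsilon<1$.
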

One convenient property of this lemma is that it incorporates the recursive assembly of the covering codes, without having to perform a thorough analysis of the induction start. We are now in a position to prove Theorem~\ref{thm:k:insertion}.
\begin{proof}[Proof of Theorem~\ref{thm:k:insertion}]
	Using Lemma \ref{lemma:recursive:R:insertion} and \ref{lemma:asym:density}, we obtain
	$\mu^*_{\mathsf{I}}(R) \leq \frac{c \mathrm{e} }{1-R^R \mathrm{e}^{-c}}\mu^*_{\mathsf{I}}(1)$. Minimizing $\frac{c \mathrm{e} }{1-R^R \mathrm{e}^{-c}}$ over $c$,  we can directly verify that $\min_c\frac{c \mathrm{e} }{1-R^R \mathrm{e}^{-c}} = \mathrm{e}(c_0+1)$, where $c_0$ is the solution to $c+1=\mathrm{e}^cR^{-R}$. Using standard bounds on $c_0$, we obtain the theorem.
\end{proof}
\subsection{Multiple-Deletion-Covering Codes} \label{subsec:multiple:deletion:covering:codes}
Proving the existence of small covering codes for deletions follows basically the same steps as the proof for the case of insertions. However, there are some subtle differences, such as the different definition of density for deletion-covering codes. Our main result is as follows.
\begin{thm} \label{thm:k:deletion}
	For any fixed $R\geq 2$ and $q \geq 2$,
	$$ \mu^{q,*}_{\mathsf{D}}(R) \leq \mathrm{e}(R \log R + \sqrt{2R\log R} +1) \mu^{q,*}_{\mathsf{D}}(1). $$
	In particular, for $q=2$,
	$$ \mu^{*}_{\mathsf{D}}(R) \leq \mathrm{e}(R \log R + \sqrt{2R\log R} +1). $$
\end{thm}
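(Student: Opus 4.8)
The plan is to mirror the insertion-covering argument from Section~\ref{subsec:multiple:deletion:covering:codes}'s predecessor, adapting each of the three building blocks to deletions and to the deletion-density normalization. First I would establish the deletion analogue of Lemma~\ref{lem:existence:almost:covering:R}: for every $n\geq R$ and $c>0$ there is a set $S\subseteq\bit^{n+R}$ of size at most $\frac{c2^n}{\VD(\cdot,R)}$-weighted, i.e. roughly $\frac{c2^n R!}{n^R}g_{n,R}$ with $g_{n,R}\to 1$, which covers $\bit^n\setminus T$ with $R$ deletions for some $T\subseteq\bit^n$ with $|T|\leq \mathrm{e}^{-c}2^n$. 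The random construction is the same: bucket the candidate codewords $\bfx\in\bit^{n+R}$ by their \emph{insertion}-ball size $\VI(\bfx,R)$ — but that ball is regular, so this is even easier than before; the subtlety now is on the other side, since a target word $\bfy\in\bit^n$ is covered by a varying number of candidates, namely $\VI(\bfy,R)$ of them, which is \emph{constant}. Wait — actually the asymmetry flips: here each $\bfy$ has exactly $\VI(n,R)$ covering candidates, so one can sample the codewords \emph{uniformly} with probability $\approx c/\VI(n,R)$, and the uncovered probability is cleanly $(1-c/\VI(n,R))^{\VI(n,R)}\leq \mathrm{e}^{-c}$. The size of $S$ is then just $c\,2^{n+R}/\VI(n,R)\approx c\,2^n R!/n^R$ up to lower-order terms, giving the required $g_{n,R}$.

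Next I would prove the deletion version of the semi-direct-sum Lemma~\ref{lem:semidir:k}: if $S\subseteq\bit^{n_1+R_1}$ covers $\bit^{n_1}\setminus T$ with $R_1$ deletions, $T\subseteq\bit^{n_1}$, $\C_1\subseteq\bit^{n_2-R_2}$ is an $R_2$-deletion-covering code and $\C_2\subseteq\bit^{n_2}$ an $R$-deletion-covering code, then $(S\otimes\C_2)\cup(T\otimes\C_1)$ — with lengths chosen so the pieces line up, here $S$ has length $n_1+R_1$ and $\C_1$ length $n_2+\dots$; I'd set it so the concatenated codewords have length $n+R$ — is an $R$-deletion-covering code of length $n$. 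The case split is identical to before: given $\bfx\bfy\in\bit^n$ with $|\bfx|=n_1$, $|\bfy|=n_2$, either $\bfx$ comes from deleting $R_1$ symbols of some $\bfs\in S$, then finish $\bfy$ off with $\C$-something using $R_2$ more deletions; or $\bfx\in T$, in which case $\bfx$ is a subsequence (via $0$ deletions on the first block) and $\bfy$ is covered by $\C_2$ with all $R$ deletions. The only care needed is bookkeeping the lengths so both blocks are subsequences with the right deletion counts, and I'd double-check that deleting symbols cannot cross the concatenation boundary in a way that breaks the argument — it cannot, since a subsequence of $\bfx\bfy$ that is $\bfx'\bfy'$ with $\bfx'$ a subsequence of $\bfx$ and $\bfy'$ of $\bfy$ is exactly what we produce.

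Then, combining these two lemmas as in Lemma~\ref{lemma:recursive:R:insertion}, I would derive a recursion of the form
\begin{align*}
\mu_{\mathsf{D}}(n,R) \leq\ & c\,\mathrm{e}\,\mu_{\mathsf{D}}(n/R+\text{(lower order)},1)\cdot\frac{(1-\Theta(R/n))}{\dots}\,g_{\frac{R-1}{R}n,R-1}\\
&+ R^R\mathrm{e}^{-c}\,\mu_{\mathsf{D}}(n/R,R)\cdot(1+\Theta(R/n))^R,
\end{align*}
using the estimates $\VD(\bfx,R)\approx n^R/R!$ in the "bulk" and the density normalization $\mu_{\mathsf{D}}^q(n,R)=\KD^q(n,R)n^R(q-1)^R/(q^nR!)$ from the definition. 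Here the non-uniformity of deletion balls is handled exactly as in the lower-bound proof (Theorem~\ref{thm:lower:bound:deletions}): split into words whose run count is near-typical versus atypically small, the latter being a $1/\mathrm{poly}(n)$ fraction that is absorbed into $g_{n,R}\to 1$. With the recursion in hand, apply Lemma~\ref{lemma:asym:density} to get $\mu_{\mathsf{D}}^{q,*}(R)\leq \frac{c\mathrm{e}}{1-R^R\mathrm{e}^{-c}}\mu_{\mathsf{D}}^{q,*}(1)$, minimize over $c$ exactly as in the proof of Theorem~\ref{thm:k:insertion} — the optimizer $c_0$ solves $c+1=\mathrm{e}^cR^{-R}$ and $\mathrm{e}(c_0+1)=\mathrm{e}(R\log R+\sqrt{2R\log R}+1)(1+o(1))$ by the same standard bounds — and for $q=2$ substitute $\mu_{\mathsf{D}}^{*}(1)\leq 1$, which holds since Corollary~\ref{cor:single:deletion:covering} gives $\KD(n,1)\leq 2^n/(n+1)$ and hence the binary single-deletion density tends to $1$.

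The main obstacle, as with the insertion case, is the almost-covering lemma's bookkeeping: getting the correct $g_{n,R}\to1$ factor requires carefully bounding the contribution of words with few runs (where $\VD(\bfx,R)$ is far below the generic $n^R/R!$), which in turn needs the Chernoff tail bound on the number of runs together with the crude bound $\VD(\bfx,R)\geq\binom{\rho(\bfx)-R}{R}$ — the same machinery used in Lemma~\ref{lem:existence:almost:covering:R} and in Theorem~\ref{thm:lower:bound:deletions}, but one must verify the sums telescope to a $1+o(1)$ factor rather than a constant. Beyond that, the only genuinely new wrinkle relative to the insertion proof is threading the asymmetric definition of deletion density through the recursion, and the necessary lemmas (the non-binary extension is deferred to Remark~\ref{rem:extension:non-binary}) are all straightforward adaptations of their insertion counterparts.
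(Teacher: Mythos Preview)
Your approach is essentially the same as the paper's, and it is correct: uniform sampling for the almost-covering set, a semi-direct-sum lemma for deletions, a recursive density inequality, and finally Lemma~\ref{lemma:asym:density} with the same optimization over $c$ as in the insertion case.

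One correction worth making: the ``main obstacle'' you flag at the end is not actually present. You already noticed, in your mid-paragraph ``Wait,'' that the asymmetry flips in your favor here: every target $\bfy\in\bit^n$ has exactly $\VI(n,R)$ covering candidates in $\bit^{n+R}$, a quantity that does not depend on $\bfy$. Hence uniform sampling at rate $c/\VI(n,R)$ gives $\Pr[\bfy\text{ uncovered}]\leq\mathrm{e}^{-c}$ for \emph{all} $\bfy$ simultaneously, and $|S|\leq c\,2^{n+R}/\VI(n,R)\leq c\,2^{n+R}/\binom{n}{R}$ directly. There is no need to bucket by $\VD(\bfx,R)$, no run-counting, and no Chernoff tail---the whole $f_{n,R}$ machinery from Lemma~\ref{lem:existence:almost:covering:R} disappears. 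This is exactly what the paper does in Lemma~\ref{lem:existence:almost:covering:R:del}: the deletion almost-covering lemma is markedly \emph{simpler} than its insertion counterpart, not harder. The only $(1+o(1))$ factors that appear in the recursion (Lemma~\ref{lemma:recursive:R:del}) come from converting between $\VI(n,R)$, $\binom{n}{R}$, and $n^R/R!$ in the density normalization, all of which are routine.

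Your concern about deletions crossing the concatenation boundary in the semi-direct-sum step is also a non-issue, as you correctly argue; and the length bookkeeping, while you state it somewhat loosely, is straightforward once you fix $S\subseteq\bit^{n_1+R_1}$, $T\subseteq\bit^{n_1}$, $\C_1\subseteq\bit^{n_2-R_1}$ (an $R_2$-deletion-covering code), and $\C_2\subseteq\bit^{n_2}$ (an $R$-deletion-covering code), so that both $S\otimes\C_1$ and $T\otimes\C_2$ land in $\bit^{n}$.
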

Note that compared to the case of insertions, we could use in Theorem \ref{thm:k:deletion} that for the binary case $\mu_{\mathsf{D}}^*(1)=1$, which results in a tighter bound also for the case of $R>1$. Since the outline of the proof for the case of deletions is similar to that of insertions, we merely state the ingredients of the proof to establish the analogy to the case of insertions. Again, we prove the theorem for the binary case, however the extension to non-binary alphabets is straightforward.
\begin{lemma}\label{lem:existence:almost:covering:R:del}
	For every $n$ and $R$ and every positive constant $c>0$ there exist sets $S\subseteq \bit^{\dim+R}$ and $T \subseteq \bit^{\dim}$ with
	$$|S| \leq \frac{c{2^{\dim+R}}}{\binom{n}{R}}$$
	such that $S$ covers $\bit^{\dim} \setminus T$ with $R$ deletions for some set $T$ of size at most
	$$|T| \leq \mathrm{e}^{-c} \mathrm{e}^{\frac{\VI(n,R)}{2^{n+R}}} 2^{\dim}.$$
\end{lemma}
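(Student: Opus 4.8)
The plan is to mirror the random-coding argument of Lemma~\ref{lem:existence:almost:covering:R}, but now with codewords of length $\dim+R$ covering words of length $\dim$ via $R$ deletions. First I would partition the potential codewords $\bit^{\dim+R}$ by the size of their deletion ball: let $\X_i = \{\bfx \in \bit^{\dim+R} : \VD(\bfx,R) = i\}$ and $m_i = |\X_i|$, where $i$ ranges up to $m \le \binom{\dim+R}{R}$. I would then build $S = S_1 \cup \dots \cup S_m$ by letting each $S_i$ be a uniformly random subset of $\X_i$ of size $\lceil c\, m_i / i \rceil$ (or all of $\X_i$ if that exceeds $m_i$). The point of this non-uniform sampling is that a word $\bfy$ with a large deletion ball is covered by many potential codewords, so we need relatively fewer of them sampled; sampling $\bfx$ with probability $\approx c/\VD(\bfx,R)$ equalizes the coverage probability.

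Next I would bound, for a fixed $\bfy \in \bit^{\dim}$, the probability that $\bfy$ is uncovered. Writing $\gamma_i = |\X_i \cap \BD(\bfy,R)|$ for the number of radius-$i$ potential codewords that cover $\bfy$ (note $\sum_i \gamma_i = \VD(\bfy,R)$), independence of the $S_i$ gives $\mathrm{P}[\bfy\text{ uncovered}] = \prod_i \mathrm{P}[S_i \cap \BD(\bfy,R) = \emptyset]$, and the hypergeometric ratio bound $\frac{m_i-\gamma_i-j}{m_i-j}\le\frac{m_i-\gamma_i}{m_i}$ together with $1-x\le \mathrm{e}^{-x}$ yields $\mathrm{P}[\bfy\text{ uncovered}] \le \exp(-\sum_i c\,\gamma_i/i)$. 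Here is the one genuine difference from the insertion case: in Lemma~\ref{lem:existence:almost:covering:R} one has $\VD(\bfx,R)\le\VD(\bfy,R)$ for $\bfx\in\BD(\bfy,R)$ (deleting cannot increase the deletion ball), which lets one lower-bound the exponent by $1$ exactly. For deletion-covering codes, $\bfx \in \bit^{\dim+R}$ is obtained from $\bfy$ by \emph{inserting} $R$ symbols, and inserting \emph{can} increase the deletion ball size, so $\gamma_i$ need not be supported on $i \le \VD(\bfy,R)$. Instead I would only use that $i \le m \le \binom{\dim+R}{R}$, giving $\sum_i \gamma_i/i \ge \VD(\bfy,R)/\binom{\dim+R}{R}$; since $\VD(\bfy,R)\ge\binom{\rho(\bfy)-R}{R}$ is in general much smaller than $\binom{\dim+R}{R}$, this does not give a clean factor of $1$. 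The exponent is therefore at least $c\cdot\VD(\bfy,R)/\binom{\dim+R}{R}$, and averaging $\mathrm{e}^{-c\,\VD(\bfy,R)/\binom{n}{R}}$ over $\bfy$ — or more crudely bounding $\VD(\bfy,R)\ge 1$ and using the average value of $\VD(\bfy,R)$, which is $\VI(n,R)/2^{\dim+R}\cdot 2^{\dim}/2^{\dim}$ by a double-counting of insertion/deletion pairs — produces the stated $\mathrm{E}[|T|] \le \mathrm{e}^{-c}\,\mathrm{e}^{\VI(n,R)/2^{n+R}}\,2^{\dim}$ after a convexity (Jensen) step on $\mathrm{e}^{-c t}$. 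Existence of a realization with $|T|$ at most its expectation then follows, and if necessary one fixes $|S|$ too by a Markov/union argument over the two events.

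Finally, for the size of $S$: by construction $|S| = \sum_i |S_i| \le m + c\sum_i m_i/i = m + c\sum_{\bfx\in\bit^{\dim+R}} 1/\VD(\bfx,R)$. Using $\VD(\bfx,R)\ge 1$ trivially bounds the sum by $2^{\dim+R}$, and since $m \le \binom{\dim}{R} \le 2^{\dim+R}/\binom{\dim}{R}\cdot\binom{\dim}{R}$... more precisely, absorbing $m$ into the main term, $|S| \le c\cdot 2^{\dim+R}/\binom{\dim}{R}\cdot\big(1 + o(1)\big)$; I would check that the crude bound $|S|\le c\,2^{\dim+R}/\binom{\dim}{R}$ as stated holds by a slightly more careful accounting of the $\sum 1/\VD$ term (grouping by run count, $\VD(\bfx,R)\ge\binom{\rho(\bfx)-R}{R}$, and the Chernoff tail bound on the run distribution exactly as in Lemma~\ref{lem:existence:almost:covering:R}), which shows $\sum_{\bfx}1/\VD(\bfx,R) = 2^{\dim+R}/\binom{\dim}{R}\cdot(1-o(1))$ and that $m$ is lower-order. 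The main obstacle is precisely the asymmetry flagged above: unlike insertions, where the exponent in the uncovered-probability bound collapses to a constant, for deletions one must carry the factor $\VI(n,R)/2^{\dim+R}$ through (it is the average deletion-ball size) and control it via Jensen's inequality, which is the source of the extra $\mathrm{e}^{\VI(n,R)/2^{n+R}}$ in the bound on $|T|$; everything else is a direct transcription of the insertion argument.
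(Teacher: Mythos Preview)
Your proposal misses the key structural simplification that makes the deletion case \emph{easier} than the insertion case, and as a result takes an unnecessarily complicated route with several genuine errors.

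The crucial observation is this: a codeword $\bfx\in\bit^{\dim+R}$ covers $\bfy\in\bit^{\dim}$ by $R$ deletions precisely when $\bfx\in\BI(\bfy,R)$. Hence the number of potential codewords that can cover a fixed $\bfy$ is $|\BI(\bfy,R)|=\VI(\dim,R)$, which is \emph{the same for every $\bfy$}. This regularity of insertion balls means there are no ``hard-to-cover'' words, so non-uniform sampling is pointless. The paper simply takes $S$ to be a uniformly random subset of $\bit^{\dim+R}$ of cardinality $|S|=\lfloor c\,2^{\dim+R}/\VI(\dim,R)\rfloor$; then for every $\bfy$,
\[
\mathrm{P}[\bfy\text{ uncovered}]=\frac{\binom{2^{\dim+R}-\VI(\dim,R)}{|S|}}{\binom{2^{\dim+R}}{|S|}}\le\Bigl(1-\tfrac{\VI(\dim,R)}{2^{\dim+R}}\Bigr)^{|S|}\le \mathrm{e}^{-c}\,\mathrm{e}^{\VI(\dim,R)/2^{\dim+R}},
\]
where the last $\mathrm{e}^{\VI(\dim,R)/2^{\dim+R}}$ arises only from the floor in $|S|$. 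The bound $|S|\le c\,2^{\dim+R}/\binom{\dim}{R}$ is immediate from $\VI(\dim,R)\ge\binom{\dim}{R}$. That is the entire proof.

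Your argument, by contrast, stratifies by $\VD(\bfx,R)$ and samples non-uniformly, which buys nothing here. More seriously, you repeatedly write $\sum_i\gamma_i=\VD(\bfy,R)$ and reason about $\VD(\bfy,R)$, but the number of $\bfx\in\bit^{\dim+R}$ covering $\bfy$ is $\VI(\dim,R)$, not $\VD(\bfy,R)$; your $\gamma_i$ should be $|\X_i\cap\BI(\bfy,R)|$ and their sum is the constant $\VI(\dim,R)$. Once this is corrected, your exponent $\sum_i c\gamma_i/i$ no longer needs the awkward $\VD(\bfy,R)/\binom{\dim+R}{R}$ lower bound, and the whole Jensen step disappears. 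In fact your proposed Jensen step is in the wrong direction anyway: $t\mapsto\mathrm{e}^{-ct}$ is convex, so Jensen gives $\mathrm{E}[\mathrm{e}^{-cX}]\ge \mathrm{e}^{-c\,\mathrm{E}[X]}$, which cannot produce the upper bound on $\mathrm{E}[|T|]$ you claim. The factor $\mathrm{e}^{\VI(\dim,R)/2^{\dim+R}}$ has nothing to do with convexity or averaging over $\bfy$; it is purely the rounding loss from taking $\lfloor\cdot\rfloor$ in $|S|$.
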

\begin{proof}
	We prove the lemma using a random set $S$ and then compute the expected number of words that are not covered by such a random choice. We choose $S$ to be a uniformly random set of cardinality $|S| =  \left\lfloor c{2^{n+R}}/{\VI(n,R)} \right\rfloor $, where each subset has the same probability. By this choice of $S$, the probability for any $\bfy \in \bits$ to be not covered by $S$ is given by
	\begin{align*}
	\mathrm{P}[\bfy \text{ is uncovered}] &= \frac{\binom{2^{n+R}-\VI(n,R)}{|S|}}{\binom{2^{n+R}}{|S|}}  \leq \left( \frac{2^{n+R}-\VI(n,R)}{2^{n+R}} \right)^{|S|} \\
	&= \left( 1-\frac{\VI(n,R)}{2^{n+R}} \right)^{|S|} \leq \mathrm{e}^{-c} \mathrm{e}^{\frac{\VI(n,R)}{2^{n+R}}}.
	\end{align*}
	The bound on the size of $S$ follows from $\VI(n,R) \geq \binom{n}{R}$.
\end{proof}
\begin{lemma}\label{lem:semidir:k:del}
	Let $S\subseteq \bit^{\dim_1+R_1},T \subseteq \bit^{\dim_1}$ be such that $S$ covers $\bit^{\dim_1} \setminus T$ with $R_1$ deletions. Denote by $\C_1 \subseteq \bit^{\dim_2-R_1}$ an $R_2$-deletion-covering code of length $\dim_2$ and by $\C_2 \subseteq \bit^{\dim_2}$ an $R$-insertion-covering code of length $n_2$. We have that
	\[ (S \otimes \C_1) \cup (T \otimes \C_2) \]
	is an $R=R_1+R_2$-deletion-covering code of length $\dim=\dim_1+\dim_2$ with size at most $|S|\cdot|\C_1| + |T|\cdot|\C_2|$.
\end{lemma}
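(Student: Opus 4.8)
The plan is to follow the proof of Lemma~\ref{lem:semidir:k}, the insertion analogue, essentially verbatim, only changing the length and radius arithmetic to match deletions. The first step is to check that the proposed set really lives in $\bit^{\dim}$ with $\dim=\dim_1+\dim_2$: since $S\subseteq\bit^{\dim_1+R_1}$ and $\C_1\subseteq\bit^{\dim_2-R_1}$ we have $S\otimes\C_1\subseteq\bit^{(\dim_1+R_1)+(\dim_2-R_1)}=\bit^{\dim}$, and since $T\subseteq\bit^{\dim_1}$ and $\C_2\subseteq\bit^{\dim_2}$ we have $T\otimes\C_2\subseteq\bit^{\dim}$, so both components have the right length.

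The second step is the covering argument. Take an arbitrary word $\bfz\in\bit^{\dim-R}$ and write it as a concatenation $\bfz=\bfx\bfy$ with $\len{\bfx}=\dim_1$ and $\len{\bfy}=\dim_2-R$; this is consistent because $\dim_1+(\dim_2-R)=\dim-R$. The choice $\len{\bfx}=\dim_1$ is dictated by the fact that $S$ is guaranteed to (almost) cover precisely $\bit^{\dim_1}$ with $R_1$ deletions, with exceptional set $T\subseteq\bit^{\dim_1}$. Now split into cases. If $\bfx\notin T$, pick $\bfs\in S$ with $\bfx\in\BD(\bfs,R_1)$; since $\C_1\subseteq\bit^{\dim_2-R_1}$ is an $R_2$-deletion-covering code it covers all of $\bit^{(\dim_2-R_1)-R_2}=\bit^{\dim_2-R}$, so pick $\bfc_1\in\C_1$ with $\bfy\in\BD(\bfc_1,R_2)$. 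Performing the $R_1$ deletions that send $\bfs$ to $\bfx$ in the prefix and the $R_2$ deletions that send $\bfc_1$ to $\bfy$ in the suffix of $\bfs\bfc_1$ shows $\bfz=\bfx\bfy\in\BD(\bfs\bfc_1,R_1+R_2)$, with $\bfs\bfc_1\in S\otimes\C_1$. If instead $\bfx\in T$, pick $\bfc_2\in\C_2$ covering $\bfy\in\bit^{\dim_2-R}$ with $R$ deletions (here one uses that $\C_2$ is an $R$-deletion-covering code); all $R$ deletions then occur in the suffix, so $\bfz=\bfx\bfy\in\BD(\bfx\bfc_2,R)$, with $\bfx\bfc_2\in T\otimes\C_2$. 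In either case $\bfz$ is covered with exactly $R=R_1+R_2$ deletions, so $(S\otimes\C_1)\cup(T\otimes\C_2)$ is an $R$-deletion-covering code of length $\dim$; the size bound $|S|\cdot|\C_1|+|T|\cdot|\C_2|$ follows from the union bound together with $|A\otimes B|\le|A|\cdot|B|$.

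I do not expect a genuine obstacle: the argument is pure length-counting plus a two-way case split, exactly as in the insertion version. The only point requiring care is the radius accounting in the first case — one must ensure the prefix absorbs exactly $R_1$ deletions and the suffix exactly $R_2$, which is precisely why $S$ is defined over words $R_1$ longer than $\bit^{\dim_1}$ and $\C_1$ is taken of length $\dim_2-R_1$, so that $R_2$ further deletions bring it to length $\dim_2-R$. A secondary, purely notational remark: for the construction to make sense, $\C_2$ must be an $R$-deletion-covering code (rather than ``insertion,'' as the current statement has it), and it is the deletion-covering property of $\C_2$ that is used in the second case above.
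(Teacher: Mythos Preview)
Your proof is correct and is exactly the argument the paper has in mind: the paper in fact omits the proof of this lemma entirely, stating only that it ``is proven in the same manner as Lemma~\ref{lem:semidir:k},'' which is precisely the route you have taken. Your observation that $\C_2$ must be an $R$-\emph{deletion}-covering code (and that $\C_1$ has length $\dim_2-R_1$, not $\dim_2$) correctly identifies typos in the statement.
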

We omit the proof here as it is proven in the same manner as Lemma \ref{lem:semidir:k}. Using this construction of codes, we can again prove the following asymptotic relationship.
\begin{lemma} \label{lemma:recursive:R:del}
	For any $n \geq R$ and $c>0$,
	\begin{align*}
	\mu_{\mathsf{D}}(n,R) \leq  & ~ c\mathrm{e}\mu_\mathsf{D}(n/R+R-1,1) \gamma_\mathsf{D}(n,R) \\
	&+ R^R \mathrm{e}^{-c} \mu_{\mathsf{D}}(n/R,R) \gamma_\mathsf{D}'(n,R),
	\end{align*}
	for some functions $\gamma_\mathsf{D}(n,R)$ and $ \gamma_\mathsf{D}'(n,R)$ with 
	$$\lim_{n\rightarrow \infty} \gamma_\mathsf{D}(n,R)=\lim_{n\rightarrow \infty} \gamma_\mathsf{D}'(n,R)=1.$$
\end{lemma}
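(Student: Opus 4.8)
The plan is to mirror the proof of Lemma~\ref{lemma:recursive:R:insertion}, but using the deletion analogues of the ingredients, namely the almost-covering code from Lemma~\ref{lem:existence:almost:covering:R:del} and the semi-direct-sum construction from Lemma~\ref{lem:semidir:k:del}. First I would split $n = n_1 + n_2$ and $R = R_1 + R_2$, and specialize, exactly as in the insertion case, to $R_1 = R-1$, $R_2 = 1$, $n_1 = \frac{R-1}{R}n$, and $n_2 = \frac{n}{R}$. Let $S\subseteq\bit^{n_1+R_1}$ and $T\subseteq\bit^{n_1}$ be the sets furnished by Lemma~\ref{lem:existence:almost:covering:R:del} applied with length parameter $n_1$ and radius $R_1$; let $\C_1\subseteq\bit^{n_2-R_1}$ be an optimal $R_2$-deletion-covering code of length $n_2$ and $\C_2\subseteq\bit^{n_2}$ an optimal $R$-deletion-covering code of length $n_2$. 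By Lemma~\ref{lem:semidir:k:del}, $(S\otimes\C_1)\cup(T\otimes\C_2)$ is an $R$-deletion-covering code of length $n$, with size at most $|S|\cdot|\C_1| + |T|\cdot|\C_2|$.

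Next I would substitute the bounds. For $|S|$ use $|S|\le c\,2^{n_1+R_1}/\binom{n_1}{R_1}$; for $|T|$ use $|T|\le \mathrm{e}^{-c}\,\mathrm{e}^{\VI(n_1,R_1)/2^{n_1+R_1}}\,2^{n_1}$; for the covering codes write $|\C_1|$ and $|\C_2|$ in terms of the densities $\mu_\mathsf{D}(n_2,R_2)$ and $\mu_\mathsf{D}(n_2,R)$ via the definition of $\mu_\mathsf{D}^q(n,R)$, i.e. $|\C_j| = \mu_\mathsf{D}\cdot 2^{n_2}\,R_j!/(n_2^{R_j})$ up to the usual $(1\pm o(1))$ factors coming from $\binom{n_2}{R_j}$ vs.\ $n_2^{R_j}/R_j!$. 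Dividing the resulting bound on $|(S\otimes\C_1)\cup(T\otimes\C_2)|$ by $2^n R!/n^R$ gives a bound on $\mu_\mathsf{D}(n,R)$. Collecting the binomial ratios $\binom{n}{R}\big/\big(\binom{n_1}{R_1}\binom{n_2}{R_2}\big)$, $\binom{n}{R}\big/\binom{n_2}{R}$, the factor $\mathrm{e}^{\VI(n_1,R_1)/2^{n_1+R_1}}$, and the slack between falling products and powers, I would package everything into two error functions $\gamma_\mathsf{D}(n,R)$ and $\gamma_\mathsf{D}'(n,R)$, exactly as $f_{n,R}$ and the $(1+2R/n)^R/(1-R^2/n)$ terms were packaged in the insertion case. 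With the specialization $R_1=R-1$, $R_2=1$, $n_1/n_2 = R-1$, $n_2 = n/R$, one reads off the leading coefficients $c\mathrm{e}$ (from the $|S|\cdot|\C_1|$ term, noting $R_1! = (R-1)! $ and $\binom{R}{R_1}=R$ combine to give the $\mathrm{e}$ and the $1/R$ that cancels against $R^R\cdot$ nothing\dots) and $R^R\mathrm{e}^{-c}$ (from the $|T|\cdot|\C_2|$ term, where $(n/n_2)^R = R^R$), giving precisely the claimed inequality.

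The claim $\lim_{n\to\infty}\gamma_\mathsf{D}(n,R)=\lim_{n\to\infty}\gamma_\mathsf{D}'(n,R)=1$ for fixed $R$ then follows term by term: each binomial ratio tends to the ratio of the corresponding powers of $n$, which by the choice $n_1 = \frac{R-1}{R}n$, $n_2=\frac{n}{R}$ is arranged to cancel; $\mathrm{e}^{\VI(n_1,R_1)/2^{n_1+R_1}}\to 1$ since $\VI(n_1,R_1)=O(n_1^{R_1})$ is super-polynomially smaller than $2^{n_1+R_1}$; and the $(n-R)^R/R!\le\binom{n}{R}\le n^R/R!$ sandwich forces the remaining factors to $1$. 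I expect the main obstacle to be purely bookkeeping: tracking the several competing binomial and factorial factors carefully enough that the leading constants come out as exactly $c\mathrm{e}$ and $R^R\mathrm{e}^{-c}$ (rather than something off by a factor of $R$), and confirming that every leftover factor can legitimately be absorbed into a $\gamma_\mathsf{D}$ or $\gamma_\mathsf{D}'$ that converges to $1$ — there is no conceptual difficulty here beyond what already appears in Lemma~\ref{lemma:recursive:R:insertion}, which is why the paper (reasonably) only states the ingredients and defers the routine verification.
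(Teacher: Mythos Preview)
Your proposal is correct and matches the paper's proof essentially step for step: the paper also applies Lemma~\ref{lem:existence:almost:covering:R:del} at parameters $(n_1,R_1)$, combines with optimal codes $\C_1,\C_2$ via Lemma~\ref{lem:semidir:k:del}, bounds the size, divides through by $2^nR!/n^R$, and then specializes to $R_1=R-1$, $R_2=1$, $n_1=\tfrac{R-1}{R}n$, $n_2=\tfrac{n}{R}$, absorbing all lower-order factors into $\gamma_\mathsf{D},\gamma_\mathsf{D}'$. The only slip is that $\C_1$ has length $n_2-R_1$, so its density is $\mu_\mathsf{D}(n_2-R_1,R_2)$ and its size is $\mu_\mathsf{D}(n_2-R_1,R_2)\,2^{n_2-R_1}R_2!/(n_2-R_1)^{R_2}$ rather than the $n_2$-indexed expression you wrote; this is exactly the bookkeeping you flagged and it does not affect the argument (the $2^{-R_1}$ cancels against the $2^{n_1+R_1}$ from $|S|$, and the leading constant indeed comes out as $c\bigl(\tfrac{R}{R-1}\bigr)^{R-1}\le c\mathrm{e}$).
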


\begin{proof}
	Let $S\subseteq \bit^{\dim_1+R_1},T \subseteq \bit^{\dim_1}$ be such that $S$ covers $\bit^{\dim_1} \setminus T$ with $R_1$ deletions. Denote by $\C_1 \subseteq \bit^{\dim_2-R_1}$ an $R_2$-deletion-covering code of length $\dim_2-R_1$ and by $\C_2 \subseteq \bit^{\dim_2}$ an $R$-deletion-covering code of length $n_2$, where $n_1+n_2=n$, $n_1 = \frac{y-1}{y}n$, $n_2 = \frac{n}{y}$, and $R_1+R_2=R$. We compute the size of the tensorization $(S \otimes \mathcal{C}_1) \cup (T \otimes \mathcal{C}_2)$, which, by Lemma~\ref{lem:semidir:k:del},  is an $R$-deletion-covering code of length $n$. To begin with, $|(S \otimes \mathcal{C}_1) \cup (T \otimes \mathcal{C}_2)| \leq |S|\cdot|\C_1| + |T|\cdot|\C_2|$. Using Lemma~\ref{lem:existence:almost:covering:R:del}, we can bound the size of $S$ and $T$ to obtain
	\begin{align*}
	\KD(n,R) \leq&~  \frac{c 2^{n} \mu_\mathsf{D}(n_2-R_1,R_2)R_2!}{\binom{n_1}{R_1} (n_2-R_1)^{R_2}} \\
	&+\frac{ \mathrm{e}^{-c}\mathrm{e}^{\frac{\VI(n,R)}{2^{n+R}}}2^{n}\mu_\mathsf{D}(n_2,R)R!}{n_2^R}.
	\end{align*}
	Since $(S \otimes \mathcal{C}_1) \cup (T \otimes \mathcal{C}_2)$ is a covering code of length $n$ and covering radius $R$, we obtain
	\begin{align*}
	&\mu_\mathrm{D}(n,R)\hspace{-.08cm}\leq  \hspace{-.08cm}\frac{c \mu_\mathsf{D}(n_2\hspace{-.08cm}-\hspace{-.08cm}R_1,R_2)R_2!n^R}{\binom{n_1}{R_1} (n_2-R_1)^{R_2}R!} \hspace{-.08cm}+\hspace{-.08cm}\frac{\mathrm{e}^{-c} \mathrm{e}^{\frac{\VI(n,R)}{2^{n+R}}}\mu_\mathsf{D}(n_2,R)n^R}{n_2^R} \\
	& \leq \frac{c n^R \mu_\mathsf{D}(n_2-R_1,R_2) }{n_1^{R_1} n_2^{R_2} \binom{R}{R_1}} \frac{1}{(1-R_1^2/n_1)(1-R_1R_2/n_2)} \\[-1ex]
	& \qquad\qquad\qquad\qquad\qquad+ \mathrm{e}^{-c}  \left(\frac{n}{n_2}\right)^R \mu_\mathsf{D}(n_2,R)\mathrm{e}^{\frac{\VI(n,R)}{2^{n+R}}}.
	\end{align*}
	Inserting $R_1=R-1$, $R_2=1$, and $y=R$ yields the lemma. 
\end{proof}
\begin{remark} \label{rem:extension:non-binary}
	Proving Theorem \ref{thm:k:insertion} for non-binary words follows basically the same steps as for the binary case with only slight differences. To start with, the $q$-ary version of Lemma \ref{lem:existence:almost:covering:R} is obtained by replacing the binary expressions with their $q$-ary analogues, and letting $f_{n,R}$ depend also on $q$, but still converge to $1$ for $n \rightarrow \infty$. Lemma \ref{lem:semidir:k} directly extends to the non-binary case and Lemma \ref{lemma:recursive:R:insertion} can be extended by allowing additional factors in the recursive expression that depend on $q$ and converge to $1$ and using standard bounds on $\VI^q(n,R)$. The remaining steps are equivalent to the binary case. The proof of Theorem \ref{thm:k:deletion} for $q$-ary words is obtained analogously.
\end{remark}
\begin{remark}
	We note that Theorems \ref{thm:k:insertion} and \ref{thm:k:deletion} imply the asymptotic bounds in Table \ref{table:code-results}. More precisely, by the properties of the $\limsup$, for any $\epsilon >0$, there exists a value $n_0$, such that for all $n > n_0$, we have that $\mu^q_\mathsf{I}(n,R) \leq \mu^{q,*}_\mathsf{I}(R) (1+\epsilon)$  and $\mu^q_\mathsf{D}(n,R) \leq \mu^{q,*}_\mathsf{D}(R) (1+\epsilon)$.
\end{remark}

\section{Conclusion}\label{sec:conc}

This paper studied covering codes for insertions or deletions. We proved general sphere-covering lower bounds on the size of insertion- and deletion-covering codes.  We gave constructions for single-deletion- and single-insertion-covering codes that implied improved upper bounds on the code size. Finally, we presented upper bounds on the optimal density of multiple-insertion- and multiple-deletion-covering codes. 
	
There are many avenues for future work. There are gaps   between our lower  and upper  bounds for covering codes. For large covering radius $R$, we expect that much smaller codes should be possible, e.g., for $R = \epsilon \dim$ with $\epsilon \in (0,1/2)$.  Many of our proofs are existential in nature;  it would be nice to have explicit constructions. Establishing the exact size of deletion  balls  is a long-standing open question with many implications. On the practical side, there may be interesting  applications of covering codes based on insertions and deletions.  Finally, it would be worthwhile to extend  these results to edit distance, in which insertions, deletions, and substitutions are considered.

\appendix
\section{Omitted Calculations and Approximations}

\begin{proposition}\label{prop:insertion:approx:lb}
	For fixed $R$ and large $n$, it holds that 
\[\frac{q^{\dim+R}}{\VI^q(n,R)}
\geq 
\frac{R!q^{\dim+R}}{n^R(q-1)^R}(1-o(1)).\]
\end{proposition}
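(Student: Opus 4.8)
The plan is to prove the equivalent upper bound on the insertion-ball size, namely that for fixed $R$
\[
\VI^q(n,R) = \sum_{i=0}^{R} \binom{n+R}{i}(q-1)^i \leq \frac{n^R(q-1)^R}{R!}\bigl(1+o(1)\bigr),
\]
and then to pass to reciprocals. Indeed, dividing the claimed inequality through by $q^{n+R}>0$ reduces it to $\VI^q(n,R)^{-1} \geq \frac{R!}{n^R(q-1)^R}(1-o(1))$, and since $\tfrac{1}{1+o(1)} = 1-o(1)$, this follows at once from the displayed bound.

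To establish the displayed bound I would separate the top term $i=R$ from the rest. For the top term, $\binom{n+R}{R} = \frac{1}{R!}\prod_{j=1}^{R}(n+j) \leq \frac{(n+R)^R}{R!} = \frac{n^R}{R!}\bigl(1+\tfrac{R}{n}\bigr)^R$, and since $R$ is fixed, $\bigl(1+\tfrac{R}{n}\bigr)^R = 1+o(1)$; hence $\binom{n+R}{R}(q-1)^R \leq \frac{n^R(q-1)^R}{R!}(1+o(1))$. For the lower-order terms, each summand with $0 \leq i \leq R-1$ satisfies $\binom{n+R}{i}(q-1)^i \leq (n+R)^i(q-1)^i = O(n^{R-1})$, where the hidden constant depends only on $q$ and $R$. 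Since there are exactly $R$ such terms and $R$ is held fixed, their total contribution is $O(n^{R-1}) = o(n^R) = \frac{n^R(q-1)^R}{R!}\cdot o(1)$. Adding the two estimates gives the displayed bound, and inverting completes the proof.

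There is no genuine obstacle here: the argument is a routine asymptotic estimate. The only point requiring (mild) care is that $R$ must be held fixed, so that both the factor $\bigl(1+\tfrac{R}{n}\bigr)^R$ and the number $R$ of lower-order terms remain bounded as $n\to\infty$; the constants hidden in the $O(\cdot)$ and $o(\cdot)$ notation are permitted to depend on $q$ and $R$.
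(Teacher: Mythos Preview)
Your proposal is correct and follows essentially the same approach as the paper: both arguments isolate the dominant term $\binom{n+R}{R}(q-1)^R$, bound it above by $\frac{(n+R)^R}{R!}(q-1)^R = \frac{n^R(q-1)^R}{R!}(1+o(1))$, show the remaining $R$ summands contribute only $o(n^R)$, and then pass to reciprocals. The paper carries explicit constants through a longer chain of inequalities, whereas you use $o(\cdot)$ notation throughout, but the content is the same.
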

\begin{proof}
\begin{align*}
&\frac{q^{\dim+R}}{\sum_{i=0}^{R} \binom{\dim+R}{i}(q-1)^i}& \\
&= \frac{q^{\dim+R}}{\binom{n+R}{R}(q-1)^R\left(1+\sum_{i=0}^{R-1} \frac{\binom{n+R}{i}(q-1)^{i-R}}{\binom{n+R}{R}}\right) } & \\
& \overset{(a)}{\geq} \frac{q^{\dim+R}}{\binom{n+R}{R}(q-1)^R} \left(1-\sum_{i=0}^{R-1} \frac{\binom{n+R}{i}(q-1)^{i-R}}{\binom{n+R}{R}}\right) &\\
&\overset{(b)}{\geq} \frac{q^{\dim+R}}{\binom{n+R}{R}(q-1)^R} \left(1-R^R\sum_{i=0}^{R-1} \frac{(n+R)^i(q-1)^{i-R}}{(n+R)^R}\right) & \\
& \overset{(c)}{\geq} \frac{q^{\dim+R}}{\binom{n+R}{R}(q-1)^R} \left(1-\frac{R^{R+1}(q-1)^{-1}}{n+R}\right) & \\
&\overset{(d)}{\geq} \frac{R!q^{\dim+R}}{(n+R)^R(q-1)^R} \left(1-\frac{R^{R+1}(q-1)^{-1}}{n+R}\right) & \\
&= \frac{R!q^{\dim+R}}{n^R(1+\frac{R}{n})^R(q-1)^R} \left(1-\frac{R^{R+1}(q-1)^{-1}}{n+R}\right) & \\
&\overset{(e)}{\geq} \frac{R!q^{\dim+R}}{n^R(q-1)^R} \left(1 - \frac{R^2}{n}\right) \left(1-\frac{R^{R+1}(q-1)^{-1}}{n+R}\right) & \\
&= \frac{R!q^{\dim+R}}{n^R(q-1)^R}(1-o(1)),& 
\end{align*}
where we used in $(a),(e)$ the inequality $(1+x)^r \geq 1+rx$ for any $x>-1$ and any $r\leq0$ or $r\geq1$. In inequalities $(b),(d)$ we used that $n^R/R^R\leq \binom{n}{R} \leq n^R/R!$. Further, in $(c)$ we used that the largest of the terms in the sum is $i=R-1$ to bound the sum. Finally, the statement holds for fixed $R$ and large $n$.
%
%

\end{proof}

\begin{proposition}\label{prop:deletion:approx:lb}
	For fixed $R$ and large $n$, we have the asymptotic relation
	\[q\sum_{r=1}^{\dim-R} \frac{(q-1)^{r-1}\binom{\dim-R-1}{r-1}}{\binom{r+3R-1}{R}} \geq \frac{R!q^{n}}{n^R(q-1)^{R}}  \left( 1 -o(1)  \right).\]		
\end{proposition}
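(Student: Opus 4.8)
The plan is to recognize the sum as an expectation over the run‑distribution of a uniformly random word and then apply Jensen's inequality. Reindexing by $j = r-1$ and writing $m = \dim - R - 1$, the left‑hand side equals
\[
q\sum_{j=0}^{m}\frac{\binom{m}{j}(q-1)^{j}}{\binom{j+3R}{R}} \;=\; q^{\dim-R}\,\E\!\left[\frac{1}{\binom{J+3R}{R}}\right],
\]
where $J \sim \mathrm{Bin}\bigl(m,\tfrac{q-1}{q}\bigr)$: indeed $\binom{m}{j}(q-1)^j/q^m$ is precisely the probability mass function of $J$ (using $\sum_{j}\binom{m}{j}(q-1)^j = q^m$), and $J+1$ is exactly the number of runs of a uniformly random word in $\Sigma_q^{\dim-R}$.

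The first step is a convexity check. The function $g(x) := 1\big/\binom{x+3R}{R}$, where $\binom{x+3R}{R} = \frac{1}{R!}(x+3R)(x+3R-1)\cdots(x+2R+1)$ is a product of $R$ linear factors that are positive for $x \geq 0$, satisfies $\log g(x) = \log R! - \sum_{k=2R+1}^{3R}\log(x+k)$. Each summand $-\log(x+k)$ is convex on $(-k,\infty)$, so $\log g$ is convex and hence $g = \exp(\log g)$ is convex on $[0,\infty)$. Jensen's inequality then gives
\[
\E\!\left[\frac{1}{\binom{J+3R}{R}}\right] \;\geq\; \frac{1}{\binom{\E[J]+3R}{R}}, \qquad \E[J] = m\cdot\frac{q-1}{q} = \frac{(\dim-R-1)(q-1)}{q}.
\]

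It then remains to bound the denominator. Since $\binom{x+3R}{R} \leq (x+3R)^R/R!$ and, for fixed $R$ and $q$, $\E[J]+3R = \frac{(\dim-R-1)(q-1)}{q} + 3R = \frac{(q-1)\dim}{q}\bigl(1+O(1/\dim)\bigr)$, we obtain $\binom{\E[J]+3R}{R} \leq \frac{1}{R!}\bigl(\tfrac{(q-1)\dim}{q}\bigr)^R(1+o(1))$. Chaining the three displays,
\[
q\sum_{r=1}^{\dim-R}\frac{(q-1)^{r-1}\binom{\dim-R-1}{r-1}}{\binom{r+3R-1}{R}} \;\geq\; q^{\dim-R}\cdot\frac{R!\,q^{R}}{(q-1)^R\dim^R}(1-o(1)) \;=\; \frac{R!\,q^{\dim}}{\dim^R(q-1)^R}(1-o(1)),
\]
which is the claim. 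The only genuine subtlety is the convexity argument that licenses Jensen; the rest is routine tracking of $1\pm o(1)$ factors, where one crucially uses that $R$ and $q$ are fixed while $\dim\to\infty$. One could also bypass Jensen by summing only over $r$ in a window of width $o(\dim)$ around $(q-1)\dim/q$: a Chernoff bound shows this window carries a $1-o(1)$ fraction of the binomial mass, while on it $\binom{r+3R-1}{R}$ is at most $\bigl(\tfrac{(q-1)\dim}{q}(1+o(1))\bigr)^R/R!$.
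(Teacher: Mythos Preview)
Your proof is correct and takes a genuinely different route from the paper's. The paper proceeds by a direct algebraic manipulation: it rewrites $\binom{n-R-1}{r-1}$ via the identity $\binom{n-R-1}{r-1}=\frac{r(r+1)\cdots(r+R-1)}{(n-1)\cdots(n-R)}\binom{n-1}{r+R-1}$, then lower-bounds the resulting ratio $\frac{r(r+1)\cdots(r+R-1)}{(r+2R)\cdots(r+3R-1)}$ by $1-\frac{2R^2}{r+2R}$, truncates the sum at a threshold $r\geq b\approx 2R^2\sqrt{n}$ so that this ratio is at least $1-1/\sqrt{n}$, and finally shifts the index so that the remaining sum is a near-complete binomial expansion of $q^{n-1}$. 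Your approach instead interprets the sum probabilistically as $q^{n-R}\,\E[g(J)]$ with $J$ binomial and $g(x)=1/\binom{x+3R}{R}$, checks that $g$ is log-convex (hence convex) on $[0,\infty)$, and applies Jensen once. This is cleaner and avoids the somewhat ad hoc choice of truncation threshold; the paper's argument, on the other hand, is entirely elementary and makes the $o(1)$ term more explicit. Your closing remark about restricting to a window around $\E[J]$ and using Chernoff is in fact close in spirit to what the paper does, though the paper truncates only from below and at a different scale.
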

\begin{proof}
	We first note that {$$\binom{n-R-1}{r-1} = \frac{r(r+1)\cdots (r+R-1)}{(n-1)(n-2)\cdots (n-R)} \binom{n-1}{r+R-1}.$$}
	Hence,
	\begin{align*}
	& q\sum_{r=1}^{\dim-R} \frac{(q-1)^{r-1}\binom{\dim-R-1}{r-1}}{\binom{r+3R-1}{R}} 
	 = \frac{q}{(n-1)(n-2)\cdots (n-R)} & \\
	& \quad\cdot \sum_{r=1}^{\dim-R} \frac{(q-1)^{r-1}r(r+1)\cdots (r+R-1)}{\binom{r+3R-1}{R}}\binom{\dim-1}{r+R-1} & \\
	& \geq \frac{qR!}{n^R} \sum_{r=1}^{\dim-R} \frac{(q-1)^{r-1}r(r+1)\cdots (r+R-1)}{(r+2R)\cdots (r+3R-1)}\binom{\dim-1}{r+R-1}. &
	\end{align*}
	For $0\leq i\leq R-1$, it holds that $\frac{r+i}{r+2R+i} \geq\frac{r}{r+2R}$ and so
	\begin{align*}
	\frac{r(r+1)\cdots (r+R-1)}{(r+2R)\cdots (r+3R-1)} & \geq \left(\frac{r}{r+2R}\right)^R & \\
	& = \left(1-\frac{2R}{r+2R}\right)^R & \\
	& \geq 1- \frac{2R^2}{r+2R},& 
	\end{align*}
	where the last step holds by the 
	the inequality $(1-x)^d\geq 1-xd $ for $d\geq 1$. Furthermore, for $r\geq 2R(\sqrt{\dim}R-1) \overset{\mathrm{def}}{=} b$, we have that
	$1- \frac{2R^2}{r+2R} \geq 1-\frac{1}{\sqrt{\dim}}$, and thus we deduce that
	\begin{align*}
		& q\sum_{r=1}^{\dim-R} \frac{(q-1)^{r-1}\binom{\dim-R-1}{r-1}}{\binom{r+3R-1}{R}}  & \\
		&  \geq q(1-\frac{1}{\sqrt{\dim}})\frac{R!}{n^R} \sum_{r=b}^{\dim-R} (q-1)^{r-1}\binom{\dim-1}{r+R-1} & \\
		& = q(1-\frac{1}{\sqrt{\dim}})\frac{R!}{n^R} \sum_{r=b+R-1}^{\dim-1}(q-1)^{r-R} \binom{\dim-1}{r} & \\
		& = q(1-\frac{1}{\sqrt{\dim}})\frac{R!}{n^R(q-1)^{R}} \sum_{r=b+R-1}^{\dim-1}(q-1)^{r} \binom{\dim-1}{r} & \\
		& =(1-\frac{1}{\sqrt{\dim}}) \frac{R!}{n^R(q-1)^{R}} \left( q^{\dim} - \sum_{r=1}^{b+R-2} \binom{\dim-1}{r} \right) & \\
		& \geq \frac{R!q^{n}}{n^R(q-1)^{R}}  \left( 1 -o(1)  \right). & 
	\end{align*}
\end{proof}

\bibliographystyle{IEEEtran}
\bibliography{refs}

\end{document}